\renewcommand{\le}{\leqslant}
\renewcommand{\leq}{\leqslant}
\renewcommand{\ge}{\geqslant}
\renewcommand{\geq}{\geqslant}
\title{Non-Malleable Coding Against Bit-wise and Split-State Tampering\thanks{
A preliminary version of this article appears under the same title in 
proceedings of Theory of Cryptography Conference (TCC 2014).
}}
\author{{\sc Mahdi Cheraghchi}\thanks{%
Email: $\langle$mahdi@csail.mit.edu$\rangle$.
Research supported in part by V. Guruswami's Packard Fellowship, MSR-CMU Center for Computational Thinking,
and the Swiss National Science Foundation research grant PA00P2-141980.} \\%
 CSAIL \\
  MIT\\
  Cambridge, MA 02139
 \and %
{\sc Venkatesan Guruswami}\thanks{
Email: $\langle$guruswami@cmu.edu$\rangle$.  Research supported in part by the National Science Foundation under Grant No. CCF-0963975.  Any opinions, findings, and conclusions or recommendations expressed in this material are those of the author(s) and do not necessarily reflect the views of the National Science Foundation.} %
 \\
  Computer Science Department \\
  Carnegie Mellon University \\
  Pittsburgh, PA 15213
}
\date{}
\newcommand{\cC}{\mathcal{C}}
\newcommand{\cD}{\mathcal{D}}
\newcommand{\cE}{\mathcal{E}}
\newcommand{\cX}{\mathcal{X}}
\newcommand{\cY}{\mathcal{Y}}
\newcommand{\cS}{\mathcal{S}}
\newcommand{\cN}{\mathcal{N}}
\newcommand{\cU}{\mathcal{U}}
\newcommand{\U}{\mathcal{U}}
\newcommand{\F}{\mathds{F}}
\newcommand{\cF}{\mathcal{F}}
\newcommand{\E}{\mathds{E}}
\newcommand{\supp}{\mathsf{supp}}
\newcommand{\nm}{\mathsf{NMExt}}
\newcommand{\eps}{\epsilon}
\newtheorem{thm}{Theorem}[section]
\newtheorem{prop}[thm]{Proposition}
\newtheorem{claim}[thm]{Claim}
\newtheorem{lem}[thm]{Lemma}
\newtheorem{coro}[thm]{Corollary}
\theoremstyle{definition}
\newtheorem{defn}[thm]{Definition}
\newtheorem{remark}[thm]{Remark}
\newtheorem*{caveat*}{Caveat}
\newcommand{\dist}{\mathsf{dist}}
\newcommand{\zo}{\{0,1\}}
\newcommand{\poly}{\mathsf{poly}}
\newcommand{\enc}{{\mathsf{Enc}}}
\newcommand{\dec}{{\mathsf{Dec}}}
\newcommand{\same}{{\underline{\mathsf{same}}}}
\newcommand{\Copy}{\mathsf{copy}}
\newcommand{\distr}{\mathscr{D}}
\newcommand{\Perm}{\mathsf{Perm}}
\newcommand{\mchOK}[1]{}
\newcommand{\vnoteOK}[1]{}
\newcommand{\calF}{\mathcal{F}}
\begin{document}

\maketitle
\thispagestyle{empty}

\begin{abstract}
Non-malleable coding, introduced by Dziembowski, Pietrzak and Wichs (ICS~2010), 
aims for protecting the integrity of information against tampering attacks in
situations where error-detection is impossible. Intuitively, information 
encoded by a non-malleable code either decodes to the original message or, 
in presence of any tampering, to an unrelated message. Non-malleable coding
is possible against any class of adversaries of bounded size. In particular,
Dziembowski et al.~show that such codes exist and may achieve positive rates 
for any class of tampering functions of size at most $2^{2^{\alpha n}}$, 
for any constant $\alpha \in [0, 1)$. However, this result is existential
and has thus attracted a great deal of subsequent research on explicit constructions of non-malleable codes against natural classes of adversaries.

\smallskip
In this work, we consider constructions of coding schemes against
two well-studied classes of tampering functions; namely, bit-wise tampering
functions (where the adversary tampers each bit of the encoding independently)
and the much more general class of split-state adversaries (where two independent adversaries arbitrarily
tamper each half of the encoded sequence). We obtain the following results for
these models.
\begin{enumerate}
\item For bit-tampering adversaries, we obtain explicit and efficiently
encodable and decodable non-malleable codes of length $n$ achieving rate $1-o(1)$ 
and error (also known as ``exact security'') 
$\exp(-\tilde{\Omega}(n^{1/7}))$. Alternatively, it is possible to
improve the error to $\exp(-\tilde{\Omega}(n))$ at the
cost of making the construction Monte Carlo with success
probability $1-\exp(-\Omega(n))$ (while still allowing a
compact description of the code). Previously, the best known construction
of bit-tampering coding schemes was due to Dziembowski et al.~(ICS~2010),
which is a Monte Carlo construction achieving rate close to $.1887$.

\item We initiate the study of \emph{seedless non-malleable extractors}
as a natural variation of the notion of non-malleable extractors introduced
by Dodis and Wichs (STOC~2009). We show that construction of non-malleable
codes for the split-state model reduces to construction of non-malleable
two-source extractors. We prove a general result on existence of
seedless non-malleable extractors, which implies that codes obtained
from our reduction can achieve rates arbitrarily close to $1/5$ and
exponentially small error.
In a separate recent work, the authors show that
the optimal rate in this model is $1/2$.
Currently, the best known explicit construction of split-state coding
schemes is due to Aggarwal, Dodis and Lovett (ECCC~TR13-081)
which only achieves vanishing (polynomially small) rate.
\end{enumerate}
\end{abstract}
\newpage
\tableofcontents
\thispagestyle{empty}
\newpage

\section{Introduction}
\label{sec:intro}

Non-malleable codes were introduced by Dziembowski, Pietrzak, and
Wichs~\cite{ref:nmc} as a relaxation of the classical notions of
error-detection and error-correction.  Informally, a code is
non-malleable if the decoding a corrupted codeword either recovers the
original message, or a completely unrelated message. Non-malleable
coding is a natural concept that addresses the basic question of
storing messages securely on devices that may be subject to tampering,
and they provide an elegant solution to the problem of protecting the
integrity of data and the functionalities implemented on them against
``tampering attacks"~\cite{ref:nmc}. This is part of a general recent
trend in theoretical cryptography to design cryptographic schemes that
guarantee security even if implemented on
devices that may be subject to physical tampering.  The notion of
non-malleable coding is inspired by the influential theme of
non-malleable encryption in cryptography which guarantees the
intractability of tampering the ciphertext of a message into the
ciphertext encoding a related message.

The definition of non-malleable codes captures the requirement that if
some adversary (with full knowledge of the code) tampers the codeword
$\enc(s)$ encoding a message $s$, corrupting it to $f(\enc(s))$, he
cannot control the relationship between $s$ and the message the
corrupted codeword $f(\enc(s))$ encodes. For this definition to be
feasible, we have to restrict the allowed tampering functions $f$
(otherwise, the tampering function can decode the codeword to compute
the original message $s$, flip the last bit of $s$ to obtain a related
message $\tilde{s}$, and then re-encode $\tilde{s}$), and in most
interesting cases also allow the encoding to be randomized.  Formally,
a (binary) non-malleable code against a family of tampering functions
$\calF$ each mapping $\{0,1\}^n$ to $\{0,1\}^n$, consists of a
randomized encoding function $\enc : \{0,1\}^k \to \{0,1\}^n$ and a
deterministic decoding function $\dec : \{0,1\}^n \to \{0,1\}^k \cup
\{\perp\}$ (where $\perp$ denotes error-detection) which satisfy
$\dec(\enc(s))=s$ always, and the following non-malleability property
with error $\eps$: For every message $s \in \{0,1\}^k$ and every
function $f \in \calF$, the distribution of $\dec(f(\enc(s))$ is
$\eps$-close to a distribution $\cD_f$ that depends only on $f$ and is
independent of $s$ (ignoring the issue that $f$ may have too many fixed points). 


If some code enables error-detection against some family $\cF$, for
example if $\cF$ is the family of functions that flips between $1$ and
$t$ bits and the code has minimum distance more than $t$, then the
code is also non-malleable (by taking $\cD_f$ to be supported entirely
on $\perp$ for all $f$). Error-detection is also possible against the
family of ``additive errors," namely $\calF_{\mathsf{add}} = \{
f_\Delta \mid \Delta \in \{0,1\}^n \}$ where $f_\Delta(x) := x
+ \Delta$ (the addition being bit-wise XOR). Cramer et al. \cite{ref:CDFPW08} constructed ``Algebraic
Manipulation Detection" (AMD) codes of rate approaching $1$ such that
offset by an arbitrary $\Delta \neq 0$ will be detected with high
probability, thus giving a construction of non-malleable codes against
$\calF_{\mathsf{add}}$.

The notion of non-malleable coding becomes more interesting for
families against which error-detection is not possible.  A simple
example of such a class consists of all constant functions $f_{c}(x)
:= c$ for $c \in \{0,1\}^n$. Since the adversary can map all inputs to
a valid codeword $c^\ast$, one cannot in general detect tampering in
this situation. However, non-malleability is trivial to achieve in
this case as the output distribution of a constant function is
trivially independent of the message (so the rate $1$ code with
identity encoding function is itself non-malleable).

The original work~\cite{ref:nmc} showed that non-malleable codes of
positive rate exist against {\em every} not-too-large family $\cF$ of
tampering functions, specifically with $|\cF| \le 2^{2^{\alpha n}}$
for some constant $\alpha < 1$. In a companion paper~\cite{ref:CG1},
we proved that in fact one can achieve a rate approaching $1-\alpha$
against such families, and this is best possible in that there are
families of size $\approx 2^{2^{\alpha n}}$ for which non-malleable
coding is not possible with rate exceeding $1-\alpha$. (The latter is
true both for random families as well as natural families such as
functions that only tamper the first $\alpha n$ bits of the codeword.)

\subsection{Our results}
This work is focused on two natural families of tampering functions that have been studied in the literature. 
\vspace{-1ex}
\subsubsection{Bit-tampering functions}
The first class consists of {\em bit-tampering functions} $f$ in which
the different bits of the codewords are tampered independently
(i.e., each bit is either flipped, set to $0/1$, or left unchanged,
independent of other bits); formally $f(x) =
(f_1(x_1),f_2(x_2),\dots,f_n(x_n))$, where $f_1, \ldots, f_n\colon \zo
\to \zo$. As this family is ``small'' (of size $4^n$), by the above
general results, it admits non-malleable codes with positive rate, in
fact rate approaching $1$ by our recent result~\cite{ref:CG1}.

Dziembowski et al.~\cite{ref:nmc} gave a Monte Carlo construction of a
non-malleable code against this family; i.e., they gave an efficient
randomized algorithm to produce the code along with efficient encoding
and decoding functions such that w.h.p the encoder/decoder pair
ensures non-malleability against all bit-tampering functions. The rate
of their construction is, however, close to $.1887$ and thus falls short
of the ``capacity" (best possible rate) for this family of tampering
functions, which we now know equals $1$.

Our main result in this work is the following:
\begin{thm}
\label{thm:main-intro}
For all integers $n \ge 1$,
there is an explicit (deterministic) construction, with efficient encoding/decoding procedures, of a non-malleable code against bit-tampering functions that achieves rate $1-o(1)$ and error at most $\exp(-n^{\Omega(1)})$. 

If we seek error that is $\exp(-\tilde{\Omega}(n))$, we can guarantee that with an efficient Monte Carlo construction of the code that succeeds with probability $1-\exp(-\Omega(n))$.
\end{thm}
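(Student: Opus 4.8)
The plan is to reduce non-malleable coding against bit-tampering to a few simpler primitives that can be composed in layers, so that each layer is responsible for defeating one ``type'' of bit-tampering behavior and, crucially, leaves the structure needed by the next layer intact. The starting observation is that a bit-tampering function $f=(f_1,\dots,f_n)$ partitions the $n$ coordinates into four classes according to the local behavior of $f_i$: coordinates that are \emph{fixed} to a constant, coordinates that are \emph{flipped}, coordinates that are \emph{kept} (identity), and (if we allow them separately) the trivial ones. On the fixed coordinates the adversary learns nothing and contributes nothing message-dependent, so the real danger is the interplay between the ``keep'' and ``flip'' sets, which together with the constants can be described as: restrict the codeword to a subset $S$ of the coordinates, XOR with a fixed offset $\Delta$, and fill the complement with constants. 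The goal is an encoding whose decoder is insensitive to (i) which complement was overwritten by constants, (ii) the additive offset $\Delta$, and (iii) the projection onto a subset $S$ — up to the point where, if $S$ is large, we recover the message exactly, and if $S$ is small, we output something independent of $s$.

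I would build this in three composed stages. \emph{Stage 1: kill additive offsets.} Wrap the message in an AMD code (à la Cramer et al., cited above as \cite{ref:CDFPW08}), which gives rate $1-o(1)$ and guarantees that any nonzero $\Delta$ is detected except with exponentially small probability; this reduces the adversary's effective power to ``restrict to $S$ and overwrite the complement with constants,'' because any genuine bit-flip now looks like detected tampering. \emph{Stage 2: handle large vs.\ small surviving sets via a high-rate code with a strong ``local-to-global'' recovery/independence dichotomy.} Here I expect to use an \emph{independent-source / seeded extractor} or a \emph{leakage-resilient} style object: split the codeword into a short ``seed'' block and a long ``payload'' block, store the payload as $\enc_{\mathrm{inner}}(s)$ masked by an extractor applied to a high-entropy source, so that if the adversary freezes too many coordinates the extractor output is (close to) uniform and independent of $s$, whereas if enough coordinates survive the source still has full entropy and decoding succeeds. \emph{Stage 3: amplify the error and make the reduction black-box over all $\binom{n}{|S|}$ choices of $S$} by a union bound; since the number of bit-tampering functions is only $4^n$, an error of $\exp(-\tilde\Omega(n))$ per fixed $f$ suffices, and an error of $\exp(-n^{\Omega(1)})$ suffices with polynomial slack. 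The deterministic-vs.-Monte-Carlo split in the theorem statement comes precisely from whether the inner object in Stage 2 is an explicit construction (giving the weaker $\exp(-n^{\Omega(1)})$ error, presumably because the best explicit extractors/samplers we can plug in lose polynomially in the exponent) or a random object that exists with probability $1-\exp(-\Omega(n))$ and is succinctly describable (giving the clean $\exp(-\tilde\Omega(n))$).

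The main obstacle, I expect, is the \emph{composition argument} in Stage 2: one must verify that the adversary's action on the composed codeword — restrict to $S$, overwrite the rest with constants — still translates into an ``admissible'' action on \emph{each} of the inner layers simultaneously, i.e.\ that freezing a coordinate of the outer code corresponds to freezing (or leaking, or losing) coordinates of the inner code and the seed in a way the inner non-malleability/extraction guarantees can absorb. In particular the seed block is small, so the adversary can afford to brute-force over all its fixings; the argument must ensure that for \emph{every} fixing of the (surviving part of the) seed, either the payload source retains enough min-entropy that the extractor output is statistically close to uniform given the seed, or the decoder recovers $s$ — and that these two cases are cleanly separated by a threshold on $|S|$ with only an $\exp(-\tilde\Omega(n))$ probability of landing in the ambiguous middle. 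Making the entropy bookkeeping survive both the AMD layer (which is randomized and whose randomness is also being tampered) and the extractor layer, while keeping the overall rate at $1-o(1)$, is where the real care is needed; the rest (AMD properties, a final union bound over $4^n$ functions, efficiency of encoding/decoding) is routine given the stated tools.
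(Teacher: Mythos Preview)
Your proposal has a genuine architectural gap: the AMD-then-extractor pipeline you sketch addresses \emph{secrecy} of $s$ under partial leakage, not \emph{non-malleability} of the decoded output, and these are not the same requirement. Concretely, Stage~1 does not reduce the adversary to ``restrict to $S$ and overwrite the complement with constants'': AMD codes detect $x\mapsto x+\Delta$ on the \emph{whole} codeword, but once the adversary simultaneously freezes a subset of coordinates to constants, the tampered word is no longer an additive shift of a valid AMD codeword, and the AMD detection guarantee says nothing. So the flips are not ``killed'' and you are still facing the full mixed freeze/flip/keep adversary in Stage~2. In Stage~2, masking the payload with $\mathrm{Ext}(r,\mathrm{seed})$ may make the \emph{codeword} look independent of $s$ when many coordinates are frozen, but what you must control is the distribution of $\dec(f(\enc(s)))$; the decoder re-evaluates the extractor on the (tampered) $r$ and unmasks, and you have not argued why the result is independent of $s$ rather than, say, $s\oplus(\text{something depending on }s)$. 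The union bound in Stage~3 is fine, but it is sitting on top of an inner object whose non-malleability has not been established.

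The paper's route is structurally different. It is a concatenated code: an outer $(d,t)$-LECSS $\cC_2$ of rate $1-o(1)$ (large distance \emph{and} $t$-wise independence of encodings), inner constant-size blocks each encoded by a brute-force non-malleable code $\cC_0$ (Lemma~\ref{lem:properties}), the whole thing permuted by an $\ell$-wise $\delta$-dependent permutation $\Pi$, and the seed of $\Pi$ itself encoded by an auxiliary non-malleable code $\cC_1$ of possibly tiny rate. The case analysis is: if too many bits are frozen, the $t$-wise independence of $\cC_2$ makes the few non-frozen inner blocks uniform and hence the decoder output message-independent; otherwise the pseudorandom permutation scatters the adversary's non-identity coordinates across many inner blocks, each of which detects $\perp$ with constant probability (the ``error detection'' property~\ref{prop:bottom} of $\cC_0$), so the decoder outputs $\perp$ except with probability $\exp(-\Omega(\ell/B^2))$. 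Non-malleability of $\cC_1$ is what lets you assume $\bar\Pi$ is either equal to $\Pi$ or independent of it (the convex combination in Lemma~\ref{lem:nmc:joint}). The explicit/Monte-Carlo dichotomy in the theorem is \emph{not} about explicit versus random extractors; it comes from instantiating $\cC_1$ with either the explicit split-state code of \cite{ref:ADL} (rate $\tilde\Omega(n^{-6/7})$, giving error $\exp(-\tilde\Omega(n^{1/7}))$) or the Monte-Carlo code of \cite{ref:nmc} (constant rate, giving error $\exp(-\tilde\Omega(n))$).
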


The basic idea in the above construction (described in detail in
Section~\ref{sec:explicit:construction}) is to use a concatenation
scheme with an outer code of rate close to $1$ that has large relative
distance and large dual relative distance, and as (constant-sized)
inner codes the non-malleable codes guaranteed by the existential
result (which may be deterministically found by brute-force if
desired).  This is inspired by the classical constructions of
concatenated codes \cite{ref:forney, ref:justesen}. The outer code
provides resilience against tampering functions that globally fix too
many bits or alter too few.  For other tampering functions, in order
to prevent the tampering function from locally freezing many entire
inner blocks (to possibly wrong inner codewords), the symbols of the
concatenated codeword are permuted by a {\em pseudorandom
permutation\footnote{Throughout the paper,
by pseudorandom permutation we mean $t$-wise independent
permutation (as in Definition~\ref{def:limited:perm}) for an appropriate choice of $t$. This should not be confused
with cryptographic pseudorandom permutations, which are not used in this work.
}.}

The seed for the permutation is itself included as the
initial portion of the final codeword, after encoding by a
non-malleable code (of possibly low rate). This protects the seed and
ensures that any tampering of the seed portion results in the decoded
permutation being essentially independent of the actual permutation,
which then results in many inner blocks being error-detected (decoded
to $\perp$) with noticeable probability each. The final decoder
outputs $\perp$ if any inner block is decoded to $\perp$, an event
which happens with essentially exponentially small probability in $n$
with a careful choice of the parameters. The above scheme uses
non-malleable codes in two places to construct the final non-malleable
code, but there is no circularity because the codes for the inner
blocks are of constant size, and the code protecting the seed can have
very low rate (even sub-constant) as the seed can be made much smaller
than the message length.

The structure of our construction bears some high level similarity to
the optimal rate code construction for correcting a bounded number of
additive errors in \cite{ref:GS10}. The exact details though are quite
different; in particular, the crux in the analysis of \cite{ref:GS10}
was ensuring that the decoder can recover the seed correctly, and
towards this end the seed's encoding was distributed at random
locations of the final codeword. Recovering the seed is both
impossible and not needed in our context here.
\subsubsection{Split-state adversaries}
Bit-tampering functions act on different bits independently. A much
more general class of tampering functions considered in the
literature~\cite{ref:nmc,ref:DKO,ref:ADL} is the so-called {\em
split-state model}. Here the function $f : \{0,1\}^n \to \{0,1\}^n$
must act on each half of the codeword independently (assuming $n$
is even), but can act
arbitrarily within each half. Formally, $f(x) = (f_1(x_1),f_2(x_2))$
for some functions $f_1,f_2 : \{0,1\}^{n/2} \to \{0,1\}^{n/2}$ where
$x_1,x_2$ consist of the first $n/2$ and last $n/2$ bits of $x$. This
represents a fairly general and useful class of adversaries which are
relevant for example when the codeword is stored on two physically
separate devices, and while each device may be tampered arbitrarily,
the attacker of each device does not have access to contents stored on
the other device.

The capacity of non-malleable coding in the split-state model equals
$1/2$, as established in our recent work~\cite{ref:CG1}. A natural
question therefore is to construct {\em efficient} non-malleable codes
of rate approaching $1/2$ in the split-state model (the results
in \cite{ref:nmc} and \cite{ref:CG1} are existential, and the codes do
not admit polynomial size representation or polynomial time
encoding/decoding). This remains a challenging open question, and in
fact constructing a code of positive rate itself seems rather
difficult. A code that encodes one-bit messages is already
non-trivial, and such a code was constructed in \cite{ref:DKO} by
making a connection to two-source extractors with sufficiently strong
parameters and then instantiating the extractor with a construction
based on the inner product function over a finite field. We stress
that this connection to two-source extractor only applies to encoding
one-bit messages, and does not appear to generalize to longer
messages.

Recently, Aggarwal, Dodis, and Lovett~\cite{ref:ADL} solved the
central open problem left in \cite{ref:DKO} --- they construct a
non-malleable code in the split-state model that works for arbitrary
message length, by bringing to bear elegant techniques from additive
combinatorics on the problem. The rate of their code is polynomially
small: $k$-bit messages are encoded into codewords with $n \approx
k^7$ bits.

In the second part of this paper (Section~\ref{sec:nmext}), we study the problem of
non-malleable coding in the split-state model. We do not offer any
explicit constructions, and the polynomially small rate achieved
in \cite{ref:ADL} remains the best known. Our contribution here is
more conceptual. We define the notion of non-malleable two-source
extractors, generalizing the influential concept of non-malleable
extractors introduced by Dodis and Wichs~\cite{ref:DW09}. A
non-malleable extractor is a regular seeded extractor $\mathsf{Ext}$
whose output $\mathsf{Ext}(X,S)$ on a weak-random source $X$ and
uniform random seed $S$ remains uniform even if one knows the value
$\mathsf{Ext}(X,f(S))$ for a related seed $f(S)$ where $f$ is a
tampering function with no fixed points. In a two-source non-malleable
extractor we allow both sources to be weak and independently tampered, and we
further extend the definition to allow the functions to have fixed
points in view of our application to non-malleable codes.
We prove, however, that for construction of two-source non-malleable extractors,
it suffices to only consider tampering functions that have no fixed points,
at cost of a minor loss in the parameters.

We show that given a two-source non-malleable extractor $\nm$ with
exponentially small error in the output length, one can build a
non-malleable code in the split-state model by setting the extractor
function $\nm$ to be the decoding function (the encoding of $s$
then picks a pre-image in $\nm^{-1}(s)$).  

This identifies a possibly natural avenue to construct improved
non-malleable codes against split-state adversaries by constructing
non-malleable two-source extractors, which seems like an interesting
goal in itself.  Towards confirming that this approach has the
potential to lead to good non-malleable codes, we prove a fairly
general existence theorem for seedless non-malleable extractors, by essentially
observing that the ideas from the proof of existence of seeded
non-malleable extractors in \cite{ref:DW09} can be applied in a much
more general setting. 
Instantiating this result with split-state
tampering functions, we show the existence of non-malleable two-source
extractors with parameters that are strong enough to imply
non-malleable codes of rate arbitrarily close to $1/5$ in the split-state
model. 

Explicit construction of (ordinary) two-source extractors and closely-related
objects is a well-studied problem in the literature and an abundance of explicit constructions for
this problem is known\footnote{Several of these constructions are structured enough
to easily allow for efficient sampling of a uniform pre-image from $\mathsf{Ext}^{-1}(s)$.} 
(see, e.g., \cite{ref:BRSW06, ref:Bou05, ref:CG, 
ref:KLR09, ref:Rao08, ref:Raz05}). 
The problem becomes increasingly challenging, however, (and remains open to date)
when the entropy rate of the two sources may be noticeably below $1/2$.
Fortunately, we show that for construction of constant-rate non-malleable codes
in the split-state model, it suffices to have two-source
non-malleable extractors for source entropy rate $.99$ and with
some output length $\Omega(n)$ (against tampering functions with no 
fixed points). Thus the infamous ``$1/2$ entropy rate barrier'' on
two-source extractors does not concern our particular application.

Furthermore, we note that for seeded non-malleable
extractors (which is a relatively recent notion) there are already a few exciting explicit
constructions~\cite{ref:DLWZ11,ref:GRS12,ref:Li12}\footnote{\cite{ref:Li12}
also establishes a connection between seeded non-malleable
extractors and ordinary two-source extractors.}.
The closest construction to our application is
\cite{ref:DLWZ11} which is in fact
a two-source non-malleable extractor when the adversary may tamper
with either of the two sources (but not simultaneously both).
Moreover, the coding scheme defined by this extractor (which is
the character-sum extractor of Chor and Goldreich~\cite{ref:CG}) naturally
allows for an efficient encoder and decoder.
Nevertheless, it appears challenging
to extend known constructions of seeded non-malleable extractors
to the case when both inputs can be tampered.
We leave explicit constructions of non-malleable two-source
extractors, even with sub-optimal parameters, as an interesting open
problem for future work. 


\section{Preliminaries}

\subsection{Notation}
We use $\U_n$ for the uniform distribution on $\zo^n$ and $U_n$
for the random variable sampled from $\U_n$ and independently 
of any existing randomness.
For a random variable $X$, we denote by $\distr(X)$ the probability distribution
that $X$ is sampled from.
Generally, we will use calligraphic symbols (such as $\cX$) for probability distributions
and the corresponding capital letters (such as $X$) for related random variables.
We use $X \sim \cX$ to denote that the random variable $X$ is drawn from 
the distribution $\cX$.
Two distributions $\cX$ and $\cY$ being $\eps$-close in statistical distance is denoted by
$\cX \approx_\eps \cY$. We will use $(\cX, \cY)$ for the product distribution
with the two coordinates independently sampled from $\cX$ and $\cY$.
All unsubscripted logarithms are taken to the base $2$.
Support of a discrete random variable $X$ is denoted by $\supp(X)$.
A distribution is said to be \emph{flat} if it is uniform on its support.
For a sequence $x = (x_1, \ldots, x_n)$ and set $S \subseteq [n]$, we use
$x|_S$ to denote the restriction of $x$ to the coordinate positions chosen by $S$.
We use $\tilde{O}(\cdot)$ and $\tilde{\Omega}(\cdot)$ to denote asymptotic
estimates that hide poly-logarithmic factors in the involved parameter.

\subsection{Definitions}
In this section, we review the formal definition of non-malleable codes as introduced
in \cite{ref:nmc}. First, we recall the notion of \emph{coding schemes}.

\begin{defn}[Coding schemes] \label{def:scheme}
A pair of functions $\enc\colon \zo^k \to \zo^n$ and $\dec\colon \zo^n \to \zo^k \cup \{\perp\}$
where $k \leq n$ 
is said to be a coding scheme with block length $n$ and message length $k$
if the following conditions hold.
\begin{enumerate}
\itemsep=0ex
\vspace{-1ex}
\item The encoder $\enc$ is a randomized function; i.e., at each call it receives a 
uniformly random sequence of coin flips that the output may depend on. This random input
is usually omitted from the notation and taken to be implicit. Thus for any 
$s \in \zo^k$, $\enc(s)$ is a random variable over $\zo^n$. The decoder $\dec$ is;
however, deterministic.

\item For every $s \in \zo^k$, we have $\dec(\enc(s)) = s$ with probability $1$.
\end{enumerate}

The \emph{rate} of the coding scheme is the ratio $k/n$.
A coding scheme is said to have relative distance $\delta$ (or minimum distance
$\delta n$), for some $\delta \in [0,1)$,
if for every $s \in \zo^k$ the following holds. Let $X := \enc(s)$. Then,
for any $\Delta \in \zo^n$ of Hamming weight at most $\delta n$,
$\dec(X + \Delta) = \perp$ with probability $1$. \qed
\end{defn}

\noindent Before defining non-malleable coding schemes, we find it convenient to define the following notation.

\begin{defn}
For a finite set $\Gamma$, the function $\Copy\colon (\Gamma \cup \{\same\}) \times \Gamma \to 
\Gamma$ is defined as follows:
\[
\Copy(x, y) := \begin{cases}
x & x \neq \same, \\
y & x = \same.
\end{cases} \qquad\qquad\qed
\]
\end{defn}

\noindent
The notion of non-malleable coding schemes from \cite{ref:nmc}
can now be rephrased as follows.

\begin{defn}[Non-malleability] \label{def:nmCode}
A coding scheme $(\enc, \dec)$ with message length $k$ and block length $n$
is said to be non-malleable with error $\eps$ (also called \emph{exact security})
with respect to a family
$\cF$ of tampering functions acting on $\zo^n$ (i.e., each $f \in \cF$ maps
$\zo^n$ to $\zo^n$) if for every $f \in \cF$ there is a distribution
$\cD_f$ over $\zo^k \cup \{\perp, \same\}$ such that the following holds for
all $s \in \zo^k$.
Define the random variable \[S := \dec(f(\enc(s))),\] and 
let $S'$ be independently sampled from $\cD_f$. Then,
\[
\distr(S) \approx_\eps \distr(\Copy(S', s)). 
\] \qed
\end{defn}

%

\begin{remark}[Efficiency of sampling $\cD_f$] \label{rem:efficiency}
The original definition of non-malleable codes in \cite{ref:nmc} also requires
the distribution $\cD_f$ to be efficiently samplable given oracle access
to the tampering function $f$. It should be noted; however, that
for any non-malleable coding scheme equipped with an efficient encoder
and decoder, it can be shown that the following is a valid and efficiently samplable 
choice for the distribution $\cD_f$
(possibly incurring a constant factor increase in the error parameter):  
\begin{enumerate}
\itemsep=0ex
\vspace{-1ex}
\item Let $S \sim \U_k$, and $X := f(\enc(S))$.
\item If $\dec(X) = S$, output $\same$. Otherwise, output $\dec(X)$.
\end{enumerate}
\end{remark}

\begin{defn}[Sub-cube]
A sub-cube over $\zo^n$ is a set
$S \subseteq \zo^n$ such that for some $T = \{ t_1, \ldots, t_\ell \} \subseteq [n]$
and $w = (w_1, \ldots, w_\ell) \in \zo^\ell$,
\[
S = \{ (x_1, \ldots, x_n) \in \zo^n\colon x_{t_1} = w_1, \ldots, x_{t_\ell} = w_\ell\};
\]
the $\ell$ coordinates in $T$ are said to be {\em frozen} and the remaining $n-\ell$ are said to be random.
\end{defn}

Throughout the paper, we use the following notions of limited independence.

\begin{defn}[Limited independence of bit strings] \label{def:limited:string}
A distribution $\cD$ over $\zo^n$ is said to be \emph{$\ell$-wise $\delta$-dependent}
for an integer $\ell > 0$ and parameter $\delta \in [0, 1)$ if the marginal distribution
of $\cD$ restricted to any subset $T \subseteq [n]$ of the coordinate positions
where $|T| \leq \ell$ is $\delta$-close to $\U_{|T|}$. When $\delta = 0$, the distribution
is $\ell$-wise independent.
\end{defn}

\begin{defn}[Limited independence of permutations] \label{def:limited:perm}
The distribution of a random permutation $\Pi\colon [n] \to [n]$ is 
said to be \emph{$\ell$-wise $\delta$-dependent}
for an integer $\ell > 0$ and parameter $\delta \in [0, 1)$ if for every $T \subseteq [n]$
such that $|T| \leq \ell$, the marginal distribution of the sequence $(\Pi(t)\colon t \in T)$
is $\delta$-close to that of $(\bar{\Pi}(t)\colon t \in T)$, where $\bar{\Pi}\colon [n] \to [n]$
is a uniformly random permutation.
\end{defn}

We will use the following notion of \emph{Linear Error-Correcting Secret Sharing Schemes}
(LECSS) as formalized by Dziembowski et al.~\cite{ref:nmc} for their construction of non-malleable coding
schemes against bit-tampering adversaries.

\begin{defn}[LECSS] \cite{ref:nmc} \label{def:lecss}
A coding scheme $(\enc, \dec)$ of block length $n$ and message length $k$ is a
$(d, t)$-\emph{Linear Error-Correcting Secret Sharing Scheme} (LECSS),
for integer parameters $d, t \in [n]$ if 
\begin{enumerate}
\item The minimum distance of the coding
scheme is at least $d$, 
\item For every message $s \in \zo^k$, the distribution of $\enc(s) \in \zo^n$ is $t$-wise independent
(as in Definition~\ref{def:limited:string}).

\item For every $w, w' \in \zo^n$ such that $\dec(w) \neq \perp$ and\footnote{
Although we use LECSS codes in our explicit construction, contrary to 
\cite{ref:nmc} we do not directly use the linearity of the code for our proof.}
$\dec(w') \neq \perp$, we have $\dec(w+w') = \dec(w) + \dec(w')$,
where we use bit-wise addition over $\F_2$.
\end{enumerate}
\end{defn}

\section{Existence of optimal bit-tampering coding schemes}
\label{sec:random}

In this section, we recall the probabilistic construction of non-malleable codes
introduced in \cite{ref:CG1}. This construction, depicted as Construction~\ref{constr:prob},
is defined with respect to an integer parameter
$t > 0$ and a \emph{distance parameter} $\delta \in [0, 1)$.

\begin{constr} 
  \caption{Probabilistic construction of non-malleable codes in \cite{ref:CG1}.}

  \begin{itemize}
  \item {\it Given: } Integer parameters $0 < k \leq n$ and integer $t > 0$
  such that $t 2^k \leq 2^n$, and a distance parameter $\delta \geq 0$.

  \item {\it Output: } A pair of functions $\enc\colon \zo^k \to \zo^n$
  and $\dec\colon \zo^n \to \zo^k$, where $\enc$ may also use a uniformly random
  seed which is hidden from that notation, but $\dec$ is deterministic.

  \item {\it Construction: } 
   \begin{enumerate}
  \item Let $\cN := \zo^n$. 
  \item For each $s \in \zo^k$, in an arbitrary order, 
  \begin{itemize}
  \item Let $E(s) := \emptyset$.
  \item For $i \in \{1, \ldots, t\}$:
  \begin{enumerate}
  \item Pick a uniformly random vector $w \in \cN$.
  \item Add $w$ to $E(s)$.
  \item Let $\Gamma(w)$ be the Hamming ball of radius $\delta n$ centered at $w$.
  Remove $\Gamma(w)$ from $\cN$ (note that when $\delta = 0$, we have $\Gamma(w) = \{w\}$).
  \end{enumerate}
  \end{itemize}
  \item Given $s \in \zo^k$, $\enc(s)$ outputs an element of $E(s)$ uniformly
  at random.
  
  \item Given $w \in \zo^n$, $\dec(s)$ outputs the unique $s$ such that
  $w \in E(s)$, or $\perp$ if no such $s$ exists.
\end{enumerate}     
  \end{itemize}
  \label{constr:prob}
\end{constr}

The following, proved in \cite{ref:CG1}, shows non-malleability of the construction.

\begin{thm}[\cite{ref:CG1}] \label{thm:upperBound}
Let $\cF\colon \zo^n \to \zo^n$ be any family of tampering functions.
For any $\eps, \eta > 0$, with probability at least $1-\eta$,
the coding scheme $(\enc, \dec)$ of Construction~\ref{constr:prob}
is a 
non-malleable code with respect to $\cF$ and with error $\eps$
and relative distance $\delta$,
provided that both of the following conditions are satisfied.
\begin{enumerate}
\itemsep=0ex
\item $t \geq t_0$, for some 
\begin{equation} \label{eqn:thm:upper:t}
t_0 = O\left( \frac{1}{\eps^6} \Big(\log\frac{|\cF| 2^n}{\eta} \Big) \right).
\end{equation}
\item $k \leq k_0$, for some 
\begin{equation} \label{eqn:thm:upper:k0}
k_0 \geq n(1-h(\delta))-\log t-3\log(1/\eps)-O(1),
\end{equation}
where $h(\cdot)$ denotes the binary entropy function.
\end{enumerate}
\end{thm}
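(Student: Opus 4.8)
The plan is to fix a tampering function $f\in\cF$, to understand the random variable $\dec(f(\enc(s)))$ — over the internal randomness of Construction~\ref{constr:prob} (the choices of the codeword sets $E(\cdot)$) and the encoder's coins — for \emph{every} message $s$ at once, to exhibit one distribution $\cD_f$ that depends on $f$ and on the (now fixed) code but not on $s$, and finally to take a union bound over all $f\in\cF$ and all $s\in\zo^k$. Since there are $|\cF|\cdot 2^k$ relevant pairs, each ``bad code'' event must be driven down to probability $\eta/(|\cF|2^k)$; this requirement, together with the target statistical error $\eps$, is what forces the tail estimates below and hence the shape~\eqref{eqn:thm:upper:t} of $t_0$. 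Two preliminary reductions are used throughout. First, by~\eqref{eqn:thm:upper:k0} the union of all $\delta n$-radius Hamming balls removed during the construction has density $O(\eps^3)$ in $\zo^n$, so even conditioned on any prefix of already-placed codewords the next one is uniform over a subset of $\zo^n$ of density $1-O(\eps^3)$; we may therefore treat the $t$ codewords of a given message as ``almost uniform and almost independent'' in the concentration bounds. Second, and again by~\eqref{eqn:thm:upper:k0}, $t2^k/2^n=O(\eps^3)$, i.e.\ a uniformly random point of $\zo^n$ is a codeword only with probability $O(\eps^3)$.

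For the fixed code output by the construction, put $\mathsf{Fix}(f):=\{w\in\zo^n: f(w)=w\}$ and define $\cD_f(\same):=|\mathsf{Fix}(f)|/2^n$, $\cD_f(s'):=|f^{-1}(E(s'))|/2^n$ for each $s'\in\zo^k$, and let $\cD_f(\perp)$ take up the remaining mass (one checks $\cD_f(\same)+\sum_{s'}\cD_f(s')\le 1+t2^k/2^n$, since $\mathsf{Fix}(f)$ and $f^{-1}(\bigcup_{s'}E(s'))$ overlap only in fixed points that are themselves codewords, of which there are at most $t2^k$; so after a renormalization costing $O(\eps^3)$ this is a genuine distribution). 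By design $\cD_f$ depends on $f$ and the code but not on the message being encoded. Since $\Copy(\cdot,s)$ moves the $\same$-atom onto $s$, the distribution $\distr(\Copy(\cD_f,s))$ puts mass $|\mathsf{Fix}(f)|/2^n+|f^{-1}(E(s))|/2^n$ on $s$, mass $|f^{-1}(E(s'))|/2^n$ on each $s'\ne s$, and the rest on $\perp$. It therefore suffices to show that, with probability $\ge 1-\eta/(|\cF|2^k)$ over the code, for every $s$: \emph{(i)} $\Pr[\dec(f(\enc(s)))=s]=|\mathsf{Fix}(f)|/2^n+|f^{-1}(E(s))|/2^n\pm O(\eps)$, and \emph{(ii)} $\sum_{s'\ne s}\bigl|\Pr[\dec(f(\enc(s)))=s']-|f^{-1}(E(s'))|/2^n\bigr|=O(\eps)$; the $\perp$-coordinate then agrees up to $O(\eps)$ by complementation, and rescaling $\eps$ finishes.

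To prove (i), write $E(s)=\{w_1,\dots,w_t\}$ and note $\Pr[\dec(f(\enc(s)))=s]=\frac1t|\{i:w_i\in f^{-1}(E(s))\}|$. Splitting $f^{-1}(E(s))$ by membership in $\mathsf{Fix}(f)$, the first piece is $\frac1t|\{i:w_i\in\mathsf{Fix}(f)\}|$, which concentrates around $|\mathsf{Fix}(f)|/2^n$ by an Azuma/Chernoff estimate for the Doob martingale of the indicators $\mathds{1}[w_i\in\mathsf{Fix}(f)]$ (bounded increments; conditional means within $O(\eps^3)$ of $|\mathsf{Fix}(f)|/2^n$ by the ball-removal reduction, and $\mathsf{Fix}(f)$ is independent of the code), already with the required failure probability once $t\gtrsim\eps^{-2}\log(|\cF|2^k/\eta)$. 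The second piece, $\frac1t|\{i:f(w_i)\in E(s)\setminus\{w_i\}\}|$, is the delicate quantity: its expectation is $|f^{-1}(E(s))\setminus\mathsf{Fix}(f)|/2^n$, but it need not concentrate when the image distribution $f(U_n)$ is very peaked (the extreme being a constant $f$ mapping onto a codeword, where this term is bimodal and far from negligible). One handles it by a dyadic decomposition of the targets $z\in E(s)$ by preimage size: at each scale there are at most $O(1/\theta)$ targets with $|f^{-1}(z)|\in[\theta 2^n,2\theta 2^n)$, and for each such $z$ the count $\frac1t|\{i:w_i\in f^{-1}(z)\}|$ concentrates around $|f^{-1}(z)|/2^n$ with \emph{relative} error $\eps/\log(1/\eps)$ via a multiplicative Chernoff bound, so the absolute errors sum to $O(\eps)$ across all $O(\log(1/\eps))$ scales; the ``light'' targets below the finest scale carry total mass $O(\eps^3)$ and are bounded crudely. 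Claim (ii) is proved the same way, now exploiting that $f^{-1}(E(s'))$ for $s'\ne s$ is — unlike $f^{-1}(E(s))$ — essentially independent of the block $\{w_i\}$ because $E(s')$ is a disjoint, separately sampled set of codewords: each ``heavy'' message $s'$ (with $|f^{-1}(E(s'))|/2^n$ above the finest scale) is controlled per-scale as above, while the remaining messages carry total mass $O(\eps^3)$ on both sides (using $t2^k/2^n=O(\eps^3)$, and a single aggregate Chernoff bound for $\frac1t|\{i:w_i\in f^{-1}(\bigcup_{\text{light }s'}E(s'))\}|$ on the empirical side), so the $L_1$-discrepancy over light messages is $O(\eps^3)$.

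The routine ingredients are thus the Azuma/Chernoff concentration of the fixed-point fraction and the final union bound; the crux — and the reason the bound on $t_0$ carries a $\eps^{-6}$ rather than $\eps^{-2}$, and the reason the bound on $k_0$ carries the term $3\log(1/\eps)$ — is the concentration of the ``collision'' and ``message-leak'' terms for peaked tampering functions, where the intuition ``a random point is rarely a codeword'' fails and one must instead argue scale by scale in the preimage size of $f$, tuning the finest scale so the light remainder has mass $O(\eps^3)$ while keeping the number of scales $O(\log(1/\eps))$ and keeping each per-scale multiplicative-Chernoff failure probability below $\eta/(|\cF|2^k)$ after the union over scales and heavy targets. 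Once the scale count, the threshold, and the interplay of $\delta$ (through ball removal), $t$, and $k$ are all accounted for, the resulting constraints are exactly~\eqref{eqn:thm:upper:t} and~\eqref{eqn:thm:upper:k0}.
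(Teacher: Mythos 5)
The statement is cited from the companion paper \cite{ref:CG1} and is not re-proved in this paper, so a line-by-line comparison is impossible here; the only window into the cited proof is Remark~\ref{rem:Df}, which exhibits the particular simulator $\cD_f$ used there. Your $\cD_f$ differs from that one in a substantive way: you set $\cD_f(s') = |f^{-1}(E(s'))|/2^n$ for every message $s'$, whereas the paper's choice (equations~\eqref{eqn:explicit:Df}--\eqref{eqn:explicit:H}) only credits a message $s' = \dec(z)$ for ``heavy'' targets $z\in H$ with $\Pr[f(U_n)=z] > 1/r$ (for $r = \Theta(\eps^2 t)$), and dumps all mass landing on light non-fixed-point targets onto $\perp$. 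This is not a cosmetic difference: the proof of Lemma~\ref{lem:properties} in Section~\ref{sec:random} uses the heavy-threshold structure of $\cD_f$ explicitly (Case~1 there is precisely the regime $H=\emptyset$, and Case~2 bounds $\ell'$ via $\log r$), so the companion proof must have established the theorem with the Remark~\ref{rem:Df} simulator, not yours.

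Beyond the choice of simulator, there is a concrete gap in the concentration step. You propose a dyadic decomposition by preimage size with per-target multiplicative Chernoff at relative error $\eps/\log(1/\eps)$, then union-bound over targets, scales, $f$, and $s$. But for a ``heavy'' target at the boundary, $p_z \approx 1/r = \Theta(1/(\eps^2 t))$, multiplicative Chernoff with $\alpha = \eps/\log(1/\eps)$ gives failure probability about
\[
\exp\!\Big(-\Theta\big(\alpha^2 p_z t\big)\Big) \;=\; \exp\!\Big(-\Theta\big(1/\log^2(1/\eps)\big)\Big),
\]
which is $\Theta(1)$ and cannot survive a union bound over $|\cF|\cdot 2^n$ events no matter how $t$ is tuned; your budget ``failure probability below $\eta/(|\cF|2^k)$ per scale and heavy target'' is simply not attainable per-target at that scale. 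Raising the finest scale so that per-target Chernoff becomes meaningful makes the ``light'' remainder large, and your assertion that ``light targets below the finest scale carry total mass $O(\eps^3)$'' is not justified for the population side (it holds only after projecting to codewords and conditioning on a concentration event over the code construction, which itself needs an argument). The tools in the paper's appendix (Proposition~\ref{prop:simpleAzuma} and Lemma~\ref{lem:distrLearning:dependent}, both imported from \cite{ref:CG1}) point to the intended route: an aggregate martingale/Azuma bound on the $\ell_1$ discrepancy between the empirical decode distribution and $\cD_f$, over a support of size $O(r) = O(\eps^2 t)$, rather than per-target multiplicative Chernoff at every dyadic scale. Your high-level framing --- fix $f$, define a code-dependent $\cD_f$, prove concentration, union bound over $(f,s)$; use the $O(\eps^3)$ codeword-density reduction from \eqref{eqn:thm:upper:k0}; split fixed points from genuine collisions --- is sound, but the core concentration argument as written would fail at the boundary scale.
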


\begin{remark} \label{rem:Df}
The proof of Theorem~\ref{thm:upperBound} explicitly defines the 
choice of $\cD_f$ of Definition~\ref{def:nmCode} to be the distribution
of the following random variable:
\begin{equation} \label{eqn:explicit:Df}
D := \begin{cases}
\same & \text{if $f(U_n) = U_n$}, \\
\dec(f(U_n)) & \text{if $f(U_n) \neq U_n$ and $f(U_n) \in H$}, \\
\perp & \text{otherwise,}
\end{cases}
\end{equation}
where $H \subseteq \zo^n$ is the set 
\begin{equation}
\label{eqn:explicit:H}
H := \{ x \in \zo^n\colon \Pr[f(U_n) = x] > 1/r \},
\end{equation}
for an appropriately chosen $r = \Theta(\eps^2 t)$.
\end{remark}

We now instantiate the above result to the specific case of bit-tampering adversaries,
and derive additional properties of the coding scheme of Construction~\ref{constr:prob}
that we will later use in our explicit construction.

\begin{lem}(Cube Property) \label{lem:cube}
Consider the coding scheme $(\enc, \dec)$ of Construction~\ref{constr:prob}
with parameters $t$ and $\delta$,
and assume that $t 2^{k-n(1-h(\delta))} \leq 1/8$, where $h(\cdot)$
is the binary entropy function.
Then, there is a $\delta_0 = O(\log n / n)$ such that if
$\delta \geq \delta_0$, the following holds
with probability at least $1-\exp(-n)$ over the randomness of the code
construction.
For any sub-cube $S \subseteq \zo^n$ of size at least $2$,
and $U_S \in \zo^n$ taken uniformly at random from $S$,
\[
\Pr_{U_S}[\dec(U_S) = \perp] \geq 1/2.
\]
\end{lem}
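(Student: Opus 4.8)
The plan is to use a union bound over all sub-cubes $S$ (there are at most $3^n$ of them), and for each fixed $S$ show that with probability $1-\exp(-2n)$ over the code construction, fewer than half the elements of $S$ are valid codewords. Since the decoder outputs $\perp$ precisely on non-codewords (here $\delta \ge \delta_0$ ensures the decoding balls $\Gamma(w)$ of radius $\delta n$ have a definite effect; in fact the cleanest route is to note that the \emph{number} of valid codewords is exactly $t2^k$, so the union of all $E(s)$ has size $t 2^k$, and by hypothesis $t 2^k \le 2^{n(1-h(\delta))}/8$). The heart of the matter is therefore: for a fixed sub-cube $S$, how many of the $t2^k$ codewords can land inside $S$?

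The key observation is that each codeword $w$ added to some $E(s)$ is, at the moment it is chosen, a uniformly random point of the current residual set $\cN$. Even conditioning on all previous choices, $\cN$ has size at least $2^n - (t2^k) V$, where $V = |\Gamma(w)| = \sum_{i \le \delta n} \binom{n}{i} \le 2^{h(\delta) n}$ is the volume of a Hamming ball of radius $\delta n$; by the hypothesis $t 2^k \le 2^{n(1-h(\delta))}/8$ this is at least $\tfrac{7}{8}\cdot 2^n$. So each codeword lands in $S$ with probability at most $|S| / (\tfrac78 2^n) \le (8/7)|S|/2^n$. The counts are not independent, but they are stochastically dominated in the right direction: the number of codewords in $S$ is at most a sum of indicators each of which, conditioned on the past, is Bernoulli with parameter at most $p := (8/7)|S|/2^n$; a standard Chernoff/Azuma argument over this supermartingale gives that the number in $S$ exceeds, say, $2p \cdot t2^k$ with probability at most $\exp(-\Omega(p \, t 2^k))$ — provided $p\, t 2^k$ is not too small. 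Now $p \cdot t2^k \le 2 |S| \cdot t 2^{k-n} \le 2|S| \cdot 2^{-h(\delta)n}/8 = |S| \cdot 2^{-h(\delta)n}/4$, which is at most $|S|/4$ (much smaller than $|S|/2$), so the "typical" fraction of $S$ that is a codeword is tiny; what we actually want is to rule out the fraction reaching $1/2$, i.e. the count reaching $|S|/2$. When $|S|$ is at least polynomially large this is a huge deviation above the mean $p|S| \cdot t2^{k-n} \ll |S|$, so Chernoff gives failure probability $\exp(-\Omega(|S|))$, which beats $3^{-n}$ once $|S|$ is a large enough constant multiple of $n$; this is where $\delta_0 = O(\log n/n)$ enters — it forces $h(\delta) n = \Omega(\log n)$ so that the expected codeword-count per cube is a small enough fraction.

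The delicate case, and the main obstacle, is \emph{small} sub-cubes: when $|S| = 2$ (the minimum allowed), "at least half" means "at least one codeword in $S$", and a naive union bound over all $\binom{n}{2}2^{n-1} \approx 2^n n^2$ such pairs against a per-pair probability of roughly $|S| t 2^{k-n} = 2t 2^{k-n}$ only works if $t 2^{k-n}$ is super-polynomially small, which is stronger than the hypothesis $t2^{k-n(1-h(\delta))} \le 1/8$ unless $\delta$ is bounded away from $0$. The resolution is to handle small cubes not by first moments on individual cubes but by a \emph{global} counting argument: the total number of (codeword, small-cube-containing-it) incidences is at most $t2^k$ times the number of sub-cubes through a fixed point, and one shows that no point of $\zo^n$ can be "responsible" for too many bad small cubes — equivalently, one argues directly that at most a $2^{-\Omega(\delta n)}$ fraction of \emph{all} sub-cubes (weighted appropriately) contain a codeword, then applies a Markov/union bound. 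Concretely I would split on $|S| = 2^{n-\ell}$ (i.e. $\ell$ frozen coordinates): there are $\binom{n}{\ell}2^\ell \le 2^{n + \ell \log(n/\ell)}$ sub-cubes with that many frozen coordinates, each contains in expectation $t 2^k 2^{-\ell} = t2^{k-n}\cdot|S| \le |S| 2^{-h(\delta)n}/8$ codewords, and one needs the number to stay below $|S|/2 = 2^{n-\ell-1}$. For $\ell \le (1-h(\delta))n - c\log n$ the mean is already $\le 2^{n-\ell}/8 \le |S|/2$ times a Chernoff-friendly margin, but for $\ell$ close to $n$ — the small cubes — the combinatorial count $2^{n+\ell\log(n/\ell)}$ is the enemy and one must instead observe that the codewords, being spaced at Hamming distance $> \delta n$ by construction (because of the ball-removal step), cannot all cluster into few small cubes: any sub-cube with $\ell \ge (1-\delta)n$ frozen coordinates has diameter $< \delta n$ and hence contains \emph{at most one} codeword, and since each codeword lies in at most (number of such cubes through it) $= \binom{n}{\ell}2^\ell / \ldots$ cubes while "at least half a size-$\ge 2$ cube" fails automatically if that cube has $\le 1$ codeword unless its size is... — so the only genuinely bad small cubes are those of size exactly $2$ containing a codeword together with one free coordinate, and the fraction of size-$2$ cubes containing \emph{some} fixed codeword $w$ is exactly $1/(2^n-1)$ per free coordinate choice, giving a clean union bound $t2^k \cdot n / 2^n \le n 2^{-h(\delta)n}/8 \le n^{-\Omega(1)} \to 0$ once $h(\delta)n = \omega(\log n)$, i.e. once $\delta \ge \delta_0 = C \log n / n$ for a large enough constant $C$. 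Assembling the two regimes and taking a union bound over $\ell \in \{1,\ldots,n-1\}$ and over all sub-cubes yields the claimed $1-\exp(-n)$ bound; I expect the careful bookkeeping at the $\ell \approx n$ boundary (using the minimum-distance $\delta n$ of the construction to cap the codeword count per small cube) to be the step that needs the most care.
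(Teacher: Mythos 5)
Your large-cube argument is essentially the paper's: bound the conditional probability that each successive codeword lands in a fixed sub-cube $S$ (using that the residual set $\cN$ retains at least a constant fraction of $\zo^n$ by the hypothesis $\gamma V \leq 1/8$), apply Azuma to get $\Pr[\text{\#codewords in }S > |S|/2] \leq \exp(-\Omega(|S|))$, and union-bound over the $\leq 3^n$ sub-cubes once $|S|$ exceeds some $S_0 = O(n)$. This part is fine and matches the paper.

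The gap is in the small-cube case, and it is a real one. You correctly spot the key ingredient --- a sub-cube with at most $\delta n$ random coordinates has Hamming diameter at most $\delta n$, so the construction's ball-removal step forces it to contain at most one codeword --- but you then go on a probabilistic detour that neither works nor is needed. You try to show that \emph{no} size-$2$ cube contains a codeword, via a union bound yielding failure probability $n t 2^k / 2^n \lesssim n^{-\Omega(1)}$. That bound is (a) much weaker than the required $\exp(-n)$, and (b) aimed at the wrong target: the lemma only asks that at most \emph{half} of each cube be codewords. Since $\dec(U_S) = \perp$ exactly when $U_S$ is not a codeword, a cube of size $|S| \geq 2$ containing at most one codeword already satisfies $\Pr[\dec(U_S)=\perp] \geq 1 - 1/|S| \geq 1/2$. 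In other words, the minimum-distance observation alone closes the small-cube case \emph{deterministically} (with probability $1$, not just w.h.p.); no union bound over small cubes is required, and the $\exp(-n)$ failure probability comes entirely from the large-cube regime. This is also exactly why the hypothesis $\delta \geq \delta_0 = O(\log n / n)$ appears: it guarantees $\delta n \geq \log S_0$, so every cube of size $\leq S_0$ has diameter below the minimum distance. You flagged this boundary as the place needing the most care; it in fact needs no probability at all, only the observation that ``$\leq 1$ codeword in a set of size $\geq 2$'' gives the required $1/2$.
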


\begin{proof}
Let $S \subseteq \zo^n$ be any sub-cube, and let $\gamma := tK/2^n$, where $K := 2^k$. The assumption implies that $\gamma V \leq 1/8$, where
$V \leq 2^{nh(\delta)}$ is the volume of a Hamming ball of radius $\delta n$.
Let $E_1, \ldots, E_{tK}$ be the codewords chosen by the code construction
in the order they are picked.

If $|S| \geq 2 tK$, the claim obviously holds (since the total number of 
codewords in $\supp(\enc(\U_k))$ is $tK$, thus we can assume otherwise.

Arbitrarily order the elements of $S$ as $s_1, \ldots, s_{|S|}$, and for each
$i \in [|S|]$, let the indicator random variable $X_i$ be so that 
$X_i = 1$ iff $\dec(s_i) \neq \perp$. Define $X_0 = 0$. Our goal is to upper bound
\[
\E[X_i | X_0, \ldots, X_{i-1}]
\]
for each $i \in [|S|]$. Instead of conditioning on $X_1, \ldots, X_{i-1}$,
we condition on a more restricted event and show that regardless of the
more restricted conditioning, the expectation of $X_i$ can still be upper bounded as desired.
Namely, we condition on the knowledge of not only $\dec(s_j)$ for all $j<i$ but also the
unique $j' \in [tK]$ such that $E_{j'} = s_j$, if $\dec(s_j) \neq \perp$.
Obviously the knowledge of this information determines the values of 
$X_1, \ldots, X_{i-1}$, and thus Proposition~\ref{prop:restriction} applies.
Under the more restricted conditioning, some of the codewords in $E_{1}, \ldots, E_{tK}$ (maybe all) will
be revealed. Obviously, the revealed codewords have no chance of being assigned
to $s_i$ (since the codewords are picked without replacement). By a union bound, the chance that any of the up to $tK$
remaining codewords is assigned to $s_i$ by the decoder is thus at most
\[
\frac{tK}{2^n - |S| V} \leq
\frac{tK}{2^n(1-2 \gamma V)} \leq (4/3) tK/2^n = (4/3) \gamma \leq 1/6.
\]
Since the above holds for any realization of the information that
we condition on, we conclude that
\[
\E[X_i | X_0, \ldots, X_{i-1}] \leq 1/6.
\]
Let $X := X_1 + \cdots + X_{|S|}$, which determines the number of 
vectors in $S$ that are hit by the code.
We can apply Proposition~\ref{prop:simpleAzuma} to deduce that
\[
\Pr[X > |S|/2] \leq \exp(-|S|/18).
\]
Therefore, if $|S| > S_0$ for some $S_0 = O(n)$, the upper bound can be
made less than $\exp(-n) 3^{-n}$. In this case, a union bound on all possible
sub-cubes satisfying the size lower bound ensures that the desired cube property
holds for all such sub-cubes with probability at least $1-\exp(-n)$.

The proof is now reduced to sub-cubes with at most $\delta_0 n = O(\log n)$ random bits,
where we choose $\delta_0 := (\log S_0)/n$. 
In this case, since the relative distance of the coding scheme of Construction~\ref{constr:prob}
is always at least $\delta \geq \delta_0$, we deduce that 
\[
|\{x \in S\colon \dec(x) \neq \perp\}| \leq 1 \leq |S|/2,
\]
where the first inequality is due to the minimum distance of the code
and the second is due to the assumption that $|S| \geq 2$. Thus,
whenever $2 \leq |S| \leq S_0$, we always have the property that
\[
\Pr_{U_S}[\dec(U_S = \perp)] \geq 1/2. \qedhere
\]
\end{proof}

\begin{lem}(Bounded Independence) \label{lem:boundedIndep}
Let $\ell \in [n]$, $\eps > 0$ and suppose the parameters are as in Construction~\ref{constr:prob}.
Let $\gamma := t2^{k-n(1-h(\delta))}$, where $h(\cdot)$ denotes the binary entropy
function. There is a choice of 
\[
t_0 = O\Big( \frac{2^\ell + n}{{\eps}^2} \Big)
\]
such that, provided that $t \geq t_0$,
with probability $1-\exp(-n)$ over the randomness of the code construction
the coding scheme $(\enc, \dec)$ satisfies the following: For any $s \in \zo^k$,
the random vector $\enc(s)$ is $\ell$-wise $\eps'$-dependent,
where
\[
\eps' := \max\Big\{ \eps, \frac{2 \gamma}{1-\gamma} \Big\}.
\]
\end{lem}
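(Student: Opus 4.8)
The plan is to fix any realization of the code and any message $s\in\zo^k$, view $\enc(s)$ restricted to a coordinate set $T$ with $|T|\le\ell$ as the empirical distribution $\hat\mu$ of the projections $W_1|_T,\dots,W_t|_T$, where $E(s)=\{W_1,\dots,W_t\}$ are the $t$ codewords that Construction~\ref{constr:prob} assigns to $s$ (in the order picked), and then show that $\hat\mu$ is $\eps'$-close to $\U_{|T|}$ with enough room to union bound over all $s$ and all such $T$.

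The first step pins down the conditional behaviour of each $W_i$. Revealing all codewords (across all messages) in the global order picked, and conditioning on everything picked before $W_i$, the codeword $W_i$ is uniform over the current remaining set, whose size is at least $2^n-t2^kV\ge 2^n(1-\gamma)$ since at most $t2^k$ Hamming balls of volume $V\le 2^{nh(\delta)}$ have been removed. Consequently, for every $A\subseteq\zo^{|T|}$, the conditional probability of $\{W_i|_T\in A\}$ lies \emph{deterministically} in the interval $[\,|A|2^{-|T|}-\gamma,\ |A|2^{-|T|}/(1-\gamma)\,]$; in particular the conditional law of $W_i|_T$ is $\gamma/(1-\gamma)$-close to $\U_{|T|}$ regardless of the conditioning. (We may assume $\gamma<1$, as otherwise the bound is vacuous.) Writing $\mu_i$ for this conditional law, the averaged law $\bar\mu:=\frac1t\sum_{i=1}^t\mu_i$ is then also $\gamma/(1-\gamma)$-close to $\U_{|T|}$, by the triangle inequality for statistical distance.

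The second step bounds the statistical distance between $\hat\mu$ and $\bar\mu$ by martingale concentration, exactly as in the proof of Lemma~\ref{lem:cube}. For a fixed $A$, the centered partial sums $\sum_{i\le m}\big(\mathbf 1[W_i|_T\in A]-\mu_i(A)\big)$ form a martingale (with respect to the filtration generated by the global pick order) with increments in an interval of length $1$, so Proposition~\ref{prop:simpleAzuma} gives $\Pr[\hat\mu(A)-\bar\mu(A)>\eps/2]\le\exp(-\Omega(\eps^2 t))$. Since the statistical distance of $\hat\mu$ from $\bar\mu$ equals $\max_A(\hat\mu(A)-\bar\mu(A))$ and there are only $2^{2^{|T|}}\le 2^{2^\ell}$ subsets $A$, a union bound makes this distance at most $\eps/2$ except with probability $2^{2^\ell}\exp(-\Omega(\eps^2 t))$; combined with Step 1 and the triangle inequality, on that event $\enc(s)|_T$ is within $\eps/2+\gamma/(1-\gamma)\le\max\{\eps,\,2\gamma/(1-\gamma)\}=\eps'$ of $\U_{|T|}$. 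Finally, union-bounding over the $2^k$ messages $s$ and the at most $2n^\ell$ sets $T$ with $|T|\le\ell$, and using $k\le n$ together with the elementary inequality $\ell\log n=O(n+2^\ell)$, the total failure probability drops below $\exp(-n)$ once $t\ge t_0$ for a suitable $t_0=O((2^\ell+n)/\eps^2)$. On the resulting good event, $\enc(s)$ is $\ell$-wise $\eps'$-dependent for every $s$, as required by Definition~\ref{def:limited:string}.

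The main obstacle is the concentration step, and the observation that makes the sample count come out to $2^\ell$ rather than $2^{2\ell}$ is that one should union-bound over \emph{all} subsets $A$ of the pattern space $\zo^{|T|}$ — which costs only $2^{|T|}\le 2^\ell$ in the exponent — rather than over the $2^{|T|}$ singletons with a proportionally smaller allowed deviation. The rest is bookkeeping: setting up the martingale filtration as the global pick order (so that Proposition~\ref{prop:simpleAzuma} applies verbatim, as in Lemma~\ref{lem:cube}), and checking that the nested union bounds are absorbed into $t_0=O((2^\ell+n)/\eps^2)$ using $k\le n$.
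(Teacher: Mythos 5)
Your proposal is correct and follows essentially the same route as the paper. The paper's version is shorter only because the concentration step is packaged as Lemma~\ref{lem:distrLearning:dependent}: one observes, exactly as you do, that conditioned on the full pick history the law of each $E_i$ restricted to $T$ is $(\gamma/(1-\gamma))$-close to $\U_{|T|}$, and then invokes that lemma with $r=2^{\ell}$, $\eta=\exp(-3n)$, $\gamma/(1-\gamma)$ as the conditional error, and $\eps'$ as the target closeness (using $\eps'-\gamma/(1-\gamma)\ge\eps'/2\ge\eps/2$). Lemma~\ref{lem:distrLearning:dependent} is itself proved by the martingale-plus-union-bound-over-events-$A$ argument you re-derive inline, so the two proofs coincide at the level of ideas, including the crucial observation that one should union bound over all $2^{2^{|T|}}$ events $A$ rather than over singletons with a proportionally smaller slack.

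Two small bookkeeping remarks. First, Proposition~\ref{prop:simpleAzuma} as stated applies to indicators $X_i$ with a \emph{fixed} conditional upper bound $\gamma$, not to the centered increments $\mathbf 1[W_i|_T\in A]-\mu_i(A)$; to phrase the step strictly in terms of what the paper provides, center against $p=\U_{|T|}(A)$ and use $\E[X_i\mid\cdot]\in[p-\gamma/(1-\gamma),\,p+\gamma/(1-\gamma)]$ together with Propositions~\ref{prop:simpleAzuma} and~\ref{prop:simpleAzuma:sub} for the two tails (this is precisely how the paper's Lemma~\ref{lem:distrLearning:dependent} is proved). Your centering against $\mu_i(A)$ with a direct Azuma application is equally valid, just not literally ``verbatim'' from the stated proposition. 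Second, for the final union bound the paper simply uses the crude $2^n$ for the number of sets $T$, which makes the absorption into $t_0=O((2^\ell+n)/\eps^2)$ immediate without the $\ell\log n=O(n+2^\ell)$ estimate; your estimate is correct but unnecessary.
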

\begin{proof}
Consider any message $s \in \zo^k$ and suppose the $t$ codewords in 
$\supp(\enc(s))$ are denoted by $E_1, \ldots, E_t$ in the
order they are picked by the construction. 

Let $T \subseteq [n]$ be any set of size at most $\ell$.
Let $E'_1, \ldots, E'_t \in \zo^{|T|}$ be the restriction of $E_1, \ldots, E_t$
to the coordinate positions picked by $T$.
Observe that the distribution of $\enc(s)$ restricted to the
coordinate positions in $T$ is exactly the empirical
distribution of the vectors $E'_1, \ldots, E'_t$, and the support size
of this distribution is bounded by $2^\ell$.

Let $K := 2^k$, $N := 2^n$, and $V \leq 2^{n h(\delta)}$ be the 
volume of a Hamming ball of radius $\delta n$.
By the code construction, for $i \in [t]$, conditioned on the knowledge of
$E_1, \ldots, E_{i-1}$, the distribution of $E_i$ is uniform on
$\zo^n \setminus (\Gamma(E_1) \cup \ldots \cup \Gamma(E_{i-1}))$ which
is a set of size at least $N(1-tK V) \geq N(1-\gamma)$. By Proposition~\ref{prop:uniformity},
it follows that the conditional distribution of each $E_i$ remains
$(\gamma /(1-\gamma))$-close to $\U_n$.
Since the $E'_i$ are simply restrictions of
the $E_i$ to some subset of the coordinates, 
the same holds for the $E'_i$; i.e., the distribution of $E'_i$
conditioned on the knowledge of $E'_1, \ldots, E'_{i-1}$ is
$(\gamma /(1-\gamma))$-close to $\U_{|T|}$.

Observe that $\eps' - \gamma/(1-\gamma) \geq \eps'/2$.
By applying Lemma~\ref{lem:distrLearning:dependent} to the sample
outcomes $E'_1, \ldots, E'_{t}$, we can see that
with probability at least $\exp(-3n)$
over the code construction, the empirical distribution of the $E'_i$
is $\eps'$-close to uniform provided that $t \geq t_0$ for some
\[
t_0 = O\Big( \frac{2^\ell + n}{{\eps'}^2} \Big) = O\Big( \frac{2^\ell + n}{{\eps}^2} \Big).
\]
Now, we can take a union bound on all choices of the message $s$
and the set $T$ and obtain the desired conclusion.
\end{proof}

We now put together the above results to conclude our main existential result about the codes that we will use at the ``inner" level to encode blocks in our construction of non-malleable codes against bit tampering functions. Among the properties guaranteed below, we in fact do not need the precise non-malleability property (item~\ref{prop:nm} in the statement of Lemma~\ref{lem:properties} below) in our eventual proof, although we use non-malleability to prove the last property (item~\ref{prop:bottom}) which is needed in the proof.

\begin{lem} \label{lem:properties}
Let $\alpha > 0$ be any parameter. Then, there is an $n_0 = O(\log^2(1/\alpha)/\alpha)$
such that for any $n \geq n_0$, 
Construction~\ref{constr:prob} can be set up so that with probability
$1-3\exp(-n)$ over the randomness of the construction, the resulting coding scheme
$(\enc, \dec)$ satisfies the following properties:
\begin{enumerate}
\item (Rate) Rate of the code is at least $1-\alpha$. \label{prop:rate}

\item (Non-malleability) The code is non-malleable against bit-tampering adversaries with
error $\exp(-\Omega(\alpha n))$. \label{prop:nm}

\item (Cube property) The code satisfies the \emph{cube property} of Lemma~\ref{lem:cube}. \label{prop:cube}

\item (Bounded independence) For any message $s \in \zo^k$, the distribution of $\enc(s)$
is $\exp(-\Omega(\alpha n))$-close to an $\Omega(\alpha n)$-wise
independent distribution with uniform entries. \label{prop:indep}

\item (Error detection) Let $f\colon \zo^n \to \zo^n$ be any bit-tampering adversary that is
neither the identity function nor a constant function.
Then, for every message $s \in \zo^k$,
\[
\Pr[\dec(f(\enc(s))) = \perp] \geq 1/3,
\]
where the probability is taken over the randomness of the encoder. \label{prop:bottom}
\end{enumerate}
\end{lem}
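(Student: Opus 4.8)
The plan is to set the parameters of Construction~\ref{constr:prob} so that all four of the auxiliary statements established above apply simultaneously, then derive item~\ref{prop:bottom} from items~\ref{prop:nm}, \ref{prop:cube} and \ref{prop:indep} by a case analysis on the structure of the bit-tampering function $f$. Concretely, given $\alpha$, I would take the distance parameter $\delta$ to be a small constant multiple of $\alpha$ (large enough to exceed the $\delta_0 = O(\log n/n)$ threshold of Lemma~\ref{lem:cube} once $n \ge n_0$), choose $\ell = \Theta(\alpha n)$ as the independence parameter, pick the target error $\eps = \exp(-\Theta(\alpha n))$, and then let $t = \Theta(2^\ell/\eps^2 + \mathrm{poly})$ be the common value demanded by Theorem~\ref{thm:upperBound}, Lemma~\ref{lem:cube} and Lemma~\ref{lem:boundedIndep}; one checks that $\log t = \Theta(\alpha n)$, so the rate bound \eqref{eqn:thm:upper:k0} still yields $k \ge n(1-h(\delta)) - \Theta(\alpha n) - O(1) \ge n(1-\alpha)$ once $n$ is at least some $n_0 = O(\log^2(1/\alpha)/\alpha)$, and also the hypothesis $t2^{k-n(1-h(\delta))} \le 1/8$ of the Cube Property is satisfied. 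A union bound over the three $\exp(-n)$-probability failure events (non-malleability, cube property, bounded independence) gives the claimed $1-3\exp(-n)$.

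**Proving item~\ref{prop:bottom}.**
Fix a bit-tampering $f = (f_1,\dots,f_n)$ that is neither constant nor the identity, and fix $s$. Partition the coordinates $[n]$ into three sets according to the type of $f_i$: let $C$ be the set of $i$ with $f_i$ constant (freezes bit $i$ to $0$ or $1$), let $\mathrm{Id}$ be the set with $f_i$ the identity, and let $\mathrm{Flip}$ be the set with $f_i(x_i) = 1-x_i$. Since $f$ is neither the identity nor constant, we have $|C| < n$. The main obstacle — and the reason items \ref{prop:cube} and \ref{prop:indep} are needed rather than raw non-malleability — is that $f$ may be extremely close to a constant function, in which case the non-malleability guarantee alone does not force $\dec$ to output $\perp$ (a nearly-constant $f$ could map almost everything to a fixed codeword, giving $\cD_f$ supported on a valid message). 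So I would split into two regimes by a threshold $\tau = \Theta(\alpha n)$ on $|C|$.

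\textbf{Case 1: $|C| \ge \tau$.}
Here $f(\enc(s))$ lies (with probability $1$) in the sub-cube $S$ obtained by freezing the coordinates in $C$ to the values dictated by $f$ and leaving the rest free; moreover, conditioned on the frozen coordinates, the remaining $n-|C| \ge 1$ coordinates of $f(\enc(s))$ are a bijective image (identity or flip on each coordinate) of the corresponding coordinates of $\enc(s)$. By the bounded-independence property (item~\ref{prop:indep}), since $|C| \le \ell = \Theta(\alpha n)$ is within the independence parameter, the restriction of $\enc(s)$ to $C$ is $\exp(-\Omega(\alpha n))$-close to uniform; so with probability at least $1 - \exp(-\Omega(\alpha n))$ the frozen string actually occurs and, conditioned on it, $f(\enc(s))$ restricted to the free coordinates is within $\exp(-\Omega(\alpha n))$ of uniform on that sub-cube. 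The Cube Property (item~\ref{prop:cube}) then says $\dec$ of a uniform point of the sub-cube is $\perp$ with probability $\ge 1/2$; absorbing the two $\exp(-\Omega(\alpha n))$ losses and assuming $n \ge n_0$ large enough, $\Pr[\dec(f(\enc(s))) = \perp] \ge 1/2 - \exp(-\Omega(\alpha n)) \ge 1/3$.

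\textbf{Case 2: $|C| < \tau$.}
Now $f$ freezes fewer than $\tau = \Theta(\alpha n)$ bits and acts bijectively (identity or flip) on the remaining $> n-\tau$ coordinates. I would argue that for such $f$, the distribution $\cD_f$ from the non-malleability guarantee must put mass $\ge 1/2$ (say) on $\perp$; then non-malleability with error $\exp(-\Omega(\alpha n))$ transfers this to $\dec(f(\enc(s)))$ up to an $\exp(-\Omega(\alpha n))$ additive loss, again yielding $\ge 1/3$ for $n \ge n_0$. To see that $\cD_f$ is mostly $\perp$: by Remark~\ref{rem:Df}, $\cD_f$ is the law of $D$ with $D = \perp$ unless $f(U_n) \in H$, where $H = \{x : \Pr[f(U_n)=x] > 1/r\}$ with $r = \Theta(\eps^2 t)$. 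Since $f$ is bijective off the $|C| < \tau$ frozen coordinates, each point in the image of $f$ has preimage mass exactly $2^{-(n-|C|)} \le 2^{-(n-\tau)}$, which is far below $1/r$ for our parameters (as $r = \exp(\Theta(\alpha n)) \cdot \mathrm{poly} \ll 2^{n-\tau}$ when $\tau$ is a small enough constant multiple of $\alpha n$), so $H = \emptyset$ and hence $\cD_f$ is supported entirely on $\{\perp, \same\}$; and the $\same$ branch occurs only if $f(U_n) = U_n$, i.e. only if $f$ is the identity on the un-frozen coordinates and the frozen values happen to match, which has probability $\le 2^{-|C|} \le 1$ — wait, this needs the stronger observation that $f$ not being the identity forces $f(x)\neq x$ for all $x$ in Case 2 only when $\mathrm{Flip}\neq\emptyset$; if instead $C\neq\emptyset$ with $\mathrm{Flip}=\emptyset$ then $f(x)=x$ exactly on the sub-cube of frozen coordinates having the right values, an event of probability $2^{-|C|}\le 2^{-1}<1$, while off that event $f(x)\neq x$ so $D=\perp$. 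In either sub-case $\Pr[D=\perp]\ge 1/2$, which is what we need.

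**Remark on the hard part.**
The genuinely delicate point is the parameter bookkeeping: one must choose $\delta$, $\ell$, $\eps$, $t$, and the threshold $\tau$ so that (i) all of Theorem~\ref{thm:upperBound}, Lemma~\ref{lem:cube}, Lemma~\ref{lem:boundedIndep} are simultaneously applicable, (ii) the rate is still $\ge 1-\alpha$ despite the $\log t$ loss in \eqref{eqn:thm:upper:k0}, (iii) $\tau \le \ell$ so that bounded independence covers the frozen coordinates in Case~1, and (iv) $2^{n-\tau} \gg r = \Theta(\eps^2 t)$ so that $H=\emptyset$ in Case~2. A short computation shows that with $\delta = \Theta(\alpha)$ small, $\ell = \Theta(\alpha n)$, $\eps = \exp(-\Theta(\alpha n))$, $t = \exp(\Theta(\alpha n))$, and $\tau$ a sufficiently small constant times $\alpha n$, all four constraints hold once $n \ge n_0 = O(\log^2(1/\alpha)/\alpha)$, which is exactly the regime claimed. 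Everything else is a routine union bound and triangle inequality on statistical distances.
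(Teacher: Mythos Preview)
Your overall plan and the two ingredients for item~\ref{prop:bottom} --- the $H=\emptyset$ argument via Remark~\ref{rem:Df}, and the bounded-independence-plus-cube-property argument --- are exactly the ones the paper uses. But your case split is inverted, and as written your Case~1 does not go through.

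In your Case~1 you want $f(\enc(s))$ to be close to uniform on the sub-cube $S$ of dimension $n-|C|$ determined by the frozen coordinates. Since $f$ \emph{overwrites} the $C$-coordinates regardless of $\enc(s)|_C$ (so ``the frozen string actually occurs'' is irrelevant), what you actually need is that $\enc(s)$ restricted to the \emph{non-frozen} set $[n]\setminus C$ be close to uniform; $\ell$-wise independence yields this only when $n-|C|\le\ell$, i.e.\ when $|C|\ge n-\ell$, not when ``$|C|\le\ell$'' as you assert (and that assertion does not even follow from your hypothesis $|C|\ge\tau$, since $|C|$ can be anywhere up to $n-1$). With your threshold $\tau=\Theta(\alpha n)$, the entire middle range $\tau\le|C|<n-\ell$ lands in your Case~1, where your argument does not apply.

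The fix (and this is the paper's approach) is to split on the number $\ell':=n-|C|$ of non-frozen bits at the threshold $\log r$: when $\ell'\ge\log r$, every point has mass $2^{-\ell'}\le 1/r$ under $f(U_n)$, so $H=\emptyset$ and your Case~2 reasoning gives $\Pr[\dec(f(\enc(s)))=\perp]\ge 1/2-\eps$; when $\ell'<\log r$, the parameter choice guarantees $\ell'\le\ell$, so bounded independence on the $\ell'$ non-frozen coordinates makes $f(\enc(s))$ close to uniform on a sub-cube of dimension $\ell'\ge 1$, and the cube property finishes. The single parameter check that makes the two cases overlap is $\log r\le\ell$, which holds for the paper's choices $\eps=2^{-\Theta(\alpha n)}$, $t=\Theta(n/\eps^6)$ once $n\ge n_0$; your constraint ``(iii) $\tau\le\ell$'' should be replaced by this.
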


\begin{proof}
Consider the family $\cF$ of bit-tampering functions, and observe that
$|\cF| = 4^n$. First, we apply Theorem~\ref{thm:upperBound} with
error parameter $\eps := 2^{-\alpha n/27}$, distance parameter
$\delta := h^{-1}(\alpha/3)$, and success parameter
$\eta := \exp(-n)$. Let $N := 2^n$ and observe that $\log (N|\cF|/\eta) = O(n)$.
We choose $t = \Theta (n/\eps^6)$ so as to ensure that the
coding scheme $(\enc, \dec)$ is non-malleable for bit-tampering adversaries
with error at most $\eps$, relative distance at least $\delta$, and message length
\[
k \geq n (1-h(\delta)) - 9 \log(1/\eps) - \log n - O(1)
\geq (1- 2\alpha/3) n - \log n - O(1),
\]
which can be made at least $n(1-\alpha)$ if $n \geq n_1$ for some
$n_1 = O(\log(1/\alpha)/\alpha)$. This ensures
that properties \ref{prop:rate}~and~\ref{prop:nm} are satisfied.

In order to ensure the cube property (property~\ref{prop:cube}), we can apply Lemma~\ref{lem:cube}.
Let $K := 2^k$ and note that our choices of the parameters imply $tK/N^{1-h(\delta)} = O(\eps^3) \ll 1/8$.
Furthermore, consider the parameter $\delta_0 = O((\log n)/n)$ of Lemma~\ref{lem:cube} and observe that
$\alpha/3 = h(\delta) = O(\delta \log(1/\delta))$. We thus see that as long as
$n \geq n_2$ for some $n_2 = O(\log^2(1/\alpha)/\alpha)$, we may ensure
that $\delta n \geq \delta_0 n$. By choosing $n_0 := \max\{ n_1, n_2 \}$, 
we see that the requirements of Lemma~\ref{lem:cube} is satisfied, implying that
with probability at least $1-\exp(-n)$, the cube property is satisfied.

As for the bounded independence property (Property~\ref{prop:indep}), consider
the parameter $\gamma$ of Lemma~\ref{lem:boundedIndep} and recall that we have shown
$\gamma = O(\eps^3)$. Thus by Lemma~\ref{lem:boundedIndep}, with probability
at least $1-\exp(-n)$, every encoding $\enc(s)$ is $\ell$-wise $\sqrt{\eps}$-dependent
for some
\begin{equation}
\label{eqn:property:bounded:ell}
\ell \geq \log t - 2 \log(1/\sqrt{\eps}) - \log n - O(1) \geq  5 \log(1/\eps) - O(1)
= \Omega(\alpha n).
\end{equation}

Finally, we show that property~\ref{prop:bottom} is implied by
properties \ref{prop:nm}, \ref{prop:cube}, and \ref{prop:indep}
that we have so far shown to simultaneously hold with probability 
at least $1-3\exp(-n)$.
In order to do so, we first recall that Theorem~\ref{thm:upperBound} explicitly defines the choice of $\cD_f$ in
Definition~\ref{def:nmCode} according to \eqref{eqn:explicit:Df}.
Let $H \subseteq \zo^n$ be the set of heavy elements as in \eqref{eqn:explicit:H}
and $r = \Theta(\eps^2 t)$ be the corresponding threshold parameter in
the same equation. Let $f\colon \zo^n \to \zo^n$ be any non-identity bit-tampering function
and let $\ell' \in [n]$ be the number of bits that are either flipped or left
unchanged by $f$. We consider two cases.
\begin{description}
\item[Case~1: $\ell' \geq \log r$.] In this case, for every $x \in \zo^n$, we have
\[
\Pr[f(\U_n) = x] \leq 2^{-\ell'} \leq r,
\]
and thus $H = \emptyset$. Also observe that, for $U \sim \U_n$,
\[
\Pr[f(U) = U] \leq 1/2,
\]
the maximum being achieved when $f$ freezes only one bit and leaves
the remaining bits unchanged (in fact,
if $f$ flips any of the bits, the above probability becomes zero).

We conclude that in this case, the entire probability mass of $\cD_f$
is supported on $\{ \same, \perp \}$ and the mass assigned to $\same$
is at most $1/2$. Thus, by definition of non-malleability, for every message $s \in \zo^k$,
\[
\Pr[\dec(f(\enc(s))) = \perp] \geq 1/2 - \eps \geq 1/3.
\]

\item[Case~2: $\ell' < \log r$.] 
Since $r = \Theta(\eps^2 t)$, by plugging in the value of $t$ we see that
$r = O(n/\eps^4)$, and thus we know that $\ell' < \log n + 4 \log(1/\eps) + O(1)$.

Consider any $s \in \zo^k$, and recall
that, by the bounded independence property, we already know that $\enc(s)$
is $\ell$-wise $\sqrt{\eps}$-dependent. Furthermore, 
by \eqref{eqn:property:bounded:ell}, 
\[
\ell \geq 5 \log(1/\eps) - O(1) \geq \ell',
\]
where the second inequality follows by the assumed lower bound $n \geq n_0$
on $n$. We thus can use the $\ell$-wise independence property of $\enc(s)$
and deduce that the distribution of $f(\enc(s))$ is $(\sqrt{\eps})$-close
to the uniform distribution on a sub-cube $S \subseteq \zo^n$ of size at least
$2$. Combined with the cube property (property~\ref{prop:cube}), we see
that
\[
\Pr[\dec(f(\enc(s))) = \perp] \geq 1/2 - \sqrt{\eps} \geq 1/3.
\]

Finally, by applying a union bound on all the failure probabilities, we conclude
that with probability at least $1-3\exp(-n)$, the code resulting from Construction~\ref{constr:prob}
satisfies all the desired properties.
\end{description}
\end{proof}

\section{Explicit construction of optimal bit-tampering coding schemes}
\label{sec:explicit}

In this section, we describe an explicit construction of codes achieving 
rate close to $1$ that are non-malleable against bit-tampering adversaries.
Throughout this section, we use $N$ to denote the block length of the
final code.

\subsection{The construction}
\label{sec:explicit:construction}

At a high level, we combine the following tools in our construction: 
1) an inner code $\cC_0$ (with encoder $\enc_0$) 
of constant length satisfying the properties of Lemma~\ref{lem:properties};
2) an existing non-malleable code construction $\cC_1$ (with encoder $\enc_1$)
against bit-tampering achieving a 
possibly low (even sub-constant) rate; 3) a linear error-correcting secret sharing
scheme (LECSS) $\cC_2$ (with encoder $\enc_2$); 4) an explicit function $\Perm$ that, given a uniformly
random seed, outputs a pseudorandom permutation
(as in Definition~\ref{def:limited:perm}) on a
domain of size close to $N$. Figure~\ref{fig:explicit} depicts how various
components are put together to form the final code construction.

At the outer layer, LECSS is used to pre-code the message. The resulting string
is then divided into blocks, where each block is subsequently encoded by
the inner encoder $\enc_0$. For a ``typical'' adversary that flips or 
freezes a prescribed fraction of the bits, we expect many of the inner
blocks to be sufficiently tampered so that many of the inner blocks
detect an error when the corresponding inner decoder is called. However,
this ideal situation cannot necessarily be achieved if the
fraction of global errors is too small, or if too many bits are frozen 
by the adversary (in particular, the adversary may freeze all but few of the
blocks to valid inner codewords). In this case, we rely on distance and bounded independence
properties of LECSS to ensure that the outer decoder, given the
tampered information, either detects
an error or produces a distribution that is independent of the source
message.

A problem with the above approach is that the adversary knows the location
of various blocks, and may carefully design a tampering scheme that,
for example, freezes a large fraction of the blocks to valid inner codewords and
leaves the rest of the blocks intact. To handle adversarial strategies
of this type, we permute the final codeword using the pseudorandom permutation
generated by $\Perm$, and include the seed in the final codeword. Doing this
has the effect of randomizing the action of the adversary, but on the other hand creates the
problem of protecting the seed against tampering. In order to solve this problem,
we use the sub-optimal code $\cC_1$
to encode the seed and prove in the analysis that non-malleability of the code $\cC_1$
can be used to make the above intuitions work.

\begin{figure}
\centering
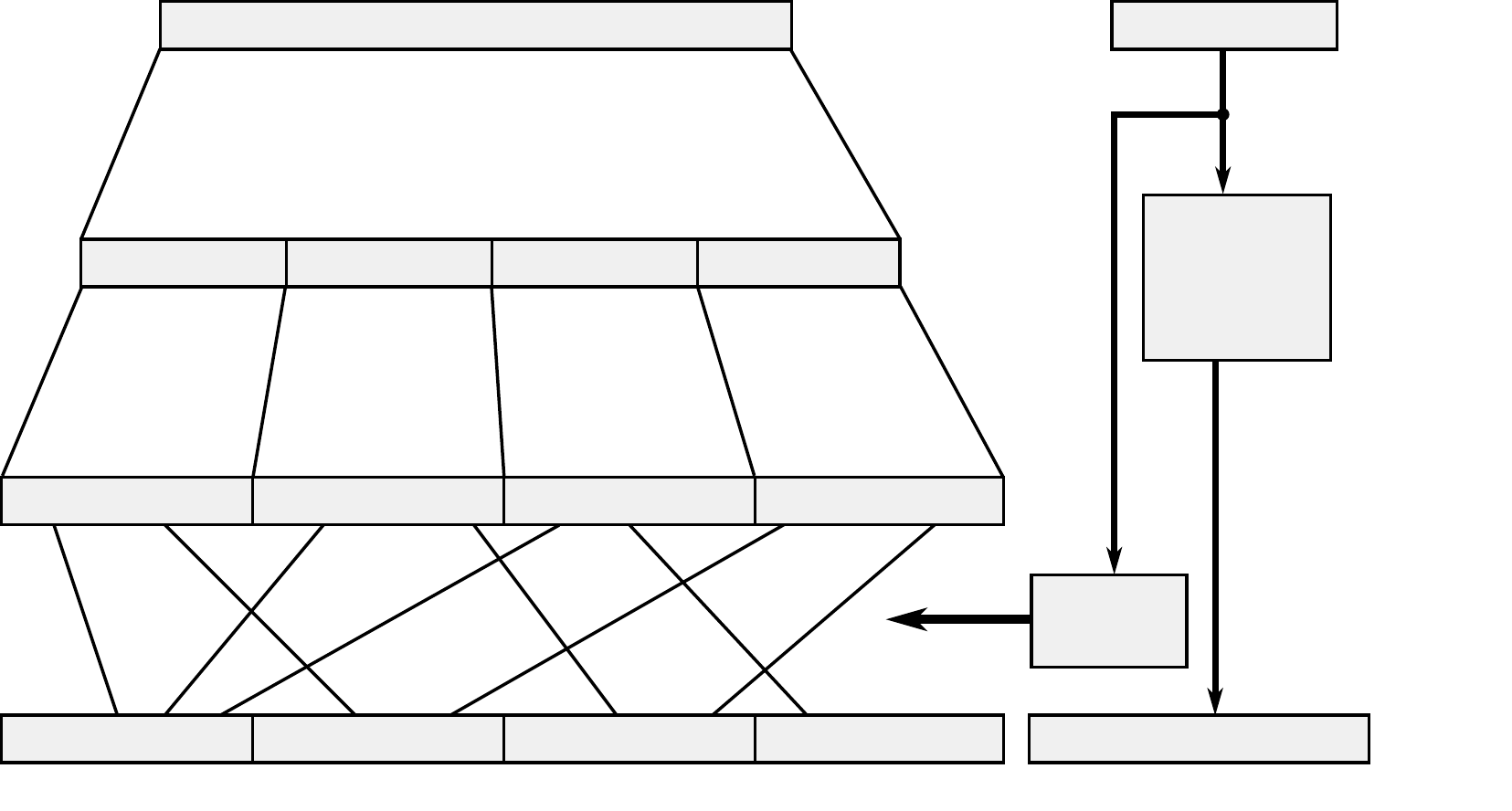
\caption{Schematic description of the encoder $\enc$ from our explicit construction.}
\label{fig:explicit}
\end{figure}

\subsubsection{The building blocks}
\label{sec:blocks}

In the construction, we use the following building blocks, with some of
the parameters to be determined later in the analysis.

\begin{enumerate}
\item 
An inner coding scheme $\cC_0=(\enc_0, \dec_0)$ with
rate $1-\gamma_0$ (for an arbitrarily small parameter $\gamma_0 > 0$),
some block length $B$, and message length $b = (1-\gamma_0) B$.
We assume that $\cC_0$ is an instantiation of Construction~\ref{constr:prob}
and satisfies the properties promised by Lemma~\ref{lem:properties}.

\item
A coding scheme $\cC_1=(\enc_1, \dec_1)$ with
rate $r > 0$ (where $r$ can in general be sub-constant),
block length $n_1 := \gamma_1 n$ (where $n$ is defined later), 
and message length $k_1 := \gamma_1 r n$, that is non-malleable
against bit-tampering adversaries with error $\eps_1$.
Without loss of generality, assume that $\dec_1$ never
outputs $\perp$ (otherwise, identify $\perp$ with an arbitrary fixed message; e.g., $0^{k_1}$). 

\item
A linear error-correcting secret sharing
(LECSS) scheme $\cC_2=(\enc_2, \dec_2)$ (as in Definition~\ref{def:lecss}) 
with message length $k_2 := k$, rate $1-\gamma_2$
(for an arbitrarily small parameter $\gamma_2 > 0$) and block length $n_2$.
We assume that
$\cC_2$ is a $(\delta_2 n_2, t_2 := \gamma'_2 n_2)$-linear error-correcting secret sharing
scheme (where $\delta_2 > 0$ and $\gamma'_2 > 0$ are constants defined by
the choice of $\gamma_2$).
Since $b$ is a constant, without loss of generality assume that $b$ divides $n_2$, and let 
$n_b := n_2 / b$ and $n := n_2 B/b$.

\item A polynomial-time computable mapping $\Perm\colon \zo^{k_1} \to \cS_n$, where $\cS_n$ denotes the 
set of permutations on $[n]$. We assume that $\Perm(U_{k_1})$ is
an $\ell$-wise $\delta$-dependent permutation (as in Definition~\ref{def:limited:perm}, 
for parameters $\ell$ and $\delta$.
In fact, it is possible to achieve
$\delta \leq \exp(-\ell)$ and $\ell = \lceil \gamma_1 r n/\log n \rceil$ for some constant
$\gamma > 0$. Namely, we may use the following result due to 
Kaplan, Naor and Reingold \cite{ref:KNR05}:
\begin{thm}\cite{ref:KNR05} \label{thm:permutation}
For every integers $n, k_1 > 0$, there is a function
$\Perm\colon \zo^{k_1} \to \cS_n$ computable in
worst-case polynomial-time (in $k_1$ and $n$) such that
$\Perm(U_{k_1})$ is an $\ell$-wise $\delta$-dependent
permutation, where $\ell = \lceil k_1/\log n \rceil$ and
$\delta \leq \exp(-\ell)$.
\end{thm}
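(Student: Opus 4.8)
The plan below is a sketch of how one would establish such a statement; since it is the derandomization theorem of Kaplan, Naor and Reingold, in this paper we in any case invoke it as a black box, so I only describe the strategy. After a standard reduction (embedding $[n]$ into $\zo^m$ for the least $m$ with $2^m \ge n$ and passing to an induced permutation, which costs only constant factors in $\ell$ and $\delta$), we may assume $n = 2^m =: N$. The target is a sampler using $O(\ell m)$ random bits for a distribution $\Pi$ over $\cS_N$ such that, for \emph{every} $T \subseteq [N]$ with $|T| \le \ell$, the tuple $(\Pi(t)\colon t \in T)$ is $\exp(-\Omega(\ell))$-close to a uniformly random ordered $|T|$-tuple of distinct elements. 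Because the construction will be invariant under relabelling the domain, it suffices to control one representative $\ell$-subset, so one never pays a union bound over the $\binom{N}{\ell}$ possibilities.

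The starting point is that near-optimal-seed \emph{almost-$\ell$-wise independent functions} are cheap: a uniformly random polynomial of degree less than $\ell$ over $\F_{2^m}$ is an exactly $\ell$-wise independent map $[N]\to[N]$ with seed exactly $\ell m$, and a small-bias perturbation of its coefficients gives an $\ell$-wise $\exp(-\Omega(\ell))$-dependent family with seed still $O(\ell m)$. The whole difficulty is to upgrade function-level independence to permutation-level independence without a super-constant seed blow-up. A natural route is a recursive divide-and-conquer on the address bits: to sample a permutation of $\zo^m$, (i) use an almost-$\ell$-wise independent hash to spread the $N$ points across $\approx\sqrt N$ buckets; (ii) recursively permute within each bucket and permute the bucket order, reusing randomness suitably across the $\sqrt N$ sub-instances so the per-level seed cost stays $O(\ell m)$; and (iii) compose with one more mixing layer (a Feistel- or Benes-type round whose controls are almost-$\ell$-wise independent functions) so that a single pass already scrambles any fixed $\ell$-set. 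The seed then obeys a recurrence of the form $S(N) \le O(\ell\log N) + S(\sqrt N)$ of depth $O(\log\log N)$, solving to $S(N) = O(\ell\log N)$, with the base case (buckets of size $\mathrm{poly}(\ell)$) handled by brute force, still within budget. Polynomial-time computability is then clear: every layer is a polynomial evaluation over $\F_{2^m}$ or a table lookup on a $\mathrm{poly}(n)$-size domain.

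For correctness, fix an arbitrary $\ell$-set $T$ and track its image layer by layer. The hash of step (i) places the $\ell$ points of $T$ in distinct buckets except with probability $\exp(-\Omega(\ell))$ --- this is exactly where one uses that the function family has error $\exp(-\ell)$ rather than merely inverse-polynomial, together with an exponential-tail deviation bound on how the $\ell$ points spread. Conditioned on that, the recursive sub-permutations act on disjoint sub-instances, so one passes to ideal (uniform permutation) behaviour by a hybrid argument --- one hybrid per bucket per level, each costing $\exp(-\Omega(\ell))$ --- and argues the mixing layer of step (iii) similarly, using that $\ell$-wise independence of its controls suffices for the induced map to be $\ell$-wise $\exp(-\Omega(\ell))$-dependent on $T$. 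Summing the $O(\log\log N)$ geometrically decaying error terms yields total error $\exp(-\Omega(\ell))$, and one finally rescales constants to match $\ell = \lceil k_1/\log n\rceil$ and $\delta \le \exp(-\ell)$.

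The main obstacle is precisely the function-to-permutation conversion under a tight seed budget: a naive butterfly/Benes realization of a permutation needs $\Theta(N\log N)$ switch bits, so the entire point is to reuse randomness aggressively --- both across the $\sqrt N$ recursive sub-permutations and within each shuffling layer --- while still proving that $\ell$-wise (indeed $\exp(-\ell)$-error $\ell$-wise) independence of the \emph{control bits} propagates to $\ell$-wise $\exp(-\ell)$-dependence of the whole permutation on \emph{every} $\ell$-set, and that the errors accrued along the recursion add rather than multiply. A secondary subtlety, which dictates several of the design choices, is pushing the error all the way down to $\exp(-\ell)$ (and not just $1/\mathrm{poly}(n)$), which forces small-bias/exponential-error function families and exponential-tail concentration at every step.
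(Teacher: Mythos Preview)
The paper does not prove this theorem at all: it is stated with the citation \cite{ref:KNR05} and invoked purely as a black box, exactly as you note in your first sentence. There is therefore nothing in the paper to compare your sketch against, and your observation that ``in this paper we in any case invoke it as a black box'' is the correct and complete answer for the purposes of this manuscript.

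As a side remark on your sketch itself: the actual Kaplan--Naor--Reingold construction proceeds somewhat differently from the recursive bucket-and-recurse scheme you outline. Their main approach is to iterate a \emph{simple} permutation family (e.g., Feistel rounds with almost $\ell$-wise independent round functions, or a ``card shuffle'' built from a small fixed permutation applied at random positions) and to analyze the mixing time of the induced random walk on ordered $\ell$-tuples; the seed length then comes from the number of rounds times the per-round seed, and the error from a spectral-gap or coupling bound on that walk. Your divide-and-conquer picture with buckets of size $\sqrt{N}$ and a recurrence $S(N) \le O(\ell \log N) + S(\sqrt{N})$ is a plausible alternative route, but the step ``reuse randomness across the $\sqrt{N}$ sub-instances so the per-level seed stays $O(\ell m)$'' is doing a lot of unexplained work, and it is not obvious that such reuse preserves $\ell$-wise near-independence without a more careful argument. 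None of this matters for the present paper, but if you ever need to reconstruct the result, the iterated-shuffle/mixing-time framing is closer to what \cite{ref:KNR05} actually does.
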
 \qed
\end{enumerate}

\subsubsection{The encoder}
\label{sec:encoder}

Let $s \in \zo^k$ be the message that we wish to encode.
The encoder generates the encoded message $\enc(s)$ according to the 
following procedure.

\begin{enumerate}
\item Let $Z \sim \cU_{k_1}$ and sample a random permutation $\Pi\colon [n] \to [n]$ 
by letting $\Pi := \Perm(Z)$. Let $Z' := \enc_1(Z) \in \zo^{\gamma_1 n}$.

\item Let $S' = \enc_2(s) \in \zo^{n_2}$ be the encoding of $s$ using the LECSS code $\cC_2$.

\item Partition $S'$ into blocks $S'_1, \ldots, S'_{n_b}$, each of length $b$, and encode each
block independently using $\cC_0$ so as to obtain a string
$C = (C_1, \ldots, C_{n_b}) \in \zo^{n}$. 

\item Let $C' := \Pi(C)$ be the string $C$ after its $n$ coordinates are
permuted by $\Pi$.

\item Output $\enc(s) := (Z', C') \in \zo^{N}$, where $N := (1+\gamma_1) n$, as the encoding of $s$.
\end{enumerate}

A schematic description of the encoder summarizing the involved parameters
is depicted in Figure~\ref{fig:explicit}.

\subsubsection{The decoder}
\label{sec:decoder}

We define the decoder $\dec(\bar{Z'}, \bar{C'})$ as follows:

\begin{enumerate}
\item Compute $\bar{Z} := \dec_1(\bar{Z'})$.

\item Compute the permutation $\bar{\Pi}\colon [n] \to [n]$ defined by $\bar{\Pi} := \Perm(\bar{Z})$.

\item Let $\bar{C} \in \zo^n$
be the permuted version of $\bar{C'}$ according to $\bar{\Pi}^{-1}$.

\item Partition $\bar{C}$ into $n_1/b$ blocks $\bar{C}_1, \ldots, \bar{C}_{n_b}$ 
of size $B$ each (consistent to the
way that the encoder does the partitioning of $\bar{C}$).

\item Call the inner code decoder on each block, namely, for each
$i \in [n_b]$ compute $\bar{S'}_i := \dec_0(\bar{C}_i)$. If $\bar{S'}_i = \perp$ for any $i$,
output $\perp$ and return.

\item Let $\bar{S'} = (\bar{S'}_1, \ldots, \bar{S'}_{n_b}) \in \zo^{n_2}$.
Compute $\bar{S} := \dec_2(\bar{S'})$, where $\bar{S} = \perp$ if $\bar{S'}$ is not a codeword of $\cC_2$. Output $\bar{S}$.
\end{enumerate}

\begin{remark} \label{rem:Justesen}
As in the classical variation of concatenated codes of Forney \cite{ref:forney} due to
Justesen \cite{ref:justesen}, the encoder described above can enumerate a \emph{family}
of inner codes instead of one fixed code
in order to eliminate the exhaustive search for a good inner code
$\cC_0$. In particular, one can consider all possible realizations of
Construction~\ref{constr:prob} for the chosen parameters and use each obtained
inner code to encode one of the $n_b$ inner blocks. If the fraction of
good inner codes (i.e., those satisfying the properties listed in Lemma~\ref{lem:properties}) 
is small enough (e.g., $1/n^{\Omega(1)}$), our analysis still
applies. It is possible to ensure that the size of the inner code family is
not larger than $n_b$ by appropriately choosing the parameter $\eta$ in Theorem~\ref{thm:upperBound}
(e.g., $\eta \geq 1/\sqrt{n}$).
\end{remark}

\subsection{Analysis}
\label{sec:analysis}

In this section, we prove that Construction of Section~\ref{sec:explicit:construction} is indeed
a coding scheme that is non-malleable against bit-tampering adversaries
with rate arbitrarily close to $1$.
More precisely, we prove the following theorem.

\begin{thm} \label{thm:explicit}
For every $\gamma_0 > 0$, there is a $\gamma'_0 = \gamma_0^{O(1)}$ 
and $N_0 = O(1/\gamma_0^{O(1)})$ such
that for every integer $N \geq N_0$, the following holds\footnote{
We can extend the construction to arbitrary block lengths $N$ by 
standard padding techniques and observing that the set of block lengths
for which construction of Figure~\ref{fig:explicit} is defined is
dense enough to allow padding without affecting the rate.
}.
The pair $(\enc, \dec)$ defined in Sections \ref{sec:encoder}~and~\ref{sec:decoder}
can be set up to be a  
non-malleable coding scheme 
against bit-tampering adversaries, achieving
block length $N$,
rate at least $1-\gamma_0$ and error 
\[
\eps \leq \eps_1 + 2 \exp\Big(-\Omega\Big(\frac{\gamma'_0 r N}{\log^3 N}\Big)\Big),
\]
where $r$ and $\eps_1$ are respectively the rate and the error of the assumed
non-malleable coding scheme $\cC_1$.
\end{thm}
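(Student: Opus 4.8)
The plan is to analyze the decoder applied to $f(\enc(s))$ for an arbitrary bit-tampering function $f$, splitting into cases according to how $f$ acts on the seed portion $Z'$ (the first $\gamma_1 n$ bits) versus the body $C'$ (the last $n$ bits). Since $f$ is bit-wise, it decomposes as $f = (f_{\mathsf{seed}}, f_{\mathsf{body}})$ acting independently on these two blocks, and moreover $f_{\mathsf{body}}$ acts coordinate-wise; once we undo the (tampered) permutation $\bar\Pi$, it induces a coordinate-wise tampering on the permuted body whose restriction to each inner block $\bar C_i$ is again a bit-tampering function of $\zo^B$. First I would define $\cD_f$ in the natural way (as in Remark~\ref{rem:efficiency}): encode a uniform message, apply $f$, and output $\same$ or the decoded value accordingly; the goal is then to bound the statistical distance between $\distr(\dec(f(\enc(s))))$ and $\distr(\Copy(S', s))$.

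The main case analysis is driven by $f_{\mathsf{seed}}$. \textbf{Case A:} $f_{\mathsf{seed}}$ is the identity on $Z'$ --- equivalently, $\bar Z = Z$ and $\bar\Pi = \Pi$, so the decoder correctly inverts the permutation. Now everything is governed by $f_{\mathsf{body}}$, a coordinate-wise map on $\zo^n$, and because $\Pi$ is $\ell$-wise $\delta$-dependent for $\ell = \tilde\Omega(\gamma_1 r n)$, the $n_b$ inner blocks, after permutation, behave (up to error $\delta + \text{(block-independence error)}$) as if the coordinates tampered by $f_{\mathsf{body}}$ were placed uniformly at random. Partition the inner blocks into: those that $f_{\mathsf{body}}$ leaves entirely untouched (identity on all $B$ coordinates), those it freezes entirely to constants, and the rest. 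Using the cube property and the error-detection property of $\cC_0$ (Lemma~\ref{lem:properties}, items~\ref{prop:cube} and~\ref{prop:bottom}): any block in the ``rest'' category gets decoded to $\perp$ with probability $\geq 1/3$ independently across blocks (conditioned on the random placement), so if there are many such blocks the overall decoder outputs $\perp$ except with exponentially small probability, and that is a valid $\same$-free behavior. If instead almost all blocks are either entirely untouched or entirely frozen: the bounded-independence/distance properties of the outer LECSS $\cC_2$ take over. If $f_{\mathsf{body}}$ is globally the identity, the whole thing decodes to $s$, contributing $\same$ mass. If $f_{\mathsf{body}}$ freezes at least $\delta_2 n_2 / b$ blocks (or, more carefully, alters at least that many symbols of $S'$), then by minimum distance $\dec_2$ outputs $\perp$ with probability $1$. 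And if only a few blocks ($< t_2$ worth of symbols) are frozen and the rest are untouched, then $t_2$-wise independence of $\enc_2(s)$ makes the frozen symbols look uniform and independent of $s$, so the resulting decoded value (or $\perp$) is a distribution depending only on $f$. Assembling these sub-cases gives a $\cD_f$-consistent behavior in Case~A with error $O(\delta) + \exp(-\Omega(n_b)) + \exp(-\Omega(\alpha B))$ where the last term is the inner-code closeness error; all of these are $\exp(-\tilde\Omega(\gamma_0' r N / \log^3 N))$ with the right parameter choices (the $\log^3 N$ coming from the $\ell = k_1/\log n$ loss in Theorem~\ref{thm:permutation} together with the rate loss in the permutation seed).

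\textbf{Case B:} $f_{\mathsf{seed}}$ is not the identity. Here the key leverage is non-malleability of $\cC_1$ with error $\eps_1$: the decoded seed $\bar Z = \dec_1(f_{\mathsf{seed}}(\enc_1(Z)))$ has distribution $\eps_1$-close to $\Copy(\cD_{f_{\mathsf{seed}}}, Z)$, and since $f_{\mathsf{seed}}$ is not the identity (and we may as well assume, by subsuming the constant case, that $\cD_{f_{\mathsf{seed}}}$ puts no mass on $\same$ beyond an $\eps_1$ slack --- or handle $f_{\mathsf{seed}}$ constant separately, in which case $\bar\Pi$ is simply a fixed permutation independent of $Z$), the resulting $\bar\Pi$ is (up to $\eps_1$) independent of $Z$, hence of $\Pi$. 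So the decoder inverts by a permutation $\bar\Pi$ that is essentially independent of the permutation $\Pi$ actually used to scatter $C$. Conditioned on $\bar\Pi$, the composition $\bar\Pi^{-1} \circ \Pi$ is close to a uniformly random permutation (again using $\ell$-wise $\delta$-dependence, now applied to the composed permutation restricted to few coordinates), so each reconstructed block $\bar C_i$ is close to a uniformly random $B$-subset of the scattered-and-tampered codeword coordinates. I would argue that such a block is decoded to a valid inner codeword with probability bounded away from $1$ --- this follows from the cube property again (if $f_{\mathsf{body}}$ froze coordinates) together with the fact that a random $B$-window of a string consisting of many distinct inner codewords is, with constant probability, not itself a valid codeword; here one uses the relative distance of $\cC_0$ and a counting/anticoncentration argument. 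Hence many blocks are $\perp$ with constant probability each, independently, so $\dec$ outputs $\perp$ except with probability $\exp(-\Omega(n_b))$; combined with the $\eps_1$ term this gives total error $\eps_1 + \exp(-\Omega(n_b))$, and the output is $\perp$-dominated, consistent with a message-independent $\cD_f$.

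The main obstacle will be Case~B, specifically making precise the claim that when the decoder uses a ``wrong'' permutation, a random window of the scattered codeword fails to decode with constant probability \emph{per block} and that these failures are sufficiently independent across blocks to apply a Chernoff/Azuma bound --- the subtlety is that the windows overlap in their sources after permutation, so one needs the limited-independence parameter $\ell$ to be large enough to handle a union of $\Omega(n_b)$ windows simultaneously, which is exactly what forces the quantitative relationship between $\ell$, $n_b$, and the final error, and is the reason the seed (and hence $\cC_1$'s block length $n_1 = \gamma_1 n$) must be taken a constant fraction of $n$. A secondary technical point is bookkeeping the three sources of error ($\eps_1$ from $\cC_1$, $\delta \le \exp(-\ell)$ from the permutation generator, and $\exp(-\Omega(\alpha B))$ from the inner code) and verifying that with $B = O(\log^{2}(1/\gamma_0)/\gamma_0)$ constant and $\ell = \lceil \gamma_1 r n / \log n\rceil$, everything collapses into the stated bound $\eps \le \eps_1 + 2\exp(-\Omega(\gamma_0' r N/\log^3 N))$.
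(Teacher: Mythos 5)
Your overall map of the argument is correct: the key building blocks (non-malleability of $\cC_1$ protecting the seed; $\ell$-wise $\delta$-dependence of $\Pi$; the cube and error-detection properties of $\cC_0$; the distance and $t_2$-wise independence of the LECSS) are exactly the ingredients the paper uses, and your Case~A analysis (split into ``too many frozen bits,'' ``almost all identity,'' and ``many altered non-frozen bits'') closely matches the paper's Case~1 plus Cases~2.1~and~2.2.

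The real gap is the case split driving the seed analysis. You dichotomize on whether $f_{\mathsf{seed}}$ is the identity, and in the non-identity case you assert that $\bar\Pi$ is ``(up to $\eps_1$) independent of $Z$,'' parenthetically conceding that $\cD_{f_{\mathsf{seed}}}$ might put mass on $\same$ but hoping to absorb it into an $\eps_1$ slack. This step is false: non-malleability of $\cC_1$ says nothing about the magnitude of the $\same$-mass in $\cD_{f_{\mathsf{seed}}}$. A non-identity $f_{\mathsf{seed}}$ (say, flipping a single bit that $\cC_1$'s decoder usually ignores) can easily yield $\bar Z = Z$, hence $\bar\Pi = \Pi$, with probability close to $1$; in that regime your Case~B analysis (which needs $\bar\Pi$ essentially independent of $\Pi$) breaks. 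The paper avoids this by \emph{not} splitting on $f_{\mathsf{seed}}$; instead it applies Lemma~\ref{lem:nmc:joint} to $\cC_1$ to show that for \emph{any} $f_{\mathsf{seed}}$ the joint law of $(\Pi, \bar\Pi)$ is $\eps_1$-close to a convex combination $\alpha \cdot \distr(\Pi,\Pi) + (1-\alpha)\cdot \distr(\Pi,\Pi')$ with $\Pi'$ independent. Since this decomposition holds \emph{per tampering function} with an unknown $\alpha$, the distribution $\cD_f$ in the final theorem must itself be the corresponding mixture $\alpha\cD'_f + (1-\alpha)\cD''_f$ of the ``same-permutation'' and ``independent-permutation'' output distributions --- the two cases you analyze separately --- combined with matching weights. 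Without this convex-combination step, the two cases cannot be glued together, and the theorem does not follow from your case analysis as stated.

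A secondary point: in your Case~B you invoke a ``counting/anticoncentration'' argument about a random $B$-window not being a codeword. The paper's Case~3 uses a cleaner route: because $\Pi$ is $\ell$-wise $\delta$-dependent and $n\ge 32B^2$, the $B$ bits of a tampered block $\bar C_{\tau_i}$ pull (with probability $\ge 1 - 2B^2/n - \delta$) from $B$ \emph{distinct} inner blocks of $C$, each of whose bits is $\exp(-\Omega(\gamma_0 B))$-close to a fresh uniform bit (Lemma~\ref{lem:properties}, bounded independence), so by Proposition~\ref{prop:pBiased} the block is $1/4$-close to a sub-cube, at which point the cube property directly yields $\Pr[\dec_0 \neq \perp] \le 3/4$. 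You would still need an argument with that exact shape to make the constant-per-block failure probability and the approximate independence across blocks rigorous.
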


\begin{remark}
Dziembowski et al.~\cite[Definition~3.3]{ref:nmc} also introduce a ``strong'' variation of non-malleable codes
which implies the standard definition (Definition~\ref{def:nmCode}) but is more restrictive.
It can be argued that the stronger definition is less natural in the sense that an
error-correcting code that is able to fully correct the tampering incurred by the adversary
does not satisfy the stronger definition while it is non-malleable in the standard sense,
which is what naturally expected to be the case.
In this work, we focus on the standard definition and prove the results with respect
to Definition~\ref{def:nmCode}. However, it can be verified (by minor adjustments of the
proof of Theorem~\ref{thm:explicit})
that the construction of
this section satisfies strong non-malleability (without any loss in the parameters) as well
provided that the non-malleable code $(\enc_1, \dec_1)$ encoding the description
of the permutation $\Pi$ satisfies the strong definition.
\end{remark}

\subsubsection*{Proof of Theorem~\ref{thm:explicit}}

It is clear that, given $(Z', C')$, the decoder can unambiguously reconstruct
the message $s$; that is, $\dec(\enc(s)) = s$ with probability $1$. 
Thus, it remains to demonstrate non-malleability of $\enc(s)$
against bit-tampering adversaries. 

Fix any such adversary $f\colon \zo^N \to \zo^N$.
The adversary $f$ defines the following partition of $[N]$:

\newcommand{\fr}{\mathsf{Fr}}
\newcommand{\fl}{\mathsf{Fl}}
\newcommand{\id}{\mathsf{Id}}

\begin{itemize}
\item $\fr \subseteq [N]$; the set of positions frozen to either zero or one by $f$.

\item $\fl \subseteq [N] \setminus \fr$; the set of positions flipped by $f$.

\item $\id = [N] \setminus (\fr \cup \fl)$; the set of positions left unchanged by $f$.
\end{itemize}

Since $f$ is not the identity function (otherwise, there is nothing to prove),
we know that $\fr \cup \fl \neq \emptyset$.


We use the notation used in the description of the encoder $\enc$ and decoder $\dec$ for
various random variables involved in the encoding and decoding of the message $s$.
In particular, let $(\bar{Z'}, \bar{C'}) = f(Z', C')$ denote the perturbation of $\enc(s)$ by the
adversary, and let $\bar{\Pi} := \Perm(\dec_1(\bar{Z'}))$ be the induced
perturbation of $\Pi$ as viewed by the decoder $\dec$. In general
$\Pi$ and $\bar{\Pi}$ are correlated random variables, but independent of
the remaining randomness used by the encoder.

We first distinguish three cases and subsequently use a convex combination argument to show that
the analysis of these cases
suffices to guarantee non-malleability in general. The first case
considers the situation where the adversary freezes too many bits of the encoding.
The remaining two cases can thus assume that a sizeable fraction of the bits
are not frozen to fixed values.

\subsection*{Case 1: Too many bits are frozen by the adversary.}

First, assume that $f$ freezes at least $n-t_2/b$ of the $n$ bits of
$C'$. In this case, we show that the distribution of 
$\dec(f(Z', C'))$ is always independent of the message $s$ and thus
the non-malleability condition of Definition~\ref{def:nmCode}
is satisfied for the chosen $f$. In order to achieve this goal, we
rely on bounded independence property of the LECSS code $\cC_2$.
We remark that a similar technique has been used in \cite{ref:nmc} for their
construction of non-malleable codes (and for the case where the adversary freezes
too many bits).

Observe that the joint distribution of $(\Pi, \bar{\Pi})$
is independent of the message $s$. Thus it suffices to show that 
conditioned on any realization $\Pi = \pi$ and $\bar{\Pi} = \bar{\pi}$, for any
fixed permutations $\pi$ and $\bar{\pi}$, the conditional distribution of 
$\dec(f(Z', C'))$ is independent of the message $s$.

We wish to understand how, with respect to the particular permutations
defined by $\pi$ and $\bar{\pi}$, the adversary
acts on the bits of the inner code blocks $C = (C_1, \ldots, C_{n_b})$. 

Consider the set $T \subseteq [n_b]$ of the 
blocks of $C=(C_1, \ldots, C_{n_b})$ (as defined in the algorithm for $\enc$)
that are not completely frozen by $f$ (after permuting the
action of $f$ with respect to the fixed choice of $\pi$). We know
that $|T| \leq t_2/b$.

Let $S'_T$ be the string $S' = (S'_1, \ldots, S'_{n_b})$ (as defined in the algorithm for $\enc$) restricted
to the blocks defined by $T$; that is,
$S'_T := (S'_i)_{i \in T}$. Observe that the length of $S'_T$ is at most
$b |T| \leq t_2$. From the $t_2$-wise independence property of the LECSS code
$\cC_2$, and the fact that the randomness of $\enc_2$ is independent of
$(\Pi, \bar{\Pi})$, we know that $S'_T$ is a uniform string, and in particular,
independent of the original message $s$. Let $C_T$ be the restriction
of $C$ to the blocks defined by $T$; that is,
$C_T := (C_i)_{i \in T}$. Since $C_T$ is generated from $S_T$ (by applying
the encoder $\enc_0$ on each block, whose randomness is independent of 
$(\Pi, \bar{\Pi})$),
we know that the distribution of $C_T$ is independent of the original
message $s$ as well.

Now, observe that $\dec(f(Z', C'))$ is only a function of $T$, $C_T$, the tampering
function $f$ and the fixed choices of $\pi$ and $\bar{\pi}$ 
(since the bits of $C$ that are not picked by $T$ are frozen 
to values determined by the tampering function $f$),
which are all independent of the message $s$. 
Thus in this case, $\dec(f(Z', C'))$
is independent of $s$ as well. This suffices to prove non-malleability of the code in this case.
In particular, in Definition~\ref{def:nmCode}, we can take $\cD_f$ to be
the distribution of $\dec(f(Z', C'))$ for an arbitrary message and satisfy the definition
with zero error.

\subsection*{Case 2: The adversary does not alter $\Pi$.}

In this case, we assume that $\Pi = \bar{\Pi}$, both distributed
according to $\Perm(\cU_{k_1})$ and independently of the remaining
randomness used by the encoder.
This situation in particular occurs if the adversary leaves the part of the encoding corresponding
to $Z'$ completely unchanged.
We furthermore assume that Case~1 does not occur;
i.e., more than $t_2/b = \gamma'_2 n_2/b$ bits of $C'$ are not frozen by the adversary.
To analyze this case, we rely
on bounded independence of the permutation $\Pi$. The effect of the
randomness of $\Pi$ is to prevent the adversary from gaining any advantage 
of the fact that the inner code independently acts on the individual blocks.

Let $\id' \subseteq \id$ be the positions of $C'$ that are left unchanged by $f$.
We know that $|\id' \cup \fl| > t_2/b$.
Moreover, the adversary freezes the bits of 
$C$ corresponding to the positions in $\Pi^{-1}(\fr)$ and either flips
or leaves the rest of the bits of $C$ unchanged.
We consider two sub-cases.

\subsubsection*{Case 2.1: $|\id'| > n - \delta_2 n_b$}
In this case, all but less than $\delta_2 n_b$ of the
inner code blocks are decoded to the correct values by the decoder.
Thus, the decoder correctly reconstructs all but less than
$b(n - |\id'|) \leq \delta_2 n_2$ bits of $S'$. Now, the distance property
of the LECSS code $\cC_2$ ensures that occurrence of any errors in $S'$
can be detected by the decoder.

Let $T_0 \subseteq [n_b]$ be the set of blocks of $C$ that are affected
by the action of $f$ (that is, those blocks in which there is a position
$i \in [n]$ where $\Pi(i) \notin \id$), and $T_1 \subseteq [n_2]$ (resp., 
$T_2 \subseteq [n]$) be the coordinate positions of $S'$ (resp., $C$)
contained in the blocks defined by $T_0$. Observe that $|T_0| < \delta_2 n_b$,
$|T_1| = b |T_0| < \delta_2 n_2$ and  $|T_2| = B |T_0|$.

The bounded independence property of $\cC_2$
ensures that the restriction of $S'$ to the positions in $T_1$
is uniformly distributed, provided that
\begin{equation} \label{eqn:explicit:assumption:gamma}
\gamma'_2 \geq \delta_2
\end{equation}
that we will assume in the sequel. Consequently, the restriction of $C$
to the positions in $T_2$ has the exact same distribution regardless
of the encoded message $s$. 

Recall that the decoder either outputs 
the correct message $s$ or $\perp$, and the former happens if
and only if $S'$ is correctly decoded at the positions in $T_1$.
This event (that is, $\bar{S}'|_{T_1} = S'|_{T_1}$) is independent of the encoded message $s$,
since the estimate $\bar{S}'|_{T_1}$ is completely determined by 
$S'|_{T_1}$, $\Pi$, and $f$, which are all independent of $s$. Thus,
the probability of the decoder outputting $\perp$ is the same
regardless of the message $s$. Since the decoder either outputs the
correct $s$ or $\perp$, we can conclude non-malleability of the code
in this case is achieved with zero error and a distribution
$\cD_f$ that is only supported on $\{ \same, \perp\}$.

\subsubsection*{Case 2.2: $|\id'| \leq n - \delta_2 n_b$}
In this case, we have $|\fr \cup \fl| \geq \delta_2 n_2/b$.
Moreover, we fix randomness of the LECSS $\cC_2$ so that $S'$ becomes a
fixed string. Recall that $C_1, \ldots, C_{n_b}$ are independent random variables,
since every call of the inner encoder $\enc_0$ uses fresh randomness.
In this case, our goal is to show that the decoder outputs $\perp$
with high probability, thus ensuring non-malleability by choosing $\cD_f$
to be the singleton distribution on $\{ \perp \}$.

Since $\Pi = \bar{\Pi}$, the decoder is able to correctly identify positions
of all the inner code blocks determined by $C$. In other words, we have
\[
\bar{C} = f'(C),
\]
where $f'$ denotes the adversary obtained from $f$ by permuting its
action on the bits as defined by $\Pi^{-1}$; that is,
\[
f'(x) := \Pi^{-1}(f(\Pi(x))).
\]

Let $i \in [n_b]$. We consider the dependence between $C_i$ and
its tampering $\bar{C}_i$, conditioned on the knowledge of $\Pi$
on the first $i-1$ blocks of $C$. Let $C(j)$ denote the $j$th
bit of $C$, so that the $i$th block of $C$ becomes
$(C(1+(i-1)B), \ldots, C(iB))$.  
For the moment, assume that $\delta = 0$; that is, $\Pi$ is exactly
a $\ell$-wise independent permutation. 

Suppose $i B \leq \ell$, meaning
that the restriction of $\Pi$ on the $i$th block (i.e., 
$(\Pi(1+(i-1)B), \ldots, \Pi(iB))$ conditioned on any fixing of
$(\Pi(1), \ldots, \Pi((i-1)B))$ exhibits the same distribution
as that of a uniformly random permutation.

We define events $\cE_1$ and $\cE_2$ as follows.
$\cE_1$ is the event that $\Pi(1+(i-1)B) \notin \id'$, 
and $\cE_2$ is the event that $\Pi(2+(i-1)B) \notin \fr$.
That is, $\cE_1$ occurs when the adversary does not
leave the first bit of the $i$th block of $C$ intact,
and $\cE_2$ occurs when the adversary does not freeze
the second bit of the $i$th block. We are interested
in lower bounding the probability that both
$\cE_1$ and $\cE_2$ occur, conditioned on any 
particular realization of $(\Pi(1), \ldots, \Pi((i-1)B))$.

Suppose the parameters are set up so that
\begin{equation} \label{eqn:explicit:assumption:a}
\ell \leq \frac{1}{2} \min\{ \delta_2 n_2/b, \gamma'_2 n_2/b \}.
\end{equation}

Under this assumption, even conditioned on any fixing of
$(\Pi(1), \ldots, \Pi((i-1)B))$, we can ensure that
\[
\Pr[\cE_1] \geq \delta_2 n_2/(2 b n),
\]
and
\[
\Pr[\cE_2 | \cE_1] \geq \gamma'_2 n_2/(2 b n),
\]
which together imply
\begin{equation} \label{eqn:explicit:gammaPP}
\Pr[\cE_1 \land \cE_2] \geq \delta_2 \gamma'_2 \Big(\frac{n_2}{2 b n}\Big)^2
=: \gamma''_2.
\end{equation}
We let $\gamma''_2$ to be the right hand side of the above inequality.

In general, when the random permutation is $\ell$-wise $\delta$-dependent 
for $\delta \geq 0$, the above lower bound can only
be affected by $\delta$. Thus, under the assumption that
\begin{equation} \label{eqn:explicit:assumption:b}
\delta \leq \gamma''_2/2,
\end{equation}
we may still ensure that 
\begin{equation} \label{eqn:explicit:slightChance}
\Pr[\cE_1 \land \cE_2] \geq \gamma''_2/2.
\end{equation}

Let $X_i \in \zo$ indicate the event that
$\dec_0(\bar{C}_i) = \perp$. We can write
\[
\Pr[X_i = 1] \geq \Pr[X_i = 1 | \cE_1 \land \cE_2] \Pr[\cE_1 \land \cE_2]
\geq (\gamma''_2/2) \Pr[X_i = 1 | \cE_1 \land \cE_2],
\]
where the last inequality follows from \eqref{eqn:explicit:slightChance}.
However, by property~\ref{prop:bottom} of Lemma~\ref{lem:properties}
that is attained by the inner code $\cC_0$, we also know that
\[
\Pr[X_i = 1 | \cE_1 \land \cE_2] \geq 1/3,
\]
and therefore it follows that
\begin{equation} \label{eqn:explicit:slightChance:b}
\Pr[X_i = 1] \geq \gamma''_2/6.
\end{equation}
Observe that by the argument above, \eqref{eqn:explicit:slightChance:b}
holds even conditioned on the realization of the permutation $\Pi$ 
on the first $i-1$ blocks of $C$. By recalling that we have
fixed the randomness of $\enc_2$, and that each inner block is
independently encoded by $\enc_0$, we can deduce that, letting
$X_0 := 0$,
\begin{equation} \label{eqn:explicit:slightChance:c}
\Pr[X_i = 1 | X_0, \ldots, X_{i-1}] \geq \gamma''_2/6.
\end{equation}
Using the above result for all $i \in \{1, \ldots, \lfloor \ell/B \rfloor \}$,
we conclude that
\begin{align}
\Pr[\dec(\bar{Z'}, \bar{C'}) \neq \perp] &\leq
\Pr[X_1 = X_2 = \cdots = X_{\lfloor \ell/B \rfloor} = 0] 
\label{eqn:explicit:slightChance:d} \\
&\leq \Big(1-\gamma''_2/6\Big)^{\lfloor \ell/B \rfloor},
\label{eqn:explicit:slightChance:e}
\end{align}
where \eqref{eqn:explicit:slightChance:d} holds since the
left hand side event is a subset of the right hand side event,
and \eqref{eqn:explicit:slightChance:e} follows from
\eqref{eqn:explicit:slightChance:c} and the chain rule.

Thus, by appropriately setting the parameters as we will do later,
we can ensure that the decoder outputs $\perp$ with high probability.
This ensures non-malleability of the code in this case with the choice of
$\cD_f$ in Definition~\ref{def:nmCode} being entirely supported on $\{ \perp \}$ and error
bounded by \eqref{eqn:explicit:slightChance:e}.

\subsection*{Case 3: The decoder estimates an independent permutation.}

In this case, we consider the event that $\bar{\Pi}$ attains a particular
value $\bar{\pi}$. Suppose it so happens that under this conditioning, the distribution
of $\Pi$ remains unaffected; that is, $\bar{\Pi} = \pi$ and $\Pi \sim \Perm(\cU_{k_1})$. 
This situation may occur if the adversary completely freezes the part of the encoding corresponding
to $Z'$ to a fixed valid codeword of $\cC_1$.
Recall that
the random variable $\Pi$ is determined by the random string $Z$ and that it is
independent of the remaining randomness used by the encoder $\enc$.
Similar to the previous case, 
our goal is to upper bound the probability that $\dec$ does not output $\perp$.
Furthermore, we can again assume that Case~1 does not occur;
i.e., more than $t_2/b$ bits of $C'$ are not frozen by the adversary.
For the analysis of this case, we can fix the randomness of $\enc_2$ and 
thus assume that $S'$ is fixed to a particular value.

As before, our goal is to determine how each block $C_i$ of the inner code
is related to its perturbation $\bar{C}_i$ induced by the adversary.
Recall that
\[
\bar{C} = \bar{\pi}^{-1}(f(\Pi(C))).
\]
Since $f$ is fixed to an arbitrary choice only with restrictions on the
number of frozen bits, without loss of generality we can assume that
$\bar{\pi}$ is the identity permutation (if not, permute the action of
$f$ accordingly), and therefore, $\bar{C'} = \bar{C}$ (since $\bar{C'} = \bar{\pi}(\bar{C})$), and
\[
\bar{C} = f(\Pi(C)).
\]
For any $\tau \in [n_b]$, let $f_\tau\colon \zo^B \to \zo^B$ denote the restriction of the
adversary to the positions included in the $\tau$th block of $\bar{C}$.
%

Assuming that 
$\ell \leq t_2$ (which is implied by \eqref{eqn:explicit:assumption:a}),
let $T \subseteq [n]$ be any set of 
size $\lfloor \ell/B \rfloor \leq \lfloor t_2/B \rfloor \leq t_2/b$ of the
coordinate positions of $C'$ that are either left unchanged or flipped
by $f$. Let $T' \subseteq [n_b]$ (where $|T'| \leq |T|$) be the 
set of blocks of $\bar{C}$ that contain the positions picked by $T$.
With slight abuse of notation, for any $\tau \in T'$, denote by $\Pi^{-1}(\tau) \subseteq [n]$
the set of indices of the positions belonging to the block $\tau$ 
after applying the permutation $\Pi^{-1}$ to each one of them. In other words,
$\bar{C}_{\tau}$ (the $\tau$th block of $\bar{C}$) is determined by
taking the restriction of $C$ to the bits in $\Pi^{-1}(\tau)$ (in their
respective order), and applying
$f_\tau$ on those bits (recall that for $\tau \in T'$ we are guaranteed that $f_\tau$
does not freeze all the bits).

In the sequel, our goal is to show that with high probability, $\dec(\bar{Z}, \bar{C'}) = \perp$.
In order to do so, we first assume that $\delta = 0$; i.e., that $\Pi$ is exactly
an $\ell$-wise independent permutation. 
Suppose $T' = \{ \tau_1, \ldots, \tau_{|T'|} \}$, and consider any $i \in |T'|$.

We wish to lower bound the probability that $\dec_0(\bar{C}_{\tau_i}) = \perp$,
conditioned on the knowledge of $\Pi$ on the first $i-1$ blocks in $T'$.
Subject to the conditioning, the values of $\Pi$ becomes known on up to
$(i-1)B \leq (|T'|-1)B \leq \ell-B$ points. Since $\Pi$ is $\ell$-wise independent,
$\Pi$ on the $B$ bits belonging to the $i$th block remains $B$-wise
independent. Now, assuming
\begin{equation} \label{eqn:assumption:ell}
\ell \leq n/2,
\end{equation}
we know that even subject to the knowledge of $\Pi$ on any $\ell$ positions of $C$, 
the probability that a uniformly random element within the remaining positions
falls in a particular block of $C$ is at most $B/(n-\ell) \leq 2B/n$.

Now, for $j \in \{2, \ldots, B\}$, consider the $j$th position of the block $\tau_i$ in $T'$.
By the above argument, the probability that $\Pi^{-1}$ maps this element to a block of $C$ chosen by
any of the previous $j-1$ elements is at most $2B/n$. By a union bound on the choices of $j$,
with probability at least
\[
1-2B^2/n,
\]
the elements of the block $\tau_i$ all land in distinct blocks of $C$ by the
permutation $\Pi^{-1}$. Now we observe that if $\delta > 0$, the above probability
is only affected by at most $\delta$.
Moreover, if the above distinctness property occurs, the values of $C$
at the positions in $\Pi^{-1}(\tau)$ become independent random bits; since
$\enc$ uses fresh randomness upon each call of $\enc_0$ for encoding different
blocks of the inner code (recall that the randomness of the first layer using $\enc_2$
is fixed). 

Recall that by the bounded independence property of $\cC_0$ (i.e., property~\ref{prop:indep} 
of Lemma~\ref{lem:properties}), each individual bit of $C$ is $\exp(-\Omega(\gamma_0 B))$-close to
uniform. Therefore, using
Proposition~\ref{prop:pBiased},
with probability at least $1-2B^2/n-\delta$ (in particular, at least $7/8$ when
\begin{equation} \label{eqn:assumption:N}
n \geq 32 B^2
\end{equation}
and assuming $\delta \leq 1/16$) 
we can ensure that the distribution of $C$ restricted to positions
picked by $\Pi^{-1}(\tau)$ is $O(B \exp(-\Omega(\gamma_0 B)))$-close to uniform,
or in particular $(1/4)$-close to uniform when $B$ is larger than a suitable constant.
If this happens, we can conclude that distribution of the block $\tau_i$ of $\bar{C}$ 
is $(1/4)$-close to a sub-cube with at least one random bit (since we have
assumed that $\tau \in T'$ and thus $f$ does not fix all the bit of the $\tau$th block). 
Now, the cube property of $\cC_0$ (i.e., property~\ref{prop:cube} of Lemma~\ref{lem:properties})
implies that
\[
\Pr_{\enc_0}[\dec_0(\bar{C}_{\tau_i}) \neq \perp | \Pi(\tau_1), \ldots, \Pi(\tau_{i-1})] \leq 1/2 + 1/4 = 3/4,
\]
where the extra term $1/4$ accounts for the statistical distance of
$\bar{C}_{\tau_i}$ from being a perfect sub-cube.

Finally, using the above probability bound, and running $i$ over all the blocks in $T'$,
and recalling the assumption that $\bar{C} = \bar{C'}$, we deduce that
\begin{equation}
\label{eqn:explicit:indep:failure}
\Pr[\dec(\bar{Z'}, \bar{C'}) \neq \perp] \leq (7/8)^{|T'|} \leq \exp(-\Omega(\ell/B^2)),
\end{equation}
where the last inequality follows from the fact that
$|T'| \geq \lfloor \ell/b \rfloor /B$.

In a similar way to Case~2.2 above, this concludes 
non-malleability of the code in this case with the choice of
$\cD_f$ in Definition~\ref{def:nmCode} being entirely supported on $\{ \perp \}$ and error
bounded by the right hand side of \eqref{eqn:explicit:indep:failure}.

\subsection*{The general case.}
Recall that Case~1 eliminates the situation in which the adversary freezes too many
of the bits. For the remaining cases, Cases 2~and~3 consider the special situations
where the two permutations $\Pi$ and $\bar{\Pi}$ used by the encoder and
the decoder either completely match or are completely independent. 
However, in general we may not reach any of the two cases. Fortunately,
the fact that the code $\cC_1$ encoding the permutation $\Pi$ is non-malleable
ensures that we always end up with a \emph{combination} of the Case~2 and 3.
In other words, in order to analyze any event depending on the joint distribution
of $(\Pi, \bar{\Pi})$, it suffices to consider the two special cases where
$\Pi$ is always the same as $\bar{\Pi}$, or when $\Pi$ and $\bar{\Pi}$ are
fully independent.

The joint distribution of $(\Pi, \bar{\Pi})$ may be understood using 
Lemma~\ref{lem:nmc:joint}. Namely, the lemma applied on the non-malleable
code $\cC_1$ implies that the joint distribution of $(\Pi, \bar{\Pi})$ is
$\eps_1$-close (recall that $\eps_1$ is the error of non-malleable code $\cC_1$) to the convex combination
\begin{equation} \label{eqn:c1:convex}
\alpha \cdot \distr(\Pi, \Pi) + (1-\alpha) \cdot \distr(\Pi, \Pi'), 
\end{equation}
for some parameter $\alpha \in [0,1]$ and an independent random variable $\Pi'$ distributed over $\cS_n$.

For a random variable $\bar{P}$ jointly distributed with $\Pi$ over $\cS_n$,
and with a slight overload of notation, define the random variable $D_{s,\bar{P}}$ 
over $\zo^k \cup \{\perp\}$ as the output of the following experiment
(recall that $s \in \zo^k$ is the message to be encoded):

\begin{enumerate}
\item Let $(Z', C') := \enc(s)$ be the encoding $\enc(s)$, as described in Section~\ref{sec:encoder}),
and $(\bar{Z}', \bar{C}') = f(Z', C')$ be the corrupted codeword under the adversary $f$.

\item Apply the decoder's procedure, described in Section~\ref{sec:decoder}, where in the second
line of the procedure the assignment $\bar{\Pi} := \Perm(\bar{Z})$ is replaced with $\bar{\Pi} := \bar{P}$,
and output the result.
\end{enumerate}

Intuitively, $D_{s, \bar{P}}$ captures decoding of the perturbed codeword when
the decoder uses an arbitrary estimate $\bar{P}$ (given in the subscript) of the random permutation $\Pi$
instead of reading it off the codeword (i.e., instead of $\bar{\Pi} := \Perm(\bar{Z})$ defined
by the decoder's procedure). Using this notation, $D_{s, \bar{\Pi}}$ (that is, when the choice of $\bar{P}$
is indeed the natural estimate $\Perm(\bar{Z})$) is the same as $\dec(f(\enc(s)))$. 

Define $D_s := D_{s, \bar{\Pi}}$, $D'_s := D_{s, \Pi}$ and $D''_s := D_{s, \Pi'}$. 
The results obtained in Cases 2~and~3 can be summarized as follows:
\begin{itemize}
\item (Case~2): There is a distribution $\cD'_f$ over $\zo^k \cup \{ \same, \perp \}$ such that the statistical distance between the 
distribution of $D'_s$ and $\Copy(\cD'_f, s)$ (that is, the distribution obtained by
reassigning the mass of $\same$ in $\cD'_f$ to $s$) is at most
\[
\Big(1-\gamma''_2/6\Big)^{\lfloor \ell/B \rfloor}
\]
(in fact, $\cD'_s$ is entirely supported on $\{ \perp, \same \}$).
 
\item (Case~3): There is a distribution $\cD''_f$ over $\zo^k \cup \{ \same, \perp \}$ such that the statistical distance between the 
distribution of $D''_s$ and $\Copy(\cD''_f, s)$ is at most
\[
\exp(-\Omega(\ell/B^2))
\]
(in fact, $\cD''_s$ is the distribution entirely supported on $\{ \perp \}$).
 
\end{itemize}

The convex decomposition \eqref{eqn:c1:convex} implies that the distribution of $D_s$
may be decomposed as a convex combination as well, that is (recalling that
the action of $f$ on the part of the codeword corresponding to $C'$ is independent
of $Z'$ and that the randomness used by $\enc_1$ is independent of the
randomness used by $\enc_2$ and each invocation of $\enc_0$),

\begin{equation} \label{eqn:ds:conex}
\distr(D_s) \approx_{\eps_1} \alpha \distr(D'_s) + (1-\alpha) \distr(D''_s).
\end{equation}
Now we set 
\[
\cD_f := \alpha \cD'_f + (1-\alpha) \cD''_f
\]
and define
\begin{equation}
\label{eqn:explicit:errorBound}
{\eps'} := \Big(1-\gamma''_2/6\Big)^{\lfloor \ell/B \rfloor} +
\exp(-\Omega(\ell/B^2)) + \eps_1.
\end{equation}

From the above observations, it follows that the distribution of $D_s$ (equivalently,
$\dec(f(\enc(s)))$) is ${\eps'}$-close to $\Copy(\cD_f, s)$. This proves 
non-malleability of the code in the general case with error bounded by ${\eps'}$.

\subsection*{Setting up the parameters} 

The final encoder $\enc$ maps $k$ bits into 
\[
\Big(\frac{k}{1-\gamma_2} \cdot \frac{1}{1-\gamma_0}\Big) (1+\gamma_1)
\]
bits. Thus the rate $r$ of the final code is
\[
r = \frac{(1-\gamma_0)(1-\gamma_2)}{1+\gamma_1}.
\]
We set up $\gamma_1, \gamma_2 \in [\gamma_0/2, \gamma_0]$ so as to
ensure that 
\[
r \geq 1-O(\gamma_0).
\]
Thus, the rate of the final code can be made arbitrarily close to $1$
if $\gamma_0$ is chosen to be a sufficiently small constant.

Before proceeding with the choice of other parameters, we recap the constraints
that we have assumed on the parameters; namely, 
\eqref{eqn:explicit:assumption:gamma},
\eqref{eqn:assumption:N:recap},
\eqref{eqn:explicit:assumption:a:recap}, 
\eqref{eqn:assumption:ell}, 
\eqref{eqn:explicit:assumption:b} (where we recall that 
$\gamma''_2 = \delta_2 \gamma'_2 (\frac{n_2}{2 b n})^2$)
which are again listed below to assist the reader.

\begin{gather} 
\gamma'_2 \geq \delta_2 \label{eqn:explicit:assumption:gamma:recap} \\
n \geq 32 B^2, \label{eqn:assumption:N:recap} \\
\ell \leq \frac{1}{2} \min\{ \delta_2 n_2/b, \gamma'_2 n_2/b \}, \label{eqn:explicit:assumption:a:recap} \\
\ell \leq n/2, \label{eqn:assumption:ell:recap} \\
\delta \leq \gamma''_2/2, \label{eqn:explicit:assumption:b:recap} 
\end{gather}

For the particular choice of $\gamma_0$, there is a constant 
\begin{equation} \label{eqn:explicit:B}
B=O((\log^2 \gamma_0)/\gamma_0)
\end{equation} for which
Lemma~\ref{lem:properties} holds. 

Note that the choice of $B$ only depends on the constant $\gamma_0$.
If desired, a brute-force search\footnote{Alternatively, it is possible to
sample a random choice for $\cC_0$ and then verify that it satisfies properties
of Lemma~\ref{lem:properties}, thereby obtaining a Las Vegas construction
which is more efficient (in terms of the dependence on the constant $\gamma_0$) than a 
brute-force search. The construction would be
even more efficient in Monte Carlo form; i.e., if one avoids verification of 
the candidate $\cC_0$.
} can thus find an explicit choice for the inner
code $\cC_0$ in time only depending on $\gamma_0$. Moreover, \eqref{eqn:assumption:N:recap}
can be satisfied as long as $N \geq N_0$ for some $N_0 = \poly(1/\gamma_0)$.

Now, for the assumed value for the constant $\gamma_2 \approx \gamma_0$, one
can use Corollary~\ref{coro:lecss} and set up 
$\cC_2$ to be an 
$(\Omega(\gamma_0 n_2/\log n_2), \Omega(\gamma_0 n_2/\log n_2))$-linear error-correcting secret sharing code.
Thus, we may assume that
$\delta_2 = \gamma'_2 = \Omega(\gamma_0 / \log N)$ (since, trivially,
$n_2 \leq N$) and also satisfy \eqref{eqn:explicit:assumption:gamma:recap}.

Finally, using Theorem~\ref{thm:permutation} we can set up 
$\Perm$ so that $\ell = \Omega(\gamma_1 r n/\log n)
= \Omega(\gamma_0 r n/\log n)$ and
$\delta \leq 1/n^\ell$. 
We can lower the value of $\ell$ if necessary (since an $\ell$-wise
$\delta$-dependent permutation is also an $\ell'$-wise
$\delta$-dependent permutation for any $\ell' \leq \ell$)
so as to ensure that
$\ell = \Omega(\gamma_0 r n/(B \log n))$ and
the assumptions \eqref{eqn:explicit:assumption:a:recap}~and~
\eqref{eqn:assumption:ell:recap} are satisfied
(recall that $n_2/b = n_b = n/B$ and $r \leq 1$).
Observe that our choices of the parameters implies that
the quantity $\gamma''_2$ defined in \eqref{eqn:explicit:gammaPP}
satisfies $\gamma''_2 = \Omega(\gamma_0^2 / (B \log N)^2)$.
We see that the choice of $\delta$ is small enough to satisfy
the assumption~\eqref{eqn:explicit:assumption:b:recap}.

By our choice of the parameters, the upper bound on the failure
probability in \eqref{eqn:explicit:slightChance:e} is
\begin{equation} \label{eqn:explicit:failure:a}
\Big(1-\gamma''_2/6\Big)^{\lfloor \ell/B \rfloor} =
\exp\Big(-\Omega\Big(\frac{\gamma_0^3 r N}{B^3 \log^3 N}\Big)\Big),
\end{equation}
which can be seen by recalling the lower bound on 
$\gamma''_2$ and the fact that $N=n(1+\gamma_1) \in [n, 2n]$.

On the other hand, the upper bound on the failure probability
in \eqref{eqn:explicit:indep:failure} can be written as
\begin{equation} \label{eqn:explicit:failure:b}
 \exp(-\Omega(\ell/B^2)) =
 \exp\Big(-\Omega\Big(\frac{\gamma_0 r N}{B^3 \log N}\Big)\Big),
\end{equation}
which is dominated by the estimate in \eqref{eqn:explicit:failure:a}.


Now we can substitute the upper bound \eqref{eqn:explicit:B} on $B$
to conclude that \eqref{eqn:explicit:failure:a} is at most
\[
\exp\Big(-\Omega\Big(\frac{\gamma_0^6 r N}{\log^6(1/\gamma_0) \log^3 N}\Big)\Big)
= \exp\Big(-\Omega\Big(\frac{\gamma'_0 r N}{\log^3 N}\Big)\Big),
\]
where
\[
\gamma'_0 := (\gamma_0 / \log(1/\gamma_0))^6.
\]

We conclude that the error of the final coding scheme $(\enc, \dec)$
which is upper bounded by ${\eps'}$ as defined in \eqref{eqn:explicit:errorBound}
is at most
\[
\eps_1 + 2 \exp\Big(-\Omega\Big(\frac{\gamma'_0 r N}{\log^3 N}\Big)\Big).
\]

\subsection{Instantiations} \label{sec:instantiations}

We present two possible choices for the non-malleable code $\cC_1$
based on existing constructions.
The first construction, due to Dziembowski et al.~\cite{ref:nmc},
is a Monte Carlo result that is summarized below.

\begin{thm} \cite[Theorem~4.2]{ref:nmc} \label{thm:DPW}
For every integer $n > 0$, there is an efficient coding scheme
$\cC_1$ of block length $n$, rate at least $.18$, that is non-malleable
against bit-tampering adversaries achieving error $\eps = \exp(-\Omega(n))$.
Moreover, there is an efficient randomized algorithm that,
given $n$, outputs a description of such a code with probability
at least $1-\exp(-\Omega(n))$. 
\end{thm}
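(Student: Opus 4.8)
The plan is to build $\cC_1$ as a two-layer randomized code: an inner \emph{algebraic manipulation detection} (AMD) code in the style of Cramer et al.~\cite{ref:CDFPW08}, of constant rate and detection error $\exp(-\Omega(n))$, mapping a $k$-bit message to an $m$-bit string $\enc_{\mathrm{amd}}(s)$ with $m-k=\Theta(n)$; and an outer $(d,t)$-LECSS $(\enc_{\mathrm{L}},\dec_{\mathrm{L}})$ as in Definition~\ref{def:lecss}, of block length $n$ and message length $m$, with relative minimum distance $d/n$ and relative independence $t/n$ both positive constants. Set $\enc_1(s):=\enc_{\mathrm{L}}(\enc_{\mathrm{amd}}(s))$ and $\dec_1:=\dec_{\mathrm{amd}}\circ\dec_{\mathrm{L}}$, where $\dec_{\mathrm{L}}$ outputs $\perp$ on non-codewords and $\dec_{\mathrm{amd}}(\perp)=\perp$. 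A uniformly random linear code meets its Gilbert--Varshamov minimum distance, and a random ``randomness subspace'' has dual distance exceeding $t$ (which is exactly what makes $\enc_{\mathrm{L}}(\cdot)$ $t$-wise independent), each except with probability $\exp(-\Omega(n))$; a union bound over these finitely many events yields the Monte Carlo guarantee, while the AMD code may be taken explicit. Encoding (sampling a uniform codeword from a coset) and decoding (a membership test followed by AMD decoding) are linear-algebraic, hence efficient. Treating the relative distance, the relative independence, and the AMD redundancy fraction as free constants and choosing them to keep every error term $\exp(-\Omega(n))$ while maximizing $k/n$ is a finite optimization, and it is this optimization that produces the bound $k/n\ge 0.18$.

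For non-malleability, fix a bit-tampering $f$ and, exactly as in the proof of Theorem~\ref{thm:explicit}, split $[n]$ into the frozen set $\mathsf{Fr}$, the flipped set $\mathsf{Fl}$, and the untouched set $\mathsf{Id}$; then $f(\enc_1(s))=\enc_1(s)\oplus M$, where $M$ is supported on $\mathsf{Fr}\cup\mathsf{Fl}$, is a fixed string on $\mathsf{Fl}$, and on $\mathsf{Fr}$ equals $\enc_1(s)|_{\mathsf{Fr}}$ XORed with the frozen pattern (so $M|_{\overline{\mathsf{Fr}}}$ is fixed while $M|_{\mathsf{Fr}}$ is random). I would then case on $|\mathsf{Fr}|$. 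If $|\mathsf{Fr}|>n-t$, then $\overline{\mathsf{Fr}}$ has fewer than $t$ coordinates, so $t$-wise independence makes $\enc_1(s)|_{\overline{\mathsf{Fr}}}$ uniform irrespective of $s$; hence the law of $f(\enc_1(s))$, and of $\dec_1(f(\enc_1(s)))$, does not depend on $s$, and $\cD_f$ may be taken to be that law, with zero error. If $|\mathsf{Fr}|\le n-t$, I split further on whether the tampered word is an outer codeword: if $M\notin W$ (the LECSS code) then $\dec_{\mathrm{L}}$, hence $\dec_1$, returns $\perp$; if $M\in W$ then by linearity (property~3 of Definition~\ref{def:lecss}) $\dec_{\mathrm{L}}(f(\enc_1(s)))=\enc_{\mathrm{amd}}(s)\oplus\dec_{\mathrm{L}}(M)$, so when $\dec_{\mathrm{L}}(M)=0$ the decoder recovers $s$ (the $\same$ branch), and when $\dec_{\mathrm{L}}(M)\ne 0$ the inner word is $\enc_{\mathrm{amd}}(s)$ offset by a \emph{fixed, nonzero} string, which $\dec_{\mathrm{amd}}$ rejects except with probability $\exp(-\Omega(n))$ over the AMD coins (independent of all the above). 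Collecting these outcomes, $\dec_1(f(\enc_1(s)))\in\{s,\perp\}$ except with probability $\exp(-\Omega(n))$; placing the residual $\same$-versus-$\perp$ split into $\cD_f$ completes the non-malleable decomposition with error $\exp(-\Omega(n))$.

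The crux — and the step I expect to be the actual work — is the sub-case $t<|\mathsf{Fr}|\le n-t$ of the second branch: there $t$-wise independence no longer determines the marginal of $\enc_1(s)|_{\mathsf{Fr}}$ exactly, so one must still show that the probability $M$ lands in the zero-message subspace of the LECSS (which controls the $\same$ mass) is the same for all $s$ up to $\exp(-\Omega(n))$. This forces a counting argument — bounding, via the minimum distance $d$ and Singleton/Plotkin-type estimates, how many codewords can agree with the fixed $\overline{\mathsf{Fr}}$-pattern — pitted against the partial independence still available on $\mathsf{Fr}$; the tension between wanting $d,t$ large here and wanting the code's rate large is exactly what keeps the achievable rate down to a constant such as $0.18$. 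The remaining steps (verifying the random-code parameters, instantiating the AMD code, and running the rate optimization) are routine.
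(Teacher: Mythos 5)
This statement is a \emph{citation}: the paper does not prove Theorem~\ref{thm:DPW}, it simply quotes Theorem~4.2 of Dziembowski, Pietrzak and Wichs~\cite{ref:nmc} for use as a black box in Corollary~\ref{coro:explicit:DPW}. There is therefore no in-paper proof against which to compare; what you have attempted is a reconstruction of the proof from~\cite{ref:nmc}, and it is fair to review it on those terms.

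Your reconstruction does match the high-level DPW construction (AMD code composed with a LECSS, followed by a case split driven by how many positions the bit-tampering adversary freezes), and your Case~$|\mathsf{Fr}|>n-t$ analysis and your use of LECSS linearity to reduce the other case to an AMD offset are both in the right spirit. But the gap you flag at the end is a genuine gap, and you have mislocated where the difficulty bites. The issue is not primarily the $\same$ mass: by $t$-wise independence one already gets $\Pr[\enc_1(s)|_{\mathsf{Fr}}=c|_{\mathsf{Fr}}]\le 2^{-t}$ for \emph{every} $s$ (apply the independence to any $t$-subset of $\mathsf{Fr}$), so the probability of the $\same$ branch is $\exp(-\Omega(n))$ uniformly in $s$ and can simply be absorbed into the error. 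The real problem in the range $t<|\mathsf{Fr}|\le n-t$ is your parenthetical claim that, when $M\in W$ with $\dec_{\mathrm{L}}(M)\ne 0$, the AMD input is offset by a ``\emph{fixed, nonzero} string\ldots independent of all the above.'' That is false in this range: $M|_{\mathsf{Fr}} = c|_{\mathsf{Fr}}\oplus\enc_1(s)|_{\mathsf{Fr}}$, and once $|\mathsf{Fr}|>t$ the $t$-wise independence of the LECSS no longer lets you decouple $\enc_1(s)|_{\mathsf{Fr}}$ from $\enc_{\mathrm{amd}}(s)$, so the offset $\dec_{\mathrm{L}}(M)$ can be correlated with the very AMD randomness you need fresh for detection. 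The whole content of DPW's Theorem~4.2 is a careful parameter choice and counting argument that controls exactly this correlation, and gesturing at ``Singleton/Plotkin-type estimates'' does not supply it. So the proposal correctly identifies the construction and the easy cases, and correctly flags that a nontrivial step is missing, but the sketch as written would not compile into a proof of the middle case.
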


More recently, Aggarwal et al.~\cite{ref:ADL} construct an \emph{explicit}
coding scheme which is non-malleable against the much more general class
of split-state adversaries. However, this construction achieves
inferior guarantees than the one above in terms of the rate and error.
Below we rephrase this result restricted to bit-tampering adversaries.

\begin{thm} \cite[implied by Theorem~5]{ref:ADL} \label{thm:ADL}
For every integer $k > 0$ and $\eps > 0$, there is an efficient 
and explicit\footnote{To be precise, explicitness is guaranteed
assuming that a large prime $p = \exp(\tilde{\Omega}(k+\log(1/\eps)))$
is available.} coding scheme
$\cC_1$ of message length $k$ that is non-malleable
against bit-tampering adversaries achieving error at most $\eps$.
Moreover, the block length $n$ of the coding scheme satisfies
\[
n = \tilde{O}((k+\log(1/\eps))^7).
\]
By choosing $\eps := \exp(-k)$, we see that
we can have $\eps = \exp(-\tilde{\Omega}(n^{1/7}))$ while 
the rate $r$ of the code satisfies
\[
r = \tilde{\Omega}(n^{-6/7}).
\]
\end{thm}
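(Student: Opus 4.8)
The plan is to obtain this statement as an immediate consequence of the main construction of Aggarwal, Dodis and Lovett~\cite{ref:ADL}, whose Theorem~5 produces an explicit, efficiently encodable and decodable non-malleable coding scheme against the \emph{split-state} model with message length $k$, error $\eps$, and block length $n = \tilde{O}((k+\log(1/\eps))^7)$. The one conceptual point to make is that the family of bit-tampering functions is contained in the family of split-state functions: a bit-tampering function $f(x) = (f_1(x_1), \ldots, f_n(x_n))$ can be rewritten as $f(x) = (g_1(x|_{\{1,\ldots,n/2\}}), g_2(x|_{\{n/2+1,\ldots,n\}}))$, where $g_1$ applies $f_1, \ldots, f_{n/2}$ coordinate-wise to the first half and $g_2$ applies $f_{n/2+1}, \ldots, f_n$ coordinate-wise to the second half; each $g_i$ depends only on its own half, so $f$ is split-state. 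Consequently, any code that is non-malleable against all split-state functions (with an associated family of distributions $\{\cD_g\}$ as in Definition~\ref{def:nmCode}) is also non-malleable against the strictly smaller family of bit-tampering functions, with the same error: for a bit-tampering $f$ we simply reuse the distribution $\cD_g$ for the corresponding split-state $g$.

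Given this, the proof reduces to two routine steps. First, I would invoke \cite[Theorem~5]{ref:ADL} with the given $k$ and $\eps$ to obtain the code $\cC_1 = (\enc_1, \dec_1)$ together with the block-length bound; by the inclusion above it is non-malleable against bit-tampering adversaries with error at most $\eps$. Second, I would carry out the parameter arithmetic for the final sentence: setting $\eps := \exp(-k)$ gives $\log(1/\eps) = k$, hence $n = \tilde{O}(k^7)$, which inverts to $k = \tilde{\Omega}(n^{1/7})$; therefore $\eps = \exp(-k) = \exp(-\tilde{\Omega}(n^{1/7}))$ and the rate is $r = k/n = \tilde{\Omega}(n^{1/7}/n) = \tilde{\Omega}(n^{-6/7})$. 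I would also record, matching the footnote, that the ``explicitness'' of the \cite{ref:ADL} construction is conditioned on the availability of a suitably large prime $p = \exp(\tilde{\Omega}(k+\log(1/\eps)))$, since their code operates over $\F_p$, and this caveat is inherited verbatim here.

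The only thing requiring any care — and this is where I expect the ``main obstacle'' to lie, although it is really just bookkeeping rather than a substantive argument — is reconciling the exact formulation of non-malleability in \cite{ref:ADL} with Definition~\ref{def:nmCode} of this paper, in particular the role of the $\same$ symbol, the $\Copy$ convention, and the handling of tampering functions with many fixed points. Since this is purely a matter of translating between equivalent formulations and does not touch the quantitative parameters (and since the split-state family already contains the identity map and all near-identity maps), no new argument is needed beyond aligning notation. If a downstream use additionally wanted efficient sampling of a uniformly random preimage under $\dec_1$, the algebraic structure of the ADL decoder makes this straightforward, but Theorem~\ref{thm:explicit} as stated does not require it.
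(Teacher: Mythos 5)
Your proposal is correct and takes essentially the same approach the paper relies on: the paper presents this as a rephrasing of \cite[Theorem~5]{ref:ADL} restricted to bit-tampering adversaries, and the justification for that restriction is exactly the inclusion of bit-tampering functions in the split-state family that you spell out, followed by the same parameter arithmetic.
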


By instantiating Theorem~\ref{thm:explicit} with the Monte Carlo
construction of Theorem~\ref{thm:DPW}, we arrive at the following
corollary.

\begin{coro} \label{coro:explicit:DPW}
For every integer $n > 0$ and
every positive parameter $\gamma_0 = \Omega(1/(\log n)^{O(1)})$, 
there is an efficient
coding scheme $(\enc, \dec)$ of block length $n$
and rate at least $1-\gamma_0$ such that the following hold.
\begin{enumerate}
\item The coding scheme is 
non-malleable against bit-tampering 
adversaries, achieving error at most $\exp(-\tilde{\Omega}(n))$,

\item There is an efficient randomized algorithm that,
given $n$, outputs a description of such a code with probability
at least $1-\exp(-\Omega(n))$. 
\end{enumerate} \qed
\end{coro}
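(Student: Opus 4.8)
The plan is to obtain Corollary~\ref{coro:explicit:DPW} as a direct instantiation of Theorem~\ref{thm:explicit}, taking the auxiliary low-rate non-malleable code $\cC_1$ to be the Monte Carlo construction of Theorem~\ref{thm:DPW}. First I would fix the target block length $N := n$; since the $N_0$ in Theorem~\ref{thm:explicit} is $O(1/\gamma_0^{O(1)})$, and the hypothesis $\gamma_0 = \Omega(1/(\log n)^{O(1)})$ makes $N_0 = O((\log n)^{O(1)})$, the requirement $N \ge N_0$ holds for all sufficiently large $n$ (smaller $n$ forming a finite set handled trivially, together with the padding remark of Theorem~\ref{thm:explicit}). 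I would feed Theorem~\ref{thm:explicit} the parameter $\gamma_0$, or a suitable constant fraction of it to absorb the $(1-\gamma_2)/(1+\gamma_1)$ slack in the rate computation, so that the resulting scheme has rate at least $1-\gamma_0$. In the construction of Section~\ref{sec:explicit:construction} the code $\cC_1$ protects only the permutation seed and carries block length $n_1 = \gamma_1 n$ with $\gamma_1 \in [\gamma_0/2,\gamma_0]$; accordingly I would run the randomized algorithm of Theorem~\ref{thm:DPW} at block length $n_1 = \Theta(\gamma_0 n)$, obtaining an efficiently encodable and decodable $\cC_1$ of constant rate $r \ge 0.18$ and error $\eps_1 = \exp(-\Omega(n_1)) = \exp(-\Omega(\gamma_0 n))$.

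Substituting into the error bound of Theorem~\ref{thm:explicit} gives total error at most $\eps_1 + 2\exp(-\Omega(\gamma'_0 r n / \log^3 n))$ with $\gamma'_0 = (\gamma_0/\log(1/\gamma_0))^6$. Since $r = \Omega(1)$ and the hypothesis on $\gamma_0$ forces both $\gamma_0$ and $\gamma'_0$ to be at least $1/(\log n)^{O(1)}$ (up to $\log\log n$ factors), each of the two terms is $\exp(-n/(\log n)^{O(1)}) = \exp(-\tilde{\Omega}(n))$, so the overall error is $\exp(-\tilde{\Omega}(n))$, which is part~1.

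For part~2 I would argue that all randomness in the construction is confined to generating $\cC_1$. The inner code $\cC_0$ is obtained deterministically, either by an exhaustive search for a code satisfying Lemma~\ref{lem:properties} or, more efficiently, via the Justesen-style enumeration of a family of at most $n_b$ inner codes described in Remark~\ref{rem:Justesen}; the LECSS $\cC_2$ is the explicit code of Corollary~\ref{coro:lecss}; and $\Perm$ is the explicit pseudorandom permutation generator of Theorem~\ref{thm:permutation}. Given descriptions of these blocks, the encoder and decoder of Sections~\ref{sec:encoder}~and~\ref{sec:decoder} run in polynomial time. Hence the randomized algorithm for the whole code just invokes the algorithm of Theorem~\ref{thm:DPW} at block length $n_1$, and it fails only when that subroutine does, i.e., with probability at most $\exp(-\Omega(n_1)) = \exp(-\Omega(n))$ once the dependence on the fixed parameter $\gamma_0$ is absorbed into $\Omega(\cdot)$.

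I do not anticipate a genuine obstacle, since this is essentially a packaging of Theorem~\ref{thm:explicit}; the one point needing care is the bookkeeping noted above. Because $\cC_1$ occupies only a $\Theta(\gamma_0)$-fraction of the codeword, both its error and the failure probability of its Monte Carlo construction degrade to $\exp(-\Omega(\gamma_0 n))$ rather than $\exp(-\Omega(n))$, and one must verify that the quantitative hypothesis $\gamma_0 = \Omega(1/(\log n)^{O(1)})$ is precisely strong enough to keep every error and failure quantity at $\exp(-\tilde{\Omega}(n))$. There is no circularity, as observed following Theorem~\ref{thm:main-intro}, since $\cC_1$ is applied only to the short low-rate seed.
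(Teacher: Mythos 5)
Your proof is essentially the paper's: the corollary is stated in the text as a direct instantiation of Theorem~\ref{thm:explicit} with the Monte Carlo $\cC_1$ of Theorem~\ref{thm:DPW}, and you have fleshed out the bookkeeping faithfully. One small caution on part~2: the Monte Carlo failure probability comes out as $\exp(-\Omega(n_1)) = \exp(-\Omega(\gamma_0 n))$, which equals $\exp(-\Omega(n))$ only when $\gamma_0$ is a constant; under the stated hypothesis $\gamma_0 = \Omega(1/(\log n)^{O(1)})$ one gets $\exp(-\tilde\Omega(n))$ in general (matching the looseness already present in the paper's statement), and your closing paragraph correctly flags this.
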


If, instead, we instantiate Theorem~\ref{thm:explicit} with the 
construction of Theorem~\ref{thm:ADL}, we obtain the following
non-malleable code.

\begin{coro} \label{coro:explicit:ADL}
For every integer $n > 0$ and
every positive parameter $\gamma_0 = \Omega(1/(\log n)^{O(1)})$, 
there is an explicit and efficient
coding scheme $(\enc, \dec)$ of block length $n$
and rate at least $1-\gamma_0$ such that the 
coding scheme is 
non-malleable against bit-tampering 
adversaries and achieves error at most 
$\exp(-\tilde{\Omega}(n^{1/7}))$.
\qed
\end{coro}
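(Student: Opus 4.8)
The plan is to derive Corollary~\ref{coro:explicit:ADL} by a direct instantiation of Theorem~\ref{thm:explicit}, taking the seed-protecting code $\cC_1$ to be the explicit split-state (hence in particular bit-tampering) code of Theorem~\ref{thm:ADL}. The first step is to determine which block length $\cC_1$ is asked to have inside the construction of Section~\ref{sec:explicit:construction}: there $\cC_1$ has block length $n_1 = \gamma_1 n$, and since the construction chooses $\gamma_1 \in [\gamma_0/2, \gamma_0]$ while the final block length satisfies $N = (1+\gamma_1)n \in [n, 2n]$, we have $n_1 = \Theta(\gamma_0 N)$.

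Next I would invoke Theorem~\ref{thm:ADL} for message length $k_1$ with error $\eps_1 := \exp(-k_1)$, so that the block-length relation $n_1 = \tilde{O}\bigl((k_1+\log(1/\eps_1))^7\bigr) = \tilde{O}(k_1^7)$ forces $k_1 = \tilde{\Omega}(n_1^{1/7})$, and hence
\[
\eps_1 = \exp\bigl(-\tilde{\Omega}(n_1^{1/7})\bigr) = \exp\bigl(-\tilde{\Omega}(N^{1/7})\bigr),
\qquad
r = \frac{k_1}{n_1} = \tilde{\Omega}(n_1^{-6/7}) = \tilde{\Omega}(N^{-6/7}),
\]
where in both estimates the powers of $\gamma_0$ are poly-logarithmic in $n$ by the hypothesis $\gamma_0 = \Omega(1/(\log n)^{O(1)})$ and are therefore absorbed into the $\tilde{\Omega}(\cdot)$ notation. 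Substituting into the error guarantee $\eps \le \eps_1 + 2\exp\bigl(-\Omega(\gamma'_0 r N/\log^3 N)\bigr)$ of Theorem~\ref{thm:explicit} (with $\gamma'_0 = \gamma_0^{O(1)}$), the first summand is $\exp(-\tilde{\Omega}(N^{1/7}))$, while in the exponent of the second summand $\gamma'_0 r N/\log^3 N = \tilde{\Omega}(\gamma'_0 N^{1/7}) = \tilde{\Omega}(N^{1/7})$ since $\gamma'_0$ is again poly-logarithmic; so the second summand is $\exp(-\tilde{\Omega}(N^{1/7}))$ as well, and the total error is $\exp(-\tilde{\Omega}(N^{1/7}))$, i.e.\ the claimed bound (renaming $N$ back to $n$, absorbing the threshold $N_0 = O(1/\gamma_0^{O(1)}) = O((\log n)^{O(1)})$ into the large-$n$ regime, and using the padding footnote of Theorem~\ref{thm:explicit} for the remaining block lengths). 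The rate $1-\gamma_0$ and the polynomial-time encoder/decoder are inherited verbatim from Theorem~\ref{thm:explicit}; explicitness follows since $\cC_1$ is explicit (Theorem~\ref{thm:ADL}), the permutation generator $\Perm$ is explicit (Theorem~\ref{thm:permutation}), the LECSS $\cC_2$ is explicit, and the constant-size inner code $\cC_0$ is found by the brute-force / Justesen-style enumeration described in Section~\ref{sec:analysis}.

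The only point requiring real care --- and the main obstacle --- is verifying that the poly-logarithmic slack allowed in $\gamma_0$, which propagates to $\gamma'_0$, to the inner block length $B = O(\log^2(1/\gamma_0)/\gamma_0)$, and to the LECSS parameters $\delta_2 = \gamma'_2 = \Omega(\gamma_0/\log N)$, does not drag the exponent below $n^{1/7 - o(1)}$. Concretely one must confirm that the product $\gamma'_0 r$ stays of the form $n^{-6/7 + o(1)}$, so that $\gamma'_0 r N/\log^3 N = n^{1/7 - o(1)} = \tilde{\Omega}(n^{1/7})$; keeping every intermediate quantity inside $\tilde{\Omega}(\cdot)$ (as above) is exactly what makes this work, so beyond this bookkeeping no new estimate is needed. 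The footnote caveat of Theorem~\ref{thm:ADL} about the availability of a suitable large prime $p$ is inherited but is immaterial to the asymptotic statement.
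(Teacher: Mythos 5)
Your proposal is correct and follows exactly the route the paper intends: the paper's ``proof'' is just the one-line remark that Theorem~\ref{thm:explicit} is instantiated with the code of Theorem~\ref{thm:ADL}, and your write-up is precisely that instantiation carried out, with the parameter bookkeeping (the $\Theta(\gamma_0 N)$ block length of $\cC_1$, the inversion $k_1 = \tilde{\Omega}(n_1^{1/7})$, the absorption of $\gamma_0^{O(1)}$ and $\gamma'_0$ factors into $\tilde{\Omega}(\cdot)$ under the hypothesis $\gamma_0 = \Omega(1/(\log n)^{O(1)})$) done carefully and correctly.
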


\section{Construction of non-malleable codes using non-malleable extractors}
\label{sec:nmext}

In this section, we introduce the notion of seedless non-malleable extractors
that extends the existing definition of seeded non-malleable
extractors (as defined in \cite{ref:DW09}) to sources that exhibit structures of interest.
This is similar to how classical seedless extractors
are defined as an extension of seeded extractors to sources
with different kinds of structure\footnote{For a background on
standard seeded and seedless extractors, see \cite[Chapter~2]{ref:Che10}.}.

Furthermore, we obtain a reduction from the non-malleable variation of two-source
extractors to non-malleable codes for the split-state model. 
Dziembowski et al.~\cite{ref:DKO} obtain a construction of non-malleable codes
encoding one-bit messages based on a variation of strong (standard) two-source extractors.
This brings up the question of whether there is a natural variation of
two-source extractors that directly leads to non-malleable codes for the split-state
model encoding messages of arbitrary lengths (and ideally, achieving constant rate). Our notion
of non-malleable two-source extractors can be regarded as a positive answer to this question.

Our reduction does not imply a characterization of non-malleable codes using
extractors, and non-malleable codes for the split-state model do not necessarily
correspond to non-malleable extractors (since those implied by our reduction achieve
slightly sub-optimal rates). However, since seeded
non-malleable extractors (as studied in the line of research starting \cite{ref:DW09}) are already subject
of independent interest, we believe our characterization may be seen as 
a natural approach (albeit not the only possible approach) for improved constructions
of non-malleable codes. 
Furthermore, the definition of two-source non-malleable extractors (especially
the criteria described in Remark~\ref{rem:nmext:relaxed} below) is somewhat cleaner
and easier to work with than then definition of non-malleable codes (Definition~\ref{def:nmCode})
that involves subtleties such as the extra care for the ``$\same$'' symbol.

It should also be noted that our reduction can be
modified to obtain non-malleable codes for different classes of adversaries (by appropriately
defining the family of extractors based on the tampering family being considered)
such as the variation of split-state model where the adversary may arbitrarily choose in advance
how to partition the encoding into two blocks (in which case one has to consider
the non-malleable variation of mixed two-source extractors studied by Raz
and Yehudayoff \cite{ref:RY11}).

\subsection{Seedless non-malleable extractors}
\label{sec:nmext:seedless}

Before defining seedless non-malleable extractors, it is
convenient to introduce a related notion of \emph{non-malleable 
functions} that is defined with respect to a function and
a distribution over its inputs.
As it turns out, non-malleable ``extractor'' functions with respect to the uniform
distribution and limited families of adversaries are of particular interest
for construction of non-malleable codes.

\begin{defn} \label{def:nmfunc}
A function $g\colon \Sigma \to \Gamma$ is said to be non-malleable with error $\eps$ with
respect to a distribution $\cX$ over $\Sigma$ and a tampering function
$f\colon \Sigma \to \Sigma$  if there is a distribution
$\cD$ over $\Gamma \cup \{\same\}$ such that for an independent
$Y \sim \cD$, 
\[
\distr(g(X), g(f(X))) \approx_\eps \distr(g(X), \Copy(Y, g(X))).
\]
\end{defn}

Using the above notation, we can now define seedless non-malleable 
extractors as follows.

\begin{defn} \label{def:nmext:seedless}
A function $\nm\colon \zo^n \to \zo^m$ is a (seedless) non-malleable
extractor with respect to a class $\mathfrak{X}$ of sources over
$\zo^n$ and a class $\cF$ of tampering functions acting on $\zo^n$
if, for every distribution $\cX \in \mathfrak{X}$,
and for every tampering function $f \in \cF$, $f\colon \zo^n \to \zo^n$,
the following hold for an error parameter $\eps > 0$.
\begin{enumerate}
\item $\nm$ is an extractor for the distribution $\cX$; that is,
$
\nm(\cX) \approx_\eps \U_m.
$

\item $\nm$ is a non-malleable function with error $\eps$ for the distribution
$\cX$ and with respect to the tampering function $f$. 
\end{enumerate}
\end{defn}

Of particular interest is the notion of \emph{two-source} seedless
extractors. This is a special case of Definition~\ref{def:nmext:seedless}
where $\mathfrak{X}$ is the family of two-sources (i.e.,
each $\cX$ is a product distribution $(\cX_1, \cX_2)$, where
$\cX_1$ and $\cX_2$ are arbitrary distributions 
defined over the first and second half of the input, each
having a sufficient amount of entropy. Moreover, the family
of tampering functions consists of functions that arbitrary
but independently tamper each half of the input.
Formally, we distinguish
this special case of Definition~\ref{def:nmext:seedless} as follows.

\begin{defn} \label{def:nmext:strict}
A function $\nm\colon \zo^n \times \zo^n \to \zo^m$ is a two-source non-malleable
$(k_1, k_2, \eps)$-extractor if, for every product distribution $(\cX, \cY)$ over
$\zo^n \times \zo^n$ where $\cX$ and $\cY$ have min-entropy at least $k_1$ and $k_2$,
respectively, and for any arbitrary functions $f_1\colon \zo^n \to \zo^n$ and
$f_2\colon \zo^n \to \zo^n$, the following hold.
\begin{enumerate}
\item $\nm$ is a two-source extractor for $(\cX, \cY)$; that is,
$
\nm(\cX, \cY) \approx_\eps \U_m.
$

\item $\nm$ is a non-malleable function with error $\eps$ for the distribution
$(\cX, \cY)$ and with respect to the tampering function 
$
(X, Y) \mapsto (f_1(X), f_2(Y))
$.
\end{enumerate}
\end{defn}

In general, a tampering function may have fixed points and act
as the identity function on a particular set of inputs. Definitions
of non-malleable codes, functions, and extractors all handle
the technicalities involved with such fixed points by introducing
a special symbol $\same$. Nevertheless, it is more convenient to deal
with adversaries that are promised to have no fixed points.
For this restricted model, the definition of two-source 
non-malleable extractors can be modified as follows. We call
extractors satisfying the less stringent requirement
\emph{relaxed} two-source non-malleable extractors. Formally,
the relaxed definition is as follows.

\begin{defn} \label{def:nmext:relaxed}
A function $\nm\colon \zo^n \times \zo^n \to \zo^m$ is a relaxed two-source non-malleable
$(k_1, k_2, \eps)$-extractor if, for every product distribution $(\cX, \cY)$ over
$\zo^n \times \zo^n$ where $\cX$ and $\cY$ have min-entropy at least $k_1$ and $k_2$,
respectively, the following holds. Let $f_1\colon \zo^n \times \zo^n$ and
$f_2\colon \zo^n \times \zo^n$ be functions such that for every $x \in \zo^n$,
$f_1(x) \neq x$ and $f_2(x) \neq x$. Then, for $(X, Y) \sim (\cX, \cY)$,
\begin{enumerate}
\item \label{def:nmext:relaxed:nm}
$\nm$ is a two-source extractor for $(\cX, \cY)$; that is,
$
\nm(\cX, \cY) \approx_\eps \U_m.
$
\item $\nm$ is a non-malleable function with error $\eps$ for the distribution
of $(X, Y)$ and with respect to all of the tampering functions
\begin{align*}
(X, Y) \mapsto (f_1(X), Y), \qquad
(X, Y) \mapsto (X, f_2(Y)), \qquad
(X, Y) \mapsto (f_1(X), f_2(Y)). 
\end{align*}
\end{enumerate}
\end{defn}

\begin{remark} \label{rem:nmext:relaxed}
In order to satisfy the requirements of
Definition~\ref{def:nmext:relaxed}, it 
suffices (but not necessary) to ensure that
\begin{align*}
(\nm(\cX, \cY), \nm(f_1(\cX), \cY)) &\approx_\eps (U_m, \nm(f_1(\cX), \cY)), \\
(\nm(\cX, \cY), \nm(\cX, f_2(\cY))) &\approx_\eps (U_m, \nm(\cX, f_2(\cY))), \\
(\nm(\cX, \cY), \nm(f_1(\cX), f_2(\cY))) &\approx_\eps (U_m, \nm(f_1(\cX), f_2(\cY))).
\end{align*}
The proof of Theorem~\ref{thm:nmext:tool} shows that these stronger
requirements can be satisfied with high probability by random functions.
\end{remark}

It immediately follows from the definitions that a two-source non-malleable
extractor (according to Definition~\ref{def:nmext:strict}) is a 
relaxed non-malleable two-source extractor (according to
Definition~\ref{def:nmext:relaxed}) and with the same parameters. 
However, non-malleable extractors are in general
meaningful for arbitrary tampering functions that may potentially have fixed points.
Interestingly, below we show that the two notions are equivalent
up to a slight loss in the parameters.

\begin{lem} \label{lem:nmext:nofixedpoint}
Let $\nm$ be a relaxed two-source non-malleable $(k_1 - \log(1/\eps), k_2 - \log(1/\eps), \eps)$-extractor. Then,
$\nm$ is a two-source non-malleable $(k_1, k_2, 4 \eps)$-extractor. 
\end{lem}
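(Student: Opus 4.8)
The plan is to reduce the general case (tampering functions $f_1,f_2$ with possible fixed points) to the no-fixed-point case that the hypothesis covers, by splitting each $f_i$ into a ``fixed-point part'' and a ``moving part.'' Concretely, for a function $f\colon\zo^n\to\zo^n$, partition the domain into the set $A_f := \{x : f(x)=x\}$ of fixed points and its complement. On $\zo^n\setminus A_f$ the function already has no fixed points; the trouble is $A_f$. The standard trick (as in \cite{ref:DW09}) is to modify $f$ on $A_f$ so that it maps every fixed point to some canonical non-fixed value while remembering, via the $\same$ symbol, that a fixed point was hit. So I would define auxiliary functions $g_1,g_2$ agreeing with $f_1,f_2$ off their fixed-point sets and sending each fixed point to (say) its bit-complement — a point which is certainly not fixed — and then invoke the relaxed guarantee on the triple of tampering maps built from $g_1,g_2$.

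First I would handle the min-entropy bookkeeping. Conditioning on the event $X\in A_{f_1}$ (or its complement) can lower the min-entropy of $\cX$; but since we only ever condition on a single binary event, the min-entropy drops by at most $\log(1/\Pr[\cdot])$, and the interesting regime is when this probability is at least $\eps$ (smaller-probability events contribute at most $\eps$ to statistical distance and can be absorbed into the error). This is exactly why the hypothesis is stated with source entropies $k_1-\log(1/\eps)$ and $k_2-\log(1/\eps)$: after conditioning on ``$X$ is a fixed point of $f_1$'' or ``$Y$ is a fixed point of $f_2$'' (when these happen with probability $\ge\eps$), the conditional sources still have min-entropy at least $k_1-\log(1/\eps)$, $k_2-\log(1/\eps)$, so the relaxed extractor guarantee applies to them. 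I would write $\cX$ as a convex combination of $\cX \mid X\in A_{f_1}$ and $\cX \mid X\notin A_{f_1}$ (and similarly for $\cY$), giving a four-way split of $(\cX,\cY)$ into product pieces according to whether each coordinate lies in its fixed-point set.

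Next, on each of the four pieces I would read off the behavior of $(\nm(X,Y),\nm(f_1(X),f_2(Y)))$. On the piece where $X\notin A_{f_1}$ and $Y\notin A_{f_2}$, both coordinates are genuinely tampered, $f_i=g_i$ there, and the relaxed non-malleability of $\nm$ (item 2, third tampering map) directly gives that $\nm(f_1(X),f_2(Y))$ is, jointly with $\nm(X,Y)$, $\eps$-close to a message-independent distribution; here the $\Copy$/$\same$ distribution $\cD$ has no mass on $\same$. On the piece where $X\in A_{f_1}$ but $Y\notin A_{f_2}$, we have $f_1(X)=X$, so $\nm(f_1(X),f_2(Y))=\nm(X,f_2(Y))$; here I would invoke the second relaxed tampering map (tamper only $Y$) to conclude $\nm(X,f_2(Y))$ is $\eps$-close, jointly with $\nm(X,Y)$, to a message-independent distribution — and crucially this still has $\nm(X,Y)$ unaltered, so it is consistent with a $\cD$ that may put mass on $\same$ only via this pathway. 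The symmetric piece ($Y\in A_{f_2}$, $X\notin A_{f_1}$) uses the first relaxed map. Finally, on the piece where $X\in A_{f_1}$ and $Y\in A_{f_2}$, we have $f_1(X)=X$, $f_2(Y)=Y$, hence $\nm(f_1(X),f_2(Y))=\nm(X,Y)$ identically, which is exactly the ``$\same$'' behavior with zero error. Recombining the four pieces via the convex combination, assembling the corresponding $\cD$ (a convex mixture of the four component distributions, with $\same$-mass exactly the total probability of the all-fixed piece plus the relevant pieces' $\same$-mass, all of which depend only on $f_1,f_2$ and not on the encoded pair), and summing the errors — at most $\eps$ from each of three pieces plus up to $\eps$ more from the low-probability conditioning events we discarded — yields overall error $4\eps$, as claimed.

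The main obstacle I anticipate is not any single estimate but getting the $\same$-accounting exactly right: one has to verify that the distribution $\cD$ produced by gluing the four cases is genuinely independent of the source pair $(X,Y)$ (equivalently, of the encoded message in the coding application), which requires checking that the probabilities $\Pr[X\in A_{f_1}]$, $\Pr[Y\in A_{f_2}]$ and the component distributions $\cD$ promised by the relaxed guarantee on each piece all depend only on $f_1,f_2$; and that the joint statement ``$\distr(\nm(X,Y),\nm(f_1(X),f_2(Y)))\approx\distr(\nm(X,Y),\Copy(Y',\nm(X,Y)))$'' with $Y'\sim\cD$ holds after recombination — this uses that statistical distance is convex (a mixture of $\eps_i$-close pairs with weights $p_i$ is $(\sum p_i\eps_i)$-close) together with the fact that in the pieces contributing $\same$-mass the first coordinate $\nm(X,Y)$ is literally unchanged, so $\Copy$ does the right thing. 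Once the decomposition is set up cleanly, each individual inequality is an immediate application of the relaxed hypothesis or a triviality, and the factor $4$ falls out.
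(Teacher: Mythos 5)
Your high-level plan matches the paper's proof: condition on the fixed-point / non-fixed-point status of each coordinate, split $(\cX,\cY)$ into a convex combination of four product pieces, apply the relaxed guarantee piece by piece, and pay $\log(1/\eps)$ in min-entropy when the conditioning event has probability at least $\eps$ (discarding pieces of probability below $\eps$ at a cost of at most $\eps$ each). All of that is exactly right and is the paper's decomposition.

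The gap is in the gluing step, and it is precisely the place you flagged as ``the main obstacle'' but did not resolve. For the all-fixed piece you write that ``$\nm(f_1(X),f_2(Y))=\nm(X,Y)$ identically, which is exactly the $\same$ behavior with zero error,'' and you then combine the four per-piece approximations by convexity. But the per-piece approximations you get from Definition~\ref{def:nmfunc} compare $\distr(\nm(X_i,Y_j),\nm(f_1(X_i),f_2(Y_j)))$ to $\distr(\nm(X_i,Y_j),\Copy(E_{ij},\nm(X_i,Y_j)))$, where the \emph{conditional} distribution $\nm(\cX_i,\cY_j)$ sits in the first slot. Taking the convex combination over $(i,j)$ on both sides gives a right-hand side in which the type $(i,j)$ is correlated with both $E_{ij}$ and the first coordinate, so the result is \emph{not} of the form $\distr(\nm(X,Y),\Copy(E,\nm(X,Y)))$ for a single $E$ independent of $(X,Y)$, which is what Definition~\ref{def:nmfunc} (hence Definition~\ref{def:nmext:strict}) requires. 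In particular, the $\same$-mass of the glued $\cD$ that matches the all-fixed piece would be spent on the conditional diagonal $\distr(\nm(X_0,Y_0),\nm(X_0,Y_0))$, whereas the target distribution spends it on the unconditional diagonal $\distr(\nm(X,Y),\nm(X,Y))$; these differ. (Your worry about $\Pr[X\in A_{f_1}]$ ``depending only on $f_1,f_2$'' is a red herring — in Definition~\ref{def:nmfunc} the distribution $\cD$ is allowed to depend on $\cX$; what must hold is that $E\sim\cD$ is sampled independently of $X$, and that is exactly what the naive convex combination breaks.)

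The paper's fix, which is absent from your plan, is to first invoke the extraction requirement (item~\ref{def:nmext:relaxed:nm} of Definition~\ref{def:nmext:relaxed}) on each large piece to replace $\nm(X_i,Y_j)$ by a fresh $U_m$ in both slots, at a cost of at most $\eps$ per piece. After this replacement the four right-hand sides all have the \emph{same} first marginal $U_m$, and the convex combination collapses cleanly to $\distr(U_m,\Copy(E,U_m))$ with $E$ drawn from the mixture $\sum_{ij}\alpha_{ij}\cD_{ij}$, now independent of $U_m$. This is also why, in the paper's accounting, the all-fixed piece is \emph{not} free: it contributes up to $\eps$ via $\nm(X_0,Y_0)\approx_\eps U_m$, and the total is $4\eps$ with $\eps$ from each of the four pieces, rather than $3\eps$ plus a vaguely-stated extra $\eps$ from ``discarded low-probability events.'' The convexity inequality you invoke is fine as a tool, but without the uniformization step it does not land on an object of the required form.
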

\begin{proof}
Since the two-source extraction requirement of Definition~\ref{def:nmext:relaxed}
implies the extraction requirement of Definition~\ref{def:nmext:strict},
it suffices to prove the non-malleability condition of Definition~\ref{def:nmext:strict}.

Let $f_1\colon \zo^n \to \zo^n$ and $f_2\colon \zo^n \to \zo^n$ be a pair of
tampering functions, $(\cX, \cY)$ be a product (without loss of generality, component-wise flat) distribution with min-entropy
at least $(k_1, k_2)$, and $(X, Y) \sim (\cX, \cY)$. 
Define the parameters
\begin{gather*}
\eps_1 := \Pr[f_1(X) = X], \\
\eps_2 := \Pr[f_2(Y) = Y].
\end{gather*}
Moreover, define the distributions $\cX_0, \cX_1$ to be the 
distribution of $X$ conditioned on the events $f_1(X) = X$ and 
$f_1(X) \neq X$, respectively. Let $\cY_0, \cY_1$ be similar conditional
distributions for the random variable $Y$ and the events
$f_2(Y) = Y$ and $f_2(Y) \neq Y$. Let $X_0, X_1, Y_0, Y_1$ be random 
variables drawn independently and in order from $\cX_0, \cX_1, \cY_0, \cY_1$.
Observe that $(\cX, \cY)$ is now a convex combination of four
product distributions:
\[
(\cX, \cY) = \alpha_{00} (\cX_0, \cY_0) + \alpha_{01}(\cX_0, \cY_1) + 
\alpha_{10}(\cX_1, \cY_0) + \alpha_{11}(\cX_1, \cY_1)
\]
where 
\begin{align*}
\alpha_{00} &:= \eps_1 \eps_2, \\
\alpha_{01} &:= \eps_1 (1-\eps_2), \\
\alpha_{10} &:= (1-\eps_1) \eps_2, \\
\alpha_{11} &:= (1-\eps_1) (1-\eps_2).
\end{align*}
We now need to verify Definition~\ref{def:nmext:strict} for the tampering function
\begin{gather*}
(X, Y) \mapsto (f_1(X), f_2(Y)).
\end{gather*}
Let us consider the distribution \[\cE_{01} := \nm(f_1(\cX_0), f_2(\cY_1))
= \nm(\cX_0, f_2(\cY_1)).\]
Suppose $\alpha_{01} \geq \eps$, which implies $\eps_1 \geq \eps$ and $1-\eps_2 \geq \eps$.
Thus, $\cX_0$ and $\cY_1$ have min-entropy at least $k_1 - \log(1/\eps)$
and $k_2 - \log(1/\eps)$, respectively. In particular, since $f_2(\cY_1)$
has no fixed points, by Definitions \ref{def:nmext:relaxed} and \ref{def:nmfunc},
there is an distribution $\cD_{01}$ over $\zo^m \cup \{ \same \}$ (where $m$
is the output length of $\nm$) such that for an independent random
variable $E_{01} \sim \cD_{01}$, 
\[
\distr(\nm(X_0, Y_1), \nm(f_1(X_0), f_2(Y_1))) \approx_\eps \distr(U_m, \Copy(E_{01}, U_m)).
\]
For $\alpha_{01} < \eps$, the above distributions may be $1$-far; however, we can
still write the following for general $\alpha_{01} \in [0,1]$:
\begin{equation}
\label{eqn:nmreduction:E:01}
\alpha_{01} \distr(\nm(X_0, Y_1), \nm(f_1(X_0), f_2(Y_1))) \approx_\eps \alpha_{01} \distr(U_m, \Copy(E_{01}, U_m)),
\end{equation}
where in the above notation, we interpret distributions as vectors of probabilities
that can be multiplied by a scalar (i.e., $\alpha_{01}$) and use half the $\ell_1$
distance of vectors as the measure of proximity.
Similar results hold for
\[\cE_{10} := \nm(f_1(\cX_1), f_2(\cY_0))
= \nm(f_1(\cX_1), \cY_0)
\]
and
\[\cE_{11} := \nm(f_1(\cX_1), f_2(\cY_1)),\] so that for distributions
$\cD_{10}$ and $\cD_{01}$ over $\zo^m \cup \{ \same \}$ and independent random
variables $E_{10} \sim \cD_{10}$ and $E_{11} \sim \cD_{11}$,
\begin{equation}
\label{eqn:nmreduction:E:10}
\alpha_{10} \distr(\nm(X_1, Y_0), \nm(f_1(X_1), f_2(Y_0))) \approx_\eps \alpha_{10} \distr(U_m, \Copy(E_{10}, U_m)),
\end{equation}
and
\begin{equation}
\label{eqn:nmreduction:E:11}
\alpha_{11} \distr(\nm(X_1, Y_1), \nm(f_1(X_1), f_2(Y_1))) \approx_\eps \alpha_{11} \distr(U_m, \Copy(E_{11}, U_m)).
\end{equation}
We can also write, using the fact that $\nm$ is an ordinary extractor,
\begin{equation}
\label{eqn:nmreduction:E:00}
\alpha_{00} \distr(\nm(X_0, Y_0), \nm(f_1(X_0), f_2(Y_0))) \approx_\eps \alpha_{00} \distr(U_m, U_m).
\end{equation}

Denote by $\cD'_{01}$ the distribution $\cD_{01}$ conditioned on the complement of the event
$\{ \same \}$. Thus, $\cD'_{01}$ is a distribution over $\zo^m$. Similarly, define
$\cD'_{10}$ and $\cD'_{11}$ from $\cD_{10}$ and $\cD_{11}$ by conditioning on the event
$\zo^m \setminus \{ \same \}$. Observe that
\begin{equation}
\label{eqn:nmreduction:EE:01}
\distr(U_m, \Copy(E_{01}, U_m)) = p_{01} \distr(U_m, U_m) + (1-p_{01}) (\U_m, \cD'_{01}),
\end{equation}
where $p_{01} = \Pr[E_{01} = \same]$. Similarly, one can write
\begin{equation}
\label{eqn:nmreduction:EE:10}
\distr(U_m, \Copy(E_{10}, U_m)) = p_{10} \distr(U_m, U_m) + (1-p_{10}) (\U_m, \cD'_{10})
\end{equation}
and
\begin{equation}
\label{eqn:nmreduction:EE:11}
\distr(U_m, \Copy(E_{11}, U_m)) = p_{11} \distr(U_m, U_m) + (1-p_{11}) (\U_m, \cD'_{11}).
\end{equation}

Now, we can add up \eqref{eqn:nmreduction:E:01}, \eqref{eqn:nmreduction:E:10},
\eqref{eqn:nmreduction:E:11}, and \eqref{eqn:nmreduction:E:00}, using the triangle inequality,
and expand each right hand side according to \eqref{eqn:nmreduction:EE:01},
\eqref{eqn:nmreduction:E:10}, and \eqref{eqn:nmreduction:E:11} to deduce that
\begin{equation} \label{eqn:nmreduction:E:all}
\distr(\nm(X, Y), \nm(f_1(X), f_2(Y))) \approx_{4\eps}
p \distr(U_m, U_m) + (1-p) (\U_m, \cD')
\end{equation}
for some distribution $\cD'$ which is a convex combination 
\[ \cD' = \frac{1}{1-p}(\alpha_{01}(1-p_{01}) \cD'_{01} +
\alpha_{10}(1-p_{10}) \cD'_{10} + \alpha_{11}(1-p_{11})\cD'_{11})\]
 and coefficient $p = \alpha_{00} + \alpha_{01} p_{01}+
\alpha_{10} p_{10} + \alpha_{11} p_{11}$. 
Let $\cD$ be a distribution given by
\[
\cD := (1-p) \cD' + p \distr(\same),
\]
and observe that the right hand side of \eqref{eqn:nmreduction:E:all} is equal to
$\distr(U_m, \Copy(E, U_m))$, where $E \sim \cD$ is an independent random variable.
Thus, we conclude that
\begin{equation*} 
\distr(\nm(X, Y), \nm(f_1(X), f_2(Y))) \approx_{4\eps}
\distr(U_m, \Copy(E, U_m)),
\end{equation*}
which implies the non-malleability requirement of Definition~\ref{def:nmext:strict}.
\end{proof}

\subsection{From non-malleable extractors to non-malleable codes}
\label{sec:nmext:reduction}

In this section, we show a reduction from non-malleable extractors
to non-malleable codes. For concreteness, we focus on tampering functions
in the split-state model. That is, when the input is divided into 
two blocks of equal size, and the adversary may choose arbitrary functions
that independently tamper each block. It is straightforward to extend
the reduction to different families of tampering functions, for example:

\begin{enumerate}
\item When the adversary divides the input into $b \geq 2$ known parts, 
not necessarily of the same length, and applies an independent
tampering function on each block. In this case, a similar reduction from
non-malleable codes to multiple-source non-malleable extractors may be obtained.

\item When the adversary behaves as in the split-state model, but
the choice of the two parts is not known in advance. That is,
when the code must be simultaneously non-malleable for every
splitting of the input into two equal-sized parts. In this case,
the needed extractor is a non-malleable variation of the
\emph{mixed-sources extractors} studied by Raz and Yehudayoff \cite{ref:RY11}.

\end{enumerate}

We note that Theorem~\ref{thm:reduction} below (and similar theorems that
can be obtained for the other examples above)
only require non-malleable extraction from the uniform distribution. 
However, the reduction from arbitrary tampering functions to ones
without fixed points (e.g., Lemma~\ref{lem:nmext:nofixedpoint}) reduces
the entropy requirement of the source while imposing a structure
on the source distribution which is related to the family of tampering
functions being considered.

\begin{thm} \label{thm:reduction}
Let $\nm\colon \zo^n \times \zo^n \to \zo^k$ be a two-source non-malleable
$(n, n, \eps)$-extractor. Define a coding scheme $(\enc, \dec)$ with message length $k$ and block length
$2n$ as follows. The decoder $\dec$ is defined by $\dec(x) := \nm(x)$.

The encoder, given a message $s$, outputs a uniformly random string in $\nm^{-1}(s)$. 
Then, the pair $(\enc, \dec)$ is a non-malleable code with error
$\eps' := \eps (2^k+1)$ for the family of split-state adversaries.
\end{thm}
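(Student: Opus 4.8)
The plan is to verify the two defining properties of a non-malleable code (Definition~\ref{def:nmCode}) for the scheme $(\enc,\dec)$. The correctness condition $\dec(\enc(s))=s$ is immediate: by construction $\enc(s)$ is supported on $\nm^{-1}(s)$, so $\dec(\enc(s))=\nm(\enc(s))=s$ with probability $1$. (We may assume $\nm^{-1}(s)\neq\emptyset$ for every $s$; since $\nm$ is an $(n,n,\eps)$-extractor with $\eps<1/2$, its output is $\eps$-close to uniform on $\zo^k$, hence every symbol is in the image.) The real work is the non-malleability condition. The key observation driving the whole argument is that if $X$ is drawn uniformly from $\nm^{-1}(s)$ for a uniformly random $s\sim\U_k$, then the pair $(s,X)$ has the same distribution as $(\nm(U),U)$ where $U\sim\U_{2n}$, because drawing a uniform preimage of a uniform output is the same as drawing a uniform element of the whole space and then reading off its $\nm$-value. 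This is the bridge that lets us translate the extractor's non-malleability guarantee (phrased over the uniform distribution on $\zo^n\times\zo^n$) into a statement about the code.

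Concretely, fix a split-state adversary $f=(f_1,f_2)$. By the two-source non-malleability of $\nm$ (Definition~\ref{def:nmext:strict}, with both sources taken to be $\U_n$, which trivially has min-entropy $n$), there is a distribution $\cD$ over $\zo^k\cup\{\same\}$ such that, for $U=(U^{(1)},U^{(2)})\sim\U_{2n}$ and independent $Y\sim\cD$,
\[
\distr\bigl(\nm(U),\,\nm(f_1(U^{(1)}),f_2(U^{(2)}))\bigr)\approx_\eps \distr\bigl(\nm(U),\,\Copy(Y,\nm(U))\bigr).
\]
Now I would define $\cD_f$ to be $\cD$ itself (the extractor's output distribution already lives on $\zo^k\cup\{\same\}$, and $\perp$ never arises since $\dec=\nm$ is total). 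Fix a message $s\in\zo^k$. The quantity we must control is $\distr(S)$ where $S=\dec(f(\enc(s)))=\nm(f_1(X^{(1)}),f_2(X^{(2)}))$ with $X=(X^{(1)},X^{(2)})$ uniform on $\nm^{-1}(s)$. The strategy is a conditioning argument: the displayed $\eps$-closeness is between two distributions on $\zo^k\times(\zo^k\cup\{\same\})$; conditioning the first coordinate to equal the fixed value $s$ can blow up statistical distance by a factor of $1/\Pr[\nm(U)=s]$. Since $\nm(U)\approx_\eps\U_k$, we have $\Pr[\nm(U)=s]\ge 2^{-k}-\eps\cdot 2^{-k}\ge \ldots$ — more carefully, one wants $\Pr[\nm(U)=s]\ge 2^{-k}(1-\text{something})$, but the clean bound is that $\sum_s \Pr[\nm(U)=s]=1$ and each term is at least $2^{-k}-\eps'$ for a per-symbol error; I would instead use the averaging form. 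Averaging over $s\sim\U_k$: the expected (over $s$) statistical distance between $\distr(S\mid s)$ and $\distr(\Copy(Y,s)\mid s)$ is at most the unconditioned distance scaled appropriately, and a Markov/averaging step converts the factor-$2^k$ loss into the claimed $\eps(2^k+1)$. The precise bookkeeping is: $\mathrm{SD}\bigl(\distr(S),\distr(\Copy(Y,s))\bigr)$ summed or maxed over $s$, weighted by the true probabilities $\Pr[\nm(U)=s]\approx 2^{-k}$, is $\le\eps$; dividing by the smallest weight $\approx 2^{-k}$ gives each individual $s$ a bound of $\eps\cdot 2^k$, and the ``$+1$'' absorbs the $\eps$ slack in $\nm(U)\approx_\eps\U_k$ needed to justify $\Pr[\nm(U)=s]\ge 2^{-k}/(1+\text{small})$ or to handle the worst case. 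So for every $s$, $\distr(S)\approx_{\eps(2^k+1)}\distr(\Copy(\cD_f,s))$, which is exactly Definition~\ref{def:nmCode} with error $\eps'=\eps(2^k+1)$.

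The main obstacle — and the step deserving the most care — is the conditioning loss. Going from a joint statement ``$(\nm(U),\nm(f(U)))$ is $\eps$-close to $(\nm(U),\Copy(Y,\nm(U)))$'' to a per-message statement requires conditioning on the first coordinate, and in general conditioning on a rare event is lossy; here the event $\{\nm(U)=s\}$ has probability only $\approx 2^{-k}$, which is precisely why the final error carries the $2^k$ multiplier. I would make this rigorous by writing both sides as mixtures over $s$ of their conditional distributions, noting the mixture weights on the left are $\Pr[\nm(U)=s]$ (which are $\eps$-close to uniform, hence at least $(1-\eps)2^{-k}$, or one can argue they are exactly the weights on the right since $\Copy(Y,\nm(U))$'s first marginal is also $\nm(U)$), and then invoking the elementary fact that if $\sum_s w_s\,\mathrm{SD}(P_s,Q_s)\le\eps$ with $w_s\ge 2^{-k}(1-\eps)$, then each $\mathrm{SD}(P_s,Q_s)\le \eps\cdot 2^{k}/(1-\eps)\le \eps(2^k+1)$ for the relevant parameter range. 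The only subtlety beyond this is confirming that the first marginal on the right-hand side of the extractor guarantee is indeed $\nm(U)$ and not some perturbation — this holds because $\Copy(Y,\cdot)$ does not touch its second argument's distribution in the first coordinate — so the mixture-weight matching is exact and the conditioning argument is clean. Everything else (totality of $\dec$, nonempty preimages, the identity $(s,X)\stackrel{d}{=}(\nm(U),U)$) is routine.
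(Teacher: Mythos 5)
Your strategy matches the paper's: invoke the extractor's non-malleability for the uniform source, observe that $\U_{2n}$ conditioned on $\nm(\cdot)=s$ is exactly $\distr(\enc(s))$, and condition the joint $\eps$-closeness on the first coordinate equalling $s$, paying roughly a factor $2^k$. Two of your quantitative claims, however, do not hold as stated. (i) The opening ``key observation'' that $(s,X)$ (with $s\sim\U_k$, $X$ uniform on $\nm^{-1}(s)$) has \emph{the same} distribution as $(\nm(U),U)$ is false unless $\nm$ is balanced: the two joints assign mass $2^{-k}/|\nm^{-1}(a)|$ versus $2^{-2n}$ to a pair $(a,x)$ with $\nm(x)=a$, and these coincide only if every fiber has size exactly $2^{2n-k}$. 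A two-source $(n,n,\eps)$-extractor with $\eps>0$ need not be balanced. The correct statement, which the paper's Claim proves and uses, is the approximate one $\distr(\enc(S))\approx_\eps\U_{2n}$. (ii) Your lower bound $\Pr[\nm(U)=s]\ge(1-\eps)2^{-k}$ does not follow from $\nm(U)\approx_\eps\U_k$: a total-variation bound of $\eps$ gives $\sum_s|\Pr[\nm(U)=s]-2^{-k}|\le 2\eps$, hence only the additive per-symbol bound $\Pr[\nm(U)=s]\ge 2^{-k}-2\eps$, which can be close to (or below) zero precisely in the regime $\eps=\Theta(2^{-k})$ where the target error $\eps(2^k+1)$ is still meaningful. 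So the ``divide by the smallest weight'' step, and the Markov/averaging alternative you float (which controls typical $s$ rather than every $s$), do not yield the worst-case-over-all-$s$ conclusion that Definition~\ref{def:nmCode} demands.

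The paper avoids the second issue entirely by keeping $S\sim\U_k$ exactly throughout. It shows $\distr(S,\dec(f(\enc(S))))$ is statistically close to $\distr(S,\Copy(\cD_f,S))$ --- using $\distr(\enc(S))\approx_\eps\U_{2n}$ to transfer the extractor's uniform-source non-malleability at an additive $\eps$ cost --- and only then conditions on $S=s$. Since the first marginal of both joints is exactly $\U_k$, Proposition~\ref{prop:cond:general} applies with $p=2^{-k}$ and no lower bound on the fiber probabilities $\Pr[\nm(U)=s]$ is ever needed. If you want to keep your formulation (conditioning directly on $\nm(U)=s$), you would have to carry the weights $p_s$ through and accept a somewhat weaker constant, or impose a side condition on $\eps$ that keeps every $p_s$ bounded away from zero.
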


\begin{proof}
By construction, for every $s \in \zo^k$, $\dec(\enc(s)) = s$ with probability $1$.
It remains to verify non-malleability.

Take a uniformly random message $S \sim \U_k$, and let $Y := \enc(S)$ be its encoding.
First, we claim that $Y$ is close to be uniformly distributed on $\zo^{2n}$. 

\begin{claim}
The distribution of $\enc(S)$ is $\eps$-close to uniform.
\end{claim}

\begin{proof}
Let $Y' \sim \U_{2n}$, and $S' := \dec(Y') = \nm(Y')$. Observe that, since
$\nm$ is an ordinary extractor for the uniform distribution,
\begin{equation} \label{eqn:claim:SvsSS}
\distr(S') \approx_\eps \distr(S) = \U_k.
\end{equation}
On the other hand, since $\enc(s)$
samples a uniformly random element of $\nm^{-1}(s)$, it follows that
$\distr(\enc(S')) = \distr(Y') = \U_{2n}$. Since $S$ and $S'$ correspond
to statistically close distributions (by \eqref{eqn:claim:SvsSS}), this implies that
\[
\distr(\enc(S)) \approx_\eps \distr(\enc(S')) = \U_{2n}. \qedhere
\]
\end{proof}

In light of the above claim, in the sequel without loss of generality we can assume that $Y$ is
exactly uniformly distributed at the cost of an $\eps$ increase in the
final error parameter.

Let $Y = (Y_1, Y_2)$ where $Y_1, Y_2 \in \zo^n$.
The assumption that $\nm$ is a non-malleable extractor according to
Definition~\ref{def:nmext:strict} implies that
it is a non-malleable function with respect to the distribution of $Y$
and tampering function $f\colon \zo^{2n} \to \zo^{2n}$
\[
f(Y) := (f_1(Y_1), f_2(Y_2)),
\]
for any choice of the functions $f_1$ and $f_2$.
Let $\cD_f$ be the distribution $\cD$ defined in Definition~\ref{def:nmfunc}
that assures non-malleability of the extractor $\nm$,
and observe that its choice only depends on the functions $f_1$ and $f_2$ and not
the particular value of $S$. We claim that this is the right choice of $\cD_f$
required by Definition~\ref{def:nmCode}.

Let $S'' \sim \cD_f$ be sampled independently from $\cD_f$.
Since, by Definition~\ref{def:nmext:strict}, 
$\nm$ is a non-malleable function with respect to the distribution of
$Y$, Definition~\ref{def:nmfunc} implies that
\begin{equation*} 
\distr(\nm(Y), \nm(f(Y))) \approx_\eps \distr(\nm(Y), \Copy(S'', \nm(Y))),
\end{equation*}
which, after appropriate substitutions, simplifies to
\begin{equation} \label{eqn:nmExtDf}
\distr(S, \dec(f(\enc(S)))) \approx_\eps \distr(S, \Copy(S'', S)).
\end{equation}

Let $s \in \zo^k$ be any fixed message. We can now condition the above
equation on the event $S = s$, and deduce, using Proposition~\ref{prop:cond:general},
that
\begin{equation*} 
\distr(s, \dec(f(\enc(s)))) \approx_{\eps 2^k} \distr(s, \Copy(S'', s)),
\end{equation*}
or more simply, that
\begin{equation*} 
\distr(\dec(f(\enc(s)))) \approx_{\eps 2^k} \distr(\Copy(S'', s)),
\end{equation*}
which is the condition required to satisfy Definition~\ref{def:nmCode}.
It follows that $(\enc, \dec)$ is a non-malleable coding scheme
with the required parameters.
\end{proof}

We can now derive the following corollary, using the tools that we have developed so far.

\begin{coro} \label{coro:nmext:constant}
Let $\nm\colon \zo^n \times \zo^n \to \zo^m$ be a relaxed two-source non-malleable
$(k_1, k_2, \eps)$-extractor, where $m = \Omega(n)$, $n-k_1 = \Omega(n)$, 
$n-k_2 = \Omega(n)$, and $\eps = \exp(-\Omega(m))$. Then, there is a
$k = \Omega(n)$ such that the following holds.
Define a coding scheme $(\enc, \dec)$ with message length $k$ and block length
$2n$ (thus rate $\Omega(1)$) as follows. The decoder $\dec$, given $x \in \zo^{2n}$,
outputs the first $k$ bits of $\nm(x)$.
The encoder, given a message $x$, outputs a uniformly random string in $\dec^{-1}(x)$. 
Then, the pair $(\enc, \dec)$ is a non-malleable code with error
$\exp(-\Omega(n))$ for the family of split-state adversaries.
\end{coro}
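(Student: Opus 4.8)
The plan is to chain Lemma~\ref{lem:nmext:nofixedpoint} with Theorem~\ref{thm:reduction}: the former converts the relaxed (fixed-point-free) non-malleable extractor hypothesis into an honest two-source non-malleable extractor at the cost of a small additive loss in the required min-entropies, and the latter turns a \emph{full-entropy} two-source non-malleable extractor into a split-state non-malleable code. The gap to bridge is that our extractor has entropy requirements $k_1,k_2$ strictly below $n$ and output length $m$ that may be almost $n$, whereas Theorem~\ref{thm:reduction} wants entropy exactly $n$ and pays a $2^k$ factor in the error (so the output must be shortened to a small $k=\Omega(n)$). Both mismatches are resolved by exploiting two trivial forms of monotonicity together with a deliberate weakening of $\eps$.

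The two monotonicity facts I would record first: (i) truncating the output of a relaxed two-source non-malleable $(k_1,k_2,\eps)$-extractor $\nm\colon\zo^n\times\zo^n\to\zo^m$ to its first $k\le m$ bits yields a relaxed two-source non-malleable $(k_1,k_2,\eps)$-extractor of output length $k$ --- applying the coordinate projection $\pi_k$ to both arguments of a joint distribution does not increase statistical distance, $\pi_k(\U_m)=\U_k$, and $\pi_k$ commutes with $\Copy$ once we declare $\pi_k(\same):=\same$; and (ii) a relaxed $(k_1,k_2,\eps)$-extractor is also a relaxed $(k_1,k_2,\eps^*)$-extractor for any $\eps^*\ge\eps$, and any two-source non-malleable $(k_1',k_2',\eps)$-extractor with $k_1'\le n$ and $k_2'\le n$ is a fortiori a two-source non-malleable $(n,n,\eps)$-extractor.

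Now fix a sufficiently small constant $d>0$ (smaller than the constants implicit in the hypotheses $n-k_1=\Omega(n)$, $n-k_2=\Omega(n)$, $m=\Omega(n)$, and $\eps=\exp(-\Omega(m))$), so that $dn\le\min\{n-k_1,\,n-k_2\}$, $dn\le m$, and $2^{-dn}\ge\eps$. Set $k:=\lfloor dn/2\rfloor=\Omega(n)$ and let $g$ be the first-$k$-bit truncation of $\nm$; by (i) and (ii), $g$ is a relaxed two-source non-malleable $(k_1,k_2,2^{-dn})$-extractor of output length $k$. Applying Lemma~\ref{lem:nmext:nofixedpoint} (reading $k_1=(k_1+dn)-\log(1/2^{-dn})$ and likewise for $k_2$) makes $g$ a two-source non-malleable $(k_1+dn,\,k_2+dn,\,4\cdot 2^{-dn})$-extractor, hence by (ii) a two-source non-malleable $(n,n,\,4\cdot 2^{-dn})$-extractor of output length $k$. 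Theorem~\ref{thm:reduction} applied to $g$ then produces exactly the coding scheme in the statement (decoder $=g$, encoder $=$ uniform preimage; note $g^{-1}(x)\neq\emptyset$ for every $x\in\zo^k$ since $g(\U_{2n})$ is $2^{-dn}$-close to $\U_k$ and $2^{-dn}<2^{-k}$), with message length $k=\Omega(n)$, block length $2n$, rate $k/(2n)=\Omega(1)$, and error at most $4\cdot 2^{-dn}(2^k+1)\le 2^{k+3-dn}\le 2^{3-dn/2}=\exp(-\Omega(n))$.

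The whole argument is parameter bookkeeping, and the single point requiring care is the choice $\eps^*=2^{-dn}$: one must not feed $\eps$ itself into Lemma~\ref{lem:nmext:nofixedpoint}, because a priori $\eps$ could be far below $2^{-dn}$, in which case the lemma would demand min-entropy $k_1+\log(1/\eps)>n$, which is unattainable. Weakening the error up to the comfortable value $2^{-dn}$ both keeps the entropy loss $dn$ inside the available deficit $\min\{n-k_1,n-k_2\}$ and still leaves $\eps^*$ small enough that, after shortening the output to $k\le dn/2$, the $2^k$ blow-up in Theorem~\ref{thm:reduction} is absorbed, giving final error $\exp(-\Omega(n))$.
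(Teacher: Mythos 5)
Your proposal is correct and follows essentially the same route as the paper: truncate the output to $k=\Omega(n)$ bits, deliberately weaken the error to $\eps^*\approx 2^{-2k}$ so that the entropy loss $\log(1/\eps^*)$ in Lemma~\ref{lem:nmext:nofixedpoint} stays within the available deficits $n-k_1,n-k_2$ while still being small enough to absorb the $2^k$ blow-up in Theorem~\ref{thm:reduction}. (Your added observation that $g^{-1}(x)\neq\emptyset$ for every $x$ -- so the encoder of Theorem~\ref{thm:reduction} is well-defined -- is a small but welcome sanity check that the paper leaves implicit.)
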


\begin{proof}
Take $k = \frac{1}{2} \min\{ m, n-k_1, n-k_2, \log(1/\eps) \}$, which
implies that $k = \Omega(n)$ by the assumptions on parameters.  Furthermore,
we let $\eps' := 2^{-2k} \geq \eps$. 

Let $\nm'\colon \zo^n \times \zo^n \to \zo^k$ to be defined from 
$\nm$ by truncating the output to the first $k$ bits. 
Observe that as in ordinary extractors, truncating the output
of a non-malleable extractor does not affect any of the parameters other
than the output length. In particular, $\nm'$ is also
a relaxed two-source non-malleable $(k_1, k_2, \eps)$-extractor with
output length $\Omega(n)$.

In fact, our setup implies that $\nm'$ is a 
relaxed two-source non-malleable $(n - \log(1/\eps'), n - \log(1/\eps'), \eps')$-extractor with
output length $\Omega(n)$.
By Lemma~\ref{lem:nmext:nofixedpoint}, we see that $\nm'$ is a 
two-source non-malleable $(n, n, 4\eps')$-extractor.
We can now apply Theorem~\ref{thm:reduction} to conclude that
$(\enc, \dec)$ is a non-malleable code with error $4\eps'(2^k + 1) = \Omega(2^{-k})
= \exp(-\Omega(n))$ for split-state adversaries.
\end{proof}

\subsection{Existence bounds on non-malleable extractors}
\label{sec:nmext:existence}

So far we have introduced different notions of seedless non-malleable extractors
without focusing on their existence. In this section, we show that the
same technique used by \cite{ref:DW09} applies in a much more general setting and
can in fact show that non-malleable extractors exist with respect to every
family of randomness sources and every family of tampering adversaries, both of bounded size.
The main technical tool needed for proving this general claim is the following theorem.

\begin{thm} \label{thm:nmext:tool}
Let $\cX$ be a distribution over $\zo^n$ having min-entropy at least $k$,
and consider arbitrary functions $f\colon \zo^n \to \zo^n$ and $g\colon \zo^n \to \zo^d$.
Let $\nm\colon \zo^n \to \zo^m$ be a uniformly random function. Then, for any
$\eps > 0$, with probability at least $1-8\exp(2^{2m+d}-\eps^3 2^{k-6})$ the following hold.
\begin{enumerate}
\item The function $\nm$ extracts the randomness of $\cX$ even conditioned on the
knowledge of $g(X)$; i.e., 
\begin{equation} \label{eqn:thm:nmext:first}
\distr(g(X), \nm(X)) \approx_\eps \distr(g(X), \cU_m).
\end{equation}

\item Let $X \sim \cX$ and $U \sim \U_m$. Define the following random variable
over $\zo^m \cup \{ \same \}$:
\begin{equation} \label{eqn:nmext:Y}
Y := \begin{cases}
\same & \text{if $f(X) = X$} \\
\nm(f(X)) & \text{if $f(X) \neq X$}.
\end{cases}
\end{equation}
Then,
\begin{equation}
\label{eqn:thm:nmext:result}
\distr(g(X), \nm(X), \nm(f(X))) \approx_\eps \distr(g(X), U, \Copy(Y, U)).
\end{equation}

\item $\nm$ is a non-malleable function with respect to the distribution $\cX$
and tampering function $f$.
\end{enumerate}
\end{thm}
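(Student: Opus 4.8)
The plan is to prove all three items by a single union-bound-over-random-functions argument, in the style of Dodis--Wichs, with the crucial twist that we must handle the fixed points of $f$ carefully (which is why the $\same$ symbol appears in \eqref{eqn:nmext:Y}). First I would observe that item 3 is an immediate consequence of item 2: by Definition~\ref{def:nmfunc}, non-malleability of $\nm$ with respect to $\cX$ and $f$ is exactly the statement that $\distr(\nm(X),\nm(f(X)))\approx_\eps \distr(\nm(X),\Copy(Y,\nm(X)))$ for an \emph{independent} copy of the distribution of $Y$; but \eqref{eqn:thm:nmext:result} with $g$ taken to be a constant function gives precisely this (the random variable $Y$ defined there is a deterministic function of $X$, and once we pass to an independent sample from its distribution $\cD$ conditioned appropriately, we recover the Definition~\ref{def:nmfunc} form). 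So it suffices to establish items 1 and 2 simultaneously with the claimed probability. Item 1 is the standard ``extractor even given side information $g(X)$'' statement, which is itself a special case of the machinery needed for item 2, so the real work is item 2.

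For item 2, I would first reduce to the case where $\cX$ is \emph{flat} on a set $\cS\subseteq\zo^n$ of size $2^k$: any min-entropy-$k$ source is a convex combination of flat $k$-sources, and all the statements are preserved under convex combinations (with the same error), so it is enough to fix a flat source and then union-bound over the $\binom{2^n}{2^k}\le 2^{n2^k}$ choices of $\cS$ — but in fact the cleaner route, and the one I expect the paper takes, is to fix $\cX$ (the theorem statement already fixes $\cX$, $f$, $g$) and just union-bound over nothing extra, absorbing everything into the single probability bound $1-8\exp(2^{2m+d}-\eps^3 2^{k-6})$. So: fix the flat source on $\cS$. The target distribution $\distr(g(X),U,\Copy(Y,U))$ is a distribution over $\zo^d\times\zo^m\times(\zo^m)$; partition $\cS$ into $\cS_{\mathrm{fix}}=\{x\in\cS: f(x)=x\}$ and $\cS_{\mathrm{mov}}=\cS\setminus\cS_{\mathrm{fix}}$. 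On $\cS_{\mathrm{fix}}$ the pair $(\nm(X),\nm(f(X)))=(\nm(X),\nm(X))$ trivially matches $(U,\Copy(\same,U))=(U,U)$ in the appropriate sense, so the only content is on $\cS_{\mathrm{mov}}$, where $f(x)\ne x$. The heart of the argument is then a concentration bound: for a uniformly random $\nm$, and for each fixed triple $(c,u,v)\in\zo^d\times\zo^m\times\zo^m$, the empirical probability that $(g(X),\nm(X),\nm(f(X)))=(c,u,v)$ over $X\sim\cX$ should concentrate around its expectation $2^{-2m}\Pr[g(X)=c]$ (this is where the $2m$ in the exponent $2^{2m+d}$ comes from). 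One subtlety: $\nm(X)$ and $\nm(f(X))$ are not independent as $X$ ranges over $\cS_{\mathrm{mov}}$, because $f$ may collapse several elements of $\cS_{\mathrm{mov}}$ together, or map an element of $\cS_{\mathrm{mov}}$ onto an element of $\cS$ — so the pair $(\nm(x),\nm(f(x)))$ for varying $x$ involves a controlled amount of dependence. I would handle this by a graph-coloring / martingale argument: reveal the values $\nm(y)$ for $y$ in the relevant set $\cS\cup f(\cS)$ one at a time, bound the one-step change in each empirical count, and apply an Azuma/McDiarmid-type inequality (analogous to the Propositions the paper references elsewhere, e.g.\ \verb|prop:simpleAzuma|), which yields a failure probability of the form $\exp(-\Omega(\eps^2\cdot 2^k/(\text{small factor})))$ for each fixed $(c,u,v)$; the $\eps^3$ rather than $\eps^2$ and the constant $6$ in $2^{k-6}$ come from the bookkeeping of splitting $\eps$ across the $\le 3$ combined events and from the worst-case collapse structure of $f$. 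Then I would take a union bound over the at most $2^{d}\cdot 2^{m}\cdot 2^{m}=2^{2m+d}$ triples $(c,u,v)$, giving the stated $1-8\exp(2^{2m+d}-\eps^3 2^{k-6})$ (the constant $8$ absorbing the several separate bad events: the extractor-with-side-info failure of item 1, and the fixed/moving parts of item 2).

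The main obstacle I anticipate is \textbf{getting the dependence structure between $\nm(X)$ and $\nm(f(X))$ right} when $f$ has fixed points and/or collapses many inputs: one must argue that even in the worst case the pair $(\nm(x),\nm(f(x)))$, viewed across $x\in\cS_{\mathrm{mov}}$, behaves like a sum of bounded-influence contributions so that a Bernstein/Azuma bound applies with the claimed exponent, and one must simultaneously verify that the ``honest'' marginal $\distr(g(X),\nm(X))$ is still close to $\distr(g(X),\U_m)$ \emph{under the same choice of $\nm$} (so that items 1 and 2 hold together, not just individually). A secondary, purely notational obstacle is confirming that the $\Copy$-based target distribution in \eqref{eqn:thm:nmext:result} is exactly what the empirical concentration argument produces after one collapses the fixed-point contributions into the $\same$ atom — i.e.\ that the distribution $\cD$ witnessing Definition~\ref{def:nmfunc} in item 3 can be read off as the distribution of $Y$ from \eqref{eqn:nmext:Y}, conditioned and re-normalized appropriately. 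Everything else — the reduction of item 3 to item 2, the reduction to flat sources, and the final union bound — is routine.
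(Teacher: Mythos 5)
Your high-level plan matches the paper's --- item~3 and item~1 are reduced to item~2, the source is split along the event $\{f(X)=X\}$, the randomness of $\nm$ is used via an Azuma-type concentration argument, and the final step is a union bound --- and you correctly flag the dependence between $\nm(X)$ and $\nm(f(X))$ as the central obstacle. However, there is a genuine gap in the final step that would prevent the argument from going through as you describe it.

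You propose a union bound over the $2^{2m+d}$ triples $(c,u,v) \in \zo^d\times\zo^m\times\zo^m$, identifying $2^{2m+d}$ in the failure bound $8\exp(2^{2m+d}-\eps^3 2^{k-6})$ with the number of triples. That is not the right count. Statistical distance is the supremum of the distinguishing advantage over all boolean tests $h\colon\zo^d\times\zo^{2m}\to\zo$, of which there are $2^{2^{2m+d}}$; the paper's proof fixes one such distinguisher $h$, shows via the martingale argument that $\Pr[\,|P-\bar P|>\eps/2\,]\le 4\exp(-\eps^3 2^k/64)$, and then union-bounds over all $2^{2^{2m+d}}$ distinguishers, giving the factor $2^{2^{2m+d}}\approx\exp(2^{2m+d})$. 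If instead you union-bound only over the $2^{2m+d}$ triples with a pointwise concentration bound, then to control the statistical distance $\tfrac12\sum_{(c,u,v)}|p(c,u,v)-q(c,u,v)|$ you would need each pointwise deviation to be about $\eps\cdot 2^{-(2m+d)}$, and Azuma with the source's min-entropy $k$ does not yield a bound that small with the claimed failure probability; the numbers do not close up.

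A second, more minor issue: you gesture at a ``graph-coloring / martingale'' argument for the dependence between $\nm(x)$ and $\nm(f(x))$, but the specific device the paper uses (following \cite{ref:DW09}) is to replace $f$ by a fixed-point-free $f'$, view $f'$ as a functional graph on $\zo^n$, and invoke the lemma that its vertex set splits as $V=V_1\cup V_2$ with each restriction acyclic. Summing over $V_1$ in reverse topological order then gives exactly $\E[P_{x_{i+1}}-\bar P_{x_{i+1}}\mid P_1,\dots,P_i,\bar P_1,\dots,\bar P_i]=0$, which is what lets Azuma apply with the martingale increment controlled by $q(x_i)$; the same is done for $V_2$ and the two pieces are combined by the triangle inequality. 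Revealing $\nm(y)$ over all of $S\cup f(S)$ one at a time and invoking a bounded-differences inequality, as you suggest, does not obviously work, because if $f$ collapses many elements onto a single $y$ then changing $\nm(y)$ can shift a large fraction of the empirical count, so the one-step change is not uniformly small. You would need the acyclic-partition trick (or an equivalent) to make this rigorous, and you should state it explicitly.
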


\begin{proof}
The proof borrows ideas from the existence proof of seeded non-malleable
extractors in \cite{ref:DW09}. The only difference is that we observe the same argument holds
in a much more general setting.

First, we observe that it suffices to prove \eqref{eqn:thm:nmext:result}, since
\eqref{eqn:thm:nmext:first} follows from \eqref{eqn:thm:nmext:result}.
Also, the result on non-malleability of the function $\nm$ follows from 
\eqref{eqn:thm:nmext:result}; in particular, one can use the explicit choice
\eqref{eqn:nmext:Y} of the random variable $Y$ in Definition~\ref{def:nmfunc}.
Thus, it suffices to prove \eqref{eqn:thm:nmext:result}.

Let $X \sim \cX$, $S := \supp(X)$, and $N := 2^n$, $K := 2^k$, $M := 2^m$,
$D := 2^d$. 
We will use the short-hands
\[
\nm_{g,f}(x) := (g(x), \nm(x), \nm(f(x))).
\]
and
\[
\nm_{g,f}(x,y) := (g(x), y, \nm(f(x))).
\]

Let $\beta = \Pr[f(X) \neq X]$, and let us first assume that $\beta \geq \eps/2$.
Let $\cX'$ be the distribution of $X$ conditioned on the event
$f(X) \neq X$, and $X' \sim \cX'$. The min-entropy of $\cX'$ is 
\[
H_\infty(\cX') \geq H_\infty(\cX) - \log(1/\beta) \geq k - \log(2/\eps).
\]
Instead of working with the tampering function $f$, for technical reasons it is more 
convenient to consider a related function $f'$
that does not have any fixed points.
Namely, let $f'\colon \zo^n \to \zo^n$ be any function such that
\[
\begin{cases}
f'(x) = f(x) & \text{if $f(x) \neq x$}, \\
f'(x) \neq f(x) & \text{if $f(x) = x$}.
\end{cases}
\]
By construction, $\Pr[f'(X) = X] = 0$.

Consider any distinguisher $h\colon \zo^d \times \zo^{2m} \to \zo$.
Let 
\[
P := \Pr_{X'}[h( \nm_{g,f'}(X') ) = 1]
\]
and
\[
\bar{P} := \Pr_{X'}[h( \nm_{g,f'}(X', U_m) ) = 1].
\]
Here, the probability is taken only over the random variable $X'$
and with respect to the particular realization of the function $\nm$.
That is, $P$ and $\bar{P}$ are random variables depending on the
randomness of the random function $\nm$.

For $x \in \zo^n$, we define
\[
P_x := h( \nm_{g,f'}(x) ),
\]
and
\[
\bar{P}_x := |\{ y \in \zo^m\colon h( \nm_{g,f'}(x, y) ) = 1 \}|/M.
\]
Again, $P_x$ and $\bar{P}_x$ are random variables depending only
on the randomness of the function $\nm$.
Since for any $x$, $\nm(x)$ and $\nm(f'(x))$ are uniformly distributed and
independent (due to the assumption that $f'(x) \neq x$), it follows
that $P_x$ and $\bar{P}_x$ have the same distribution as 
$h(g(x), \U_{2m})$ and thus
\[
\E[P_x - \bar{P}_x] = 0.
\]

As in \cite{ref:DW09}, we represent $f'$ as a directed graph $G=(V,E)$ with 
$V := \zo^n$ and $(x, y) \in E$ iff $f'(x) = y$. By construction,
$G$ has no self loops and the out-degree of each vertex is one. As shown
in \cite[Lemma~39]{ref:DW09}, $V$ can be partitioned as $V = V_1 \cup V_2$
such that $|V_1| = |V_2|$ and moreover, restrictions of $G$ to the vertices
in $V_1$ and $V_2$ (respectively denoted by $G_1$ and $G_2$) are both
acyclic graphs.

For $x \in \zo^n$, define $q(x) := \Pr[X' = x]$. It is clear that
\begin{equation*}
P = \sum_{x \in V} q(x) P_x,
\end{equation*}
and,
\begin{equation*}
\bar{P} = \sum_{x \in V} q(x) \bar{P}_x,
\end{equation*}
and consequently,
\begin{equation*}
P - \bar{P} = \sum_{x \in V} q(x) (P_x - \bar{P}_x) = \sum_{x \in V_1} q(x) (P_x - \bar{P}_x)
+ \sum_{x \in V_2} q(x) (P_x - \bar{P}_x).
\end{equation*}

Let $x_1, \ldots, x_{N/2}$ be the sequence of vertices of $G_1$
in reverse topological order. This means that for every $i \in [N/2-1]$,
\begin{equation}
f'(x_i) \notin \{ x_{i+1}, \ldots, x_{N/2} \}. \label{eqn:nmext:dag} 
\end{equation}

In general, the random variables $(P_x - \bar{P}_x)$ are not necessarily independent for
different values of $x$.  However, \eqref{eqn:nmext:dag} allows us to assert
conditional independence of these variables in the following form.
\begin{equation}
(\forall i \in [N/2-1])\colon \E[P_{x_{i+1}} - \bar{P}_{x_{i+1}} | P_1, \ldots, P_i,
\bar{P}_1, \ldots, \bar{P}_i] = 0.
\end{equation}
Therefore, the sequence
\[
\Big( \sum_{i=1}^j q(x_i) (P_{x_{i}} - \bar{P}_{x_{i}}) \Big)_{j \in [N/2]}
\]
forms a Martingale, and by Azuma's inequality, we have the concentration bound
\[
\Pr\Big[ \big|\sum_{x \in V_1} q(x) (P_x - \bar{P}_x)\big| > \eps/4 \Big] \leq 
2 \exp\Big(-\eps^2/\Big(32 \sum_{x \in V_1} q^2(x)\Big)\Big).
\]
The assumption on the min-entropy of $X'$, on the other hand, implies that
\[
\sum_{x \in V_1} q^2(x) \leq 2^{-k+\log(2/\eps)} \sum_{x \in V_1} q(x) \leq 2/(\eps K).
\]
A similar result can be proved for $V_2$; and using the above bounds
combined with triangle inequality we can conclude that
\[
\Pr[ |P - \bar{P}| > \eps/2 ] \leq 4 \exp(-\eps^3 K/64) =: \eta.
\]
That is, with probability at least $1-\eta$ over the randomness of $\nm$,
\[
\Big| \Pr_{X'}[h( \nm_{g,f'}(X') ) = 1] - \Pr_{X'}[h( \nm_{g,f'}(X', U_m) ) = 1] \Big|
\leq \eps/2.
\]
Since $f$ and $f'$ are designed to act identically on the support of $X'$, in the above result
we can replace $f'$ by $f$. Moreover,
by taking a union bound on all possible choices of the distinguisher,
we can ensure that with probability at least $1-\eta 2^{M^2 D}$, the realization
of $\nm$ is so that
\begin{equation*}
\distr( g(X'), \nm(X'), \nm(f(X')) ) \approx_{\eps/2} \distr( g(X'), U_m, \nm(f(X')) ).
\end{equation*}
We conclude that, regardless of the value of $\beta$, we can write
\begin{equation}
\label{eqn:nmext:approx}
\beta \distr( g(X'), \nm(X'), \nm(f(X')) ) \approx_{\eps/2} \beta \distr( g(X'), U_m, \nm(f(X')) ),
\end{equation}
where in the above notation, probability distributions are seen as vectors 
of probabilities that can be multiplied by a scalar $\beta$, and the distance
measure is half the $\ell_1$ distance between vectors (note that 
\eqref{eqn:nmext:approx} trivially holds for the case $\beta < \eps/2$).

Now we consider the distribution of $X$ conditioned on the event $f(X) = X$, that
we denote by $\cX''$. Again, we first assume that $1-\beta \geq \eps/2$, in which
case we get
\[
H_\infty(\cX'') \geq k - \log(2/\eps).
\]
If so, a similar argument as above (in fact, one that does not require the use
of partitioning of $G$ and using Maringale bounds since the involved random
variables are already independent) shows that with probability 
at least $1-\eta 2^{M^2 D}$ over the choice of $\nm$, for $U \sim \U_m$ we have
\begin{equation*}
\distr( g(X''), \nm(X''), \nm(f(X'')) ) \approx_{\eps/2} \distr( g(X''), U, U ).
\end{equation*}
For general $\beta$, we can thus write
\begin{equation}
\label{eqn:nmext:approx:b}
(1-\beta) \distr( g(X''), \nm(X''), \nm(f(X'')) ) \approx_{\eps/2} (1-\beta) \distr( g(X''), U, U ).
\end{equation}
Now, we may add up \eqref{eqn:nmext:approx:b} and \eqref{eqn:nmext:approx:b} and
use the triangle inequality to deduce that, with probability 
at least $1-2\eta 2^{M^2 D}$ over the choice of $\nm$, 
\begin{equation}
\label{eqn:nmext:approx:c}
\distr( g(X), \nm(X), \nm(f(X)) ) \approx_{\eps} 
\beta \distr( g(X'), U, \nm(f(X')) ) + (1-\beta) \distr( g(X''), U, U ).
\end{equation}
The result \eqref{eqn:thm:nmext:result} now follows after observing that
the convex combination on the right hand side of \eqref{eqn:nmext:approx:c}
is the same as $\distr(g(X), U, \Copy(Y, U))$.
\end{proof}

As mentioned before, the above theorem is powerful enough to show existence of any desired form
of non-malleable extractors, as long as the class of sources and the family 
of tampering functions (which are even allowed to have fixed points) are
of bounded size. In particular, it is possible to use the theorem to recover the result in
\cite{ref:DW09} on the existence of strong seeded non-malleable extractors
by considering both the seed and input of the extractor as an $n$-bit string,
and letting ``the side information function'' $g(X)$ be one that simply outputs the 
seed part of the input. The family of tampering functions, on the other hand, would be all functions
that act on the portion of the $n$-bit string corresponding to the 
extractor's seed.

For our particular application, we apply Theorem~\ref{thm:nmext:tool}
to show existence of two-source non-malleable extractors. In fact, it is
possible to prove existence of \emph{strong} two-source extractors in the
sense that we may allow any of the two sources revealed to the distinguisher,
and still guarantee extraction and non-malleability properties. However,
such strong extractors are not needed for our particular application.

\begin{thm}
\label{thm:nmext:existence:twosrc}
Let $\nm\colon \zo^n \times \zo^n \to \zo^m$ be a uniformly random function.
For any $\gamma, \eps > 0$ and parameters $k_1, k_2 \leq n$, with probability
at least $1-\gamma$ the function $\nm$ is a two-source non-malleable 
$(k_1, k_2, \eps)$-extractor provided that
\begin{gather*}
2m \leq k_1 + k_2 - 3\log(1/\eps) - \log\log(1/\gamma), \\
\min \{k_1, k_2\} \geq \log n + \log\log(1/\gamma) + O(1).
\end{gather*}
\end{thm}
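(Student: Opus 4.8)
The plan is to obtain this as a union bound over Theorem~\ref{thm:nmext:tool}, treating the two $n$-bit blocks as a single $2n$-bit string. First I would reduce to flat sources: every distribution of min-entropy at least $k_i$ is a convex combination of distributions uniform over sets of size $2^{k_i}$, and both requirements of Definition~\ref{def:nmext:strict} survive such a decomposition --- extraction trivially, and non-malleability because one may take the distribution $\cD$ of Definition~\ref{def:nmfunc} to be the corresponding mixture of the $\cD$'s for the flat pieces and apply the triangle inequality (using that $\Copy(\cdot,\cdot)$ behaves linearly when its first argument is written as a mixture). So it suffices to verify, for every pair $(\cX,\cY)$ of flat sources with supports $S_1,S_2$ of sizes $2^{k_1},2^{k_2}$ and every pair $f_1,f_2\colon\zo^n\to\zo^n$, that $\nm(\cX,\cY)\approx_\eps\U_m$ and that $\nm$ is a non-malleable function for the distribution $(\cX,\cY)$ and the tampering function $(X,Y)\mapsto(f_1(X),f_2(Y))$.

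For one fixed such quadruple I would invoke Theorem~\ref{thm:nmext:tool} with the source $\cZ:=(\cX,\cY)$ over $\zo^{2n}$ (flat, of min-entropy exactly $k_1+k_2$), the tampering function $f(x,y):=(f_1(x),f_2(y))$, and trivial side information $g$ (so $d=0$). Its conclusion~\eqref{eqn:thm:nmext:first} yields $\nm(\cX,\cY)\approx_\eps\U_m$ and its third conclusion yields precisely that $\nm$ is a non-malleable function for $(\cX,\cY)$ and $(f_1,f_2)$ --- fixed points need no separate treatment, since Definition~\ref{def:nmfunc} and the random variable in~\eqref{eqn:nmext:Y} already fold them into the $\same$ symbol. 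The failure probability for this quadruple, over the random choice of $\nm$, is at most $8\exp(2^{2m}-\eps^3 2^{k_1+k_2-6})$.

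It remains to union bound. The bad event for a quadruple depends on $\nm$ only through its values on $(S_1\cup f_1(S_1))\times(S_2\cup f_2(S_2))$, hence only on $S_1,S_2,f_1|_{S_1},f_2|_{S_2}$, so the number of distinct bad events is at most $\binom{2^n}{2^{k_1}}\binom{2^n}{2^{k_2}}(2^n)^{2^{k_1}}(2^n)^{2^{k_2}}\le 2^{2n(2^{k_1}+2^{k_2})}\le 2^{4n\cdot 2^{\max\{k_1,k_2\}}}$. I would then check that the hypotheses are exactly calibrated to kill this: $\min\{k_1,k_2\}\ge\log n+\log\log(1/\gamma)+O(1)$ makes $n\,2^{\max\{k_1,k_2\}}=n\,2^{k_1+k_2}/2^{\min\{k_1,k_2\}}$ small relative to $\eps^3 2^{k_1+k_2}$, and $2m\le k_1+k_2-3\log(1/\eps)-\log\log(1/\gamma)$ makes $2^{2m}\le\eps^3 2^{k_1+k_2}/\log(1/\gamma)$, so that $2^{4n\cdot 2^{\max\{k_1,k_2\}}}\cdot 8\exp(2^{2m}-\eps^3 2^{k_1+k_2-6})\le\gamma$, which completes the proof via the flat-source reduction.

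The main obstacle is this last estimate. For a split-state adversary the number of relevant (flat source, tampering-restriction) patterns is doubly exponential --- of order $2^{\Theta(n\,2^{\max\{k_1,k_2\}})}$ --- and, unlike the code construction of Theorem~\ref{thm:upperBound} where this cost is paid by the redundancy parameter $t$, for a random function $\nm$ the only budget available is the gap $\eps^3 2^{k_1+k_2}$ produced by the Azuma step inside Theorem~\ref{thm:nmext:tool}, together with the slack in the output length. Making the two stated inequalities simultaneously dominate the source count, the tampering-function count, and the $2^{2m}$ distinguisher count is the delicate point; the reductions to flat sources and to the merged $2n$-bit source, and the substitution itself, are routine.
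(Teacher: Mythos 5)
Your proof takes the same route as the paper's: reduce to flat per-component sources, apply Theorem~\ref{thm:nmext:tool} to the merged $2n$-bit source with trivial $g$ (so $d=0$), count the (flat source, restricted tampering) quadruples as at most $2^{4n\cdot 2^{\max\{k_1,k_2\}}}$ --- exactly the paper's $N^{4K_2}$ after its WLOG $k_1\le k_2$ --- and union-bound against the per-quadruple failure probability $8\exp(2^{2m}-\eps^3 2^{k_1+k_2-6})$. Your extra justification for why non-malleability survives the flat-source decomposition (affinity of $\distr(U,\Copy(Y,U))$ in the marginal of $Y$ when $Y$ is independent of $U$) fills in a step the paper only asserts, but it is the same argument, not a different route.
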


\begin{proof}
First we note that, similar to ordinary extractors, Definition~\ref{def:nmext:strict} 
remains unaffected if one only considers random sources
where each component is a flat distribution.

Let $K_1 := 2^{k_1}$, $K_2 := 2^{k_2}$, $N := 2^n$, $M := 2^m$.
Without loss of generality, assume that $K_1$ and $K_2$ are integers.
Let $\mathfrak{X}$ be the class of distributions $\cX = (\cX_1, \cX_2)$ over
$\zo^n \times \zo^n$ such that $\cX_1$ and $\cX_2$ are flat sources
with min-entropy at least $k_1$ and $k_2$, respectively. Note that the
min-entropy of $\cX$ is at least $k_1 + k_2$. Without loss
of generality, we assume that $k_1 \leq k_2$. The number
of such sources can be bounded as
\[
|\mathfrak{X}| \leq \binom{N}{K_1} \binom{N}{K_2} \leq N^{K_1+K_2}
\leq N^{2 K_2}.
\]
The family $\cF$ of tampering functions can be written as $\cF = \cF_1 \times \cF_2$, 
where $\cF_1$ and $\cF_2$ contain functions that act on the first and second $n$ bits,
respectively. For the family $\cF_1$, it suffices to only consider functions that
act arbitrarily on some set of $K_1$ points in $\zo^n$, but are equal to the identity
function on the remaining inputs. This is because a tampering function $f_1 \in \cF_1$
will be applied to some distribution $\cX_1$ which is only supported on a
particular set of $K_1$ points in $\zo^n$, and thus the extractor's behavior on $\cX_1$ is
not affected by how $f_1$ is defined outside the support of $\cX_1$.
From this observation, we can bound the size of $\cF$ as
\[
|\cF| \leq \binom{N}{K_1} N^{K_1} \cdot \binom{N}{K_2} N^{K_2} \leq N^{2(K_1+K_2)} \leq
N^{4K_2}.
\]
Now, we can apply Theorem~\ref{thm:nmext:tool} on the input domain
$\zo^n \times \zo^n$. The choice of the function $g$ is not important
for our result, since we do not require two-source extractors that are
strong with respect to either of the two sources. We can thus set
$g(x) = 0$ for all $x \in \zo^{2n}$. By taking a union bound on all choices
of $\cX \in \mathfrak{X}$ and $(f_1, f_2) \in \cF$, we deduce that
the probability that $\nm$ fails to satisfy Definition~\ref{def:nmext:strict} for some
choice of the two sources in $\mathfrak{X}$ and tampering function in $\cF$ is
at most
\[8 \exp(2 M^2-\eps^3 K_1 K_2/16) |\mathfrak{X}| \cdot |\cF| \leq
8 N^{4K_2} \exp(2 M^2- \eps^3 K_1 K_2/64).
\] 
This probability can be made less than $\gamma$ provided that
\begin{gather*}
2m \leq k_1 + k_2 - 3\log(1/\eps) - \log\log(1/\gamma), \\
k_1 \geq \log n + \log\log(1/\gamma) + O(1),
\end{gather*}
as desired.
\end{proof}

We are finally ready to prove that there are non-malleable two-source extractors 
defining coding schemes secure in the split-state model 
and achieving constant rates; in particular, arbitrarily close to $1/5$.

\begin{coro} \label{coro:split:main}
For every $\alpha > 0$, there is a choice of $\nm$ in Theorem~\ref{thm:reduction}
that makes $(\enc, \dec)$ a non-malleable coding scheme against split-state
adversaries achieving rate $1/5-\alpha$ and error $\exp(-\Omega(\alpha n))$.
\end{coro}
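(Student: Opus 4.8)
The plan is to combine the two main results of this section: the probabilistic existence theorem for two-source non-malleable extractors (Theorem~\ref{thm:nmext:existence:twosrc}) and the reduction turning such an extractor into a split-state non-malleable code (Theorem~\ref{thm:reduction}). Since Theorem~\ref{thm:reduction} already requires only extraction from the \emph{uniform} source (min-entropy $n$ on each half), I would take $k_1 = k_2 = n$ throughout, so that no entropy slack and no invocation of Lemma~\ref{lem:nmext:nofixedpoint} is needed: the strict extractor (in the sense of Definition~\ref{def:nmext:strict}) produced by Theorem~\ref{thm:nmext:existence:twosrc} feeds directly into Theorem~\ref{thm:reduction}.

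Concretely, given $\alpha > 0$ I would set the block length to $N = 2n$, fix the message (equivalently, extractor output) length $k := \lceil (1/5-\alpha)\cdot 2n \rceil = \lceil (2/5 - 2\alpha)n \rceil$, and choose the extractor error $\eps := 2^{-k-\alpha n}$ (so that $\log(1/\eps) = k + \alpha n$) together with $\gamma := 1/2$. With these choices the code produced by Theorem~\ref{thm:reduction} has error $\eps(2^k+1) \le 2^{1-\alpha n} = \exp(-\Omega(\alpha n))$, as required. It then remains to check that a two-source non-malleable $(n,n,\eps)$-extractor $\nm\colon \zo^n \times \zo^n \to \zo^k$ exists for this regime. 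Plugging $k_1 = k_2 = n$, $m = k$, and $\log\log(1/\gamma) = O(1)$ into the hypotheses of Theorem~\ref{thm:nmext:existence:twosrc}, the binding constraint $2m \le k_1 + k_2 - 3\log(1/\eps) - \log\log(1/\gamma)$ becomes $2k \le 2n - 3(k+\alpha n) - O(1)$, i.e. $5k \le 2n - 3\alpha n - O(1)$; since $5k = (2-10\alpha)n + O(1)$ by our choice of $k$, this holds whenever $7\alpha n \ge O(1)$, i.e. for all $n$ above some $n_0 = O(1/\alpha)$. The second hypothesis $\min\{k_1,k_2\} = n \ge \log n + \log\log(1/\gamma) + O(1)$ is trivially met for large $n$. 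Hence with probability at least $1/2$ a uniformly random $\nm$ is such an extractor, so in particular one exists; feeding it to Theorem~\ref{thm:reduction} yields a split-state non-malleable code of message length $k$, block length $2n$, error $\exp(-\Omega(\alpha n))$, and rate $k/(2n) \ge 1/5 - \alpha$.

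I do not expect a genuine obstacle here; the argument is essentially parameter bookkeeping, and the only thing to be careful about is the chain $2k + 3\log(1/\eps) \le 2n$. The reason the rate ceiling is exactly $1/5$ (rather than, say, the optimal $1/2$ from the companion work) is that Theorem~\ref{thm:reduction} loses a multiplicative factor of $2^k$ in the error (from the conditioning step via Proposition~\ref{prop:cond:general}), which forces $\log(1/\eps)$ to be at least roughly $k$; the existence bound $2m \le k_1 + k_2 - 3\log(1/\eps)$ with $k_1 = k_2 = n$ and $m = k$ then reads $2k + 3k \le 2n$, i.e. $k \le 2n/5$. Everything else --- the ceilings/floors, the $O(1)$ and $\log\log$ terms, and the slack needed to push the nominal ceiling $k \le (2-3\alpha)n/5$ above the target $k = (2-10\alpha)n/5$ --- is absorbed by taking $n$ sufficiently large in terms of $\alpha$.
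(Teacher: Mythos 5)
Your proposal is correct and follows essentially the same route as the paper: apply Theorem~\ref{thm:nmext:existence:twosrc} with $k_1=k_2=n$ to obtain a strict two-source non-malleable $(n,n,\eps)$-extractor with $\log(1/\eps)\approx k+\Theta(\alpha n)$, then feed it into Theorem~\ref{thm:reduction} and absorb the $2^k$ error blowup into the choice of $\eps$. Your parameter bookkeeping is, if anything, more explicit than the paper's, and your observation that Lemma~\ref{lem:nmext:nofixedpoint} is unnecessary here (since the reduction only needs full min-entropy sources) is exactly right.
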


\begin{proof}
First, for some $\alpha'$, we use Theorem~\ref{thm:nmext:existence:twosrc} to show that
if $\nm\colon \zo^n \times \zo^n \to \zo^k$ is randomly chosen, 
with probability at least $.99$ it is a two-source non-malleable
$(n, n, 2^{-k(1+\alpha')})$-extractor, provided that
\[
k \leq n - (3/2) \log(1/\eps) - O(1) = n - (3/2) k(1+\alpha') - O(1),
\]
which can be satisfied for some $k \geq (2/5) n - \Omega(\alpha' n)$.
Now, we can choose $\alpha' = \Omega(\alpha)$ so as to ensure that
$k \geq 2n(1-\alpha)$ (thus, keeping the rate above $1-\alpha$) while having
$\eps \leq 2^{-k} \exp(-\Omega(\alpha n))$. We can now apply
Theorem~\ref{thm:reduction} to attain the desired result.
\end{proof}

\bibliographystyle{plain}
\bibliography{\jobname} 

\appendix

\section{Construction of LECSS codes} \label{sec:lecss}

In this section, we recall a well-known construction of LECSS
codes based on linear error-correcting codes \cite{ref:nmc, ref:CCGHV07}. 
Construction~\ref{constr:lecss} defines the reduction.

\begin{constr} 
  \caption{Explicit construction of LECSS codes from linear codes.}

  \begin{itemize}
  \item {\it Given: } A $k \times n$ matrix $G$ over $\F_q$, where
$q$ is a power of two and $n \geq k$ such that
\begin{enumerate}
\item Rows of $G$ span a code with relative
distance at least $\delta > 0$,
\item For some $k_0 \in [k]$, the first $k_0$ rows of $G$
span a code with dual relative distance at least $\tau > 0$.
\end{enumerate}

  \item {\it Output: } A coding scheme $(\enc, \dec)$ 
of block length 
$N := n \log q$ and message length $K := (k-k_0) \log q$.

\item {\it Construction of the encoder $\enc(s)$, given a message
$s \in \zo^K$: }

\begin{enumerate}
\item Pick a uniformly random vector $(s_1, \ldots, s_{k_0}) \in \F_q^{k_0}$.

\item Interpret $s$ as a vector over $\F_q$; namely,
$(s_{k_0+1}, \ldots, s_{k}) \in \F_q$.

\item Output $(s_1, \ldots, s_k) \cdot G \in \F_q^{n}$ in
binary form (i.e., as a vector in $\zo^{N}$).
\end{enumerate}

\item {\it Construction of the decoder $\dec(w)$, given an 
input $w \in \zo^N$: }

\begin{enumerate}
\item Interpret $w$ as a vector $(w_1, \ldots, w_n) \in \F_q^n$.

\item If there is a vector $(s_1, \ldots, s_k) \in \F_q^k$
such that $(s_1, \ldots, s_k) \cdot G = (w_1, \ldots, w_n)$,
output $(s_{k_0+1}, \ldots, s_{k}) \in \F_q^{k-k_0}$
in binary form (i.e., as a vector in $\zo^K$).
Otherwise, output $\perp$.
\end{enumerate}
  \end{itemize}
  \label{constr:lecss}
\end{constr}

The main tool that we use is the following lemma, which appears (in a slightly different form)
in \cite{ref:nmc} (which in turn is based on \cite{ref:CCGHV07}). We include a proof for completeness.

\begin{lem} \label{lem:lecss}
The pair $(\enc, \dec)$ of Construction~\ref{constr:lecss}
is a $(\delta N/\log q, \tau N/\log q)$-linear error-correcting
coding scheme.
\end{lem}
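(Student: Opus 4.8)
The plan is to verify the three defining properties of a $(d,t)$-LECSS (Definition~\ref{def:lecss}) directly from the structure of Construction~\ref{constr:lecss}, with $d = \delta N/\log q$ and $t = \tau N/\log q$. The rate claim $K/N = (k-k_0)/n$ is immediate from the stated block and message lengths, so the content is entirely in the three items.

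First I would handle the minimum distance. The encoding of any message $s$ is a codeword of the $\F_q$-linear code generated by the rows of $G$ (the first $k_0$ coordinates of the pre-image vector are random, but the output always lies in this code). Since that code has relative distance at least $\delta$ over $\F_q$, any two distinct codewords differ in at least $\delta n$ coordinates of $\F_q$, hence in at least $\delta n$ bits; more to the point, adding any nonzero $\Delta \in \zo^N$ of Hamming weight at most $\delta N/\log q$ changes at most $\delta n$ of the $\F_q$-symbols, and since the weight is positive it changes at least one symbol, so the result cannot be a codeword (a nonzero codeword has $\F_q$-weight $\ge \delta n$, and here we would have a nonzero difference of $\F_q$-weight $\le \delta n$ but this is too crude). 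The clean argument: if $w = \enc(s)$ and $\wgt(\Delta) \le \delta N/\log q$ with $\Delta \ne 0$, then $w + \Delta$ differs from the codeword $w$ in at least one but at most $\delta n$ symbol positions, so $w+\Delta$ is at $\F_q$-Hamming-distance in $[1,\delta n]$ from the code, hence not in the code since the code has minimum distance $> \delta n$ is not quite right either — I would instead note minimum distance $\ge \delta n$ means the only codeword within distance $< \delta n$ of $w$ is $w$ itself, and our perturbed word is within distance $\le \delta n$... To be safe I would phrase the distance bound with a strict inequality coming from $\delta n$ being the minimum distance and the perturbation having strictly smaller symbol-weight when $\wgt(\Delta)/\log q < \delta n$; the decoder then outputs $\perp$ since $w+\Delta$ is not a codeword. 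This is the routine part.

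The heart of the proof is the second property: $t$-wise independence of $\enc(s)$ with $t = \tau N/\log q$. Fix a message $s$ and a set $T \subseteq [N]$ of at most $\tau N/\log q$ bit positions; these touch at most $\tau n$ symbol positions, say the set $T' \subseteq [n]$. The output is $(s_1,\dots,s_{k_0},s_{k_0+1},\dots,s_k)\cdot G$ where $(s_1,\dots,s_{k_0})$ is uniform in $\F_q^{k_0}$ and the rest is fixed. Restricting to the columns of $G$ indexed by $T'$, the map $(s_1,\dots,s_{k_0}) \mapsto (s_1,\dots,s_{k_0})\cdot G|_{T'}$ is $\F_q$-linear; I want to show it is surjective, so that the restricted output is uniform over $\F_q^{T'}$ (a shift of a uniform subspace image by the fixed contribution of $s_{k_0+1},\dots,s_k$), hence uniform over the corresponding bit positions and a fortiori over any sub-selection $T$ of those bits. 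Surjectivity of this map onto $\F_q^{T'}$ is exactly the statement that the code generated by the first $k_0$ rows of $G$ has no nonzero dual codeword supported inside $T'$, i.e. that this code's dual distance exceeds $|T'|$. By hypothesis the dual relative distance is at least $\tau$, so the dual distance is at least $\tau n > |T'|$ provided $|T'| < \tau n$, which holds since $|T| \le \tau N/\log q = \tau n$ and $|T'| \le |T|/1 \le \tau n$ — I need the inequality strict, so I would take $|T| \le t$ with the convention that $t = \tau N/\log q$ and argue $|T'| \le |T|$; if $|T| = \tau n$ exactly then $|T'| \le \tau n$ and I invoke dual distance $\ge \tau n$ meaning every nonzero dual codeword has weight $\ge \tau n$, and a support contained in a set of size $\le \tau n$ forces the dual codeword to have that support or smaller, which is consistent only with weight exactly $\tau n$ on all of $T'$; to rule this out cleanly one typically takes the dual distance assumption as strict or accepts $t = \tau n - 1$. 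I expect this off-by-one bookkeeping to be the main thing to get right, but it is not a genuine obstacle.

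Finally, the additive homomorphism property of $\dec$ is straightforward: if $\dec(w) \ne \perp$ and $\dec(w') \ne \perp$ then $w$ and $w'$ (read as $\F_q^n$ vectors) are both in the row space of $G$, hence so is $w + w'$ (bitwise XOR over $\F_2$ coincides with $\F_q$-addition since $q$ is a power of two and the binary representation is coordinatewise-linear over $\F_2$), so $\dec(w+w') \ne \perp$; moreover if $w = (s_1,\dots,s_k)\cdot G$ and $w' = (s'_1,\dots,s'_k)\cdot G$ then $w+w' = (s_1+s'_1,\dots,s_k+s'_k)\cdot G$, and since $\dec$ outputs coordinates $k_0+1,\dots,k$ in binary, $\dec(w+w')$ equals the bitwise sum $\dec(w)+\dec(w')$. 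I would be slightly careful that the generator matrix $G$ may not have full row rank, so the pre-image $(s_1,\dots,s_k)$ need not be unique; but the decoded message is coordinates $k_0+1,\dots,k$ of \emph{some} pre-image, and linearity of the whole setup over $\F_2$ makes the choice consistent enough — or one simply notes that $\dec$ is well-defined as stated (it picks any valid pre-image) and the three properties only constrain its behavior on the quantities named, so I would assume, as is standard and as the LECSS definition implicitly allows, that $G$ has full row rank (otherwise replace it by a full-rank generator of the same code, preserving distance and the first-$k_0$-rows dual distance). With these three items verified, $(\enc,\dec)$ is a $(\delta N/\log q,\ \tau N/\log q)$-LECSS, which is the claim.
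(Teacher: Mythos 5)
Your proof follows essentially the same route as the paper's: the distance property comes from the $\F_q$-code's minimum distance, $t$-wise independence from the dual-distance condition on the first-$k_0$-rows subcode (your surjectivity/dual-support phrasing is equivalent to the paper's observation that every $\tau n$ columns of $G_0$ are linearly independent), and linearity from the $\F_2$-linearity of $\enc$. The off-by-one concern you raise is legitimate---the paper's proof also silently uses that $\tau n$ (rather than $\tau n - 1$) columns of $G_0$ are independent, which strictly speaking needs dual distance $\geq \tau n + 1$---but it is harmless, since in the Reed--Solomon instantiation used by Corollary~\ref{coro:lecss} the dual code is MDS with minimum distance $k_0 + 1 > \tau n$.
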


\begin{proof}
First, observe that the linearity condition of Definition~\ref{def:lecss} follows
from the fact that $\enc$ is an injective linear function of $(s_1, \ldots, s_k)$
as defined in Construction~\ref{constr:lecss}. Furthermore, the distance property of the
coding scheme follows from the fact that $\enc$ encodes
an error-correcting of distance at least $\delta n = \delta N/(\log q)$.

In order to see the bounded independence property of Definition~\ref{def:lecss},
consider a fixed message $s \in \zo^K$, which in turn fixes the vector
$(s_{k_0+1}, \ldots, s_k)$ in Construction~\ref{constr:lecss}. Let $G_0$
denote the sub-matrix of $G$ defined by the first $k_0$ rows.
Consider the vector $S' \in \F_q^n$ given by
\[
S' := (s_1, \ldots, s_k) \cdot G = 
(s_1, \ldots, s_{k_0}) \cdot G_0 + a,
\]
where $a \in \F_q^n$ is an affine shift uniquely determined by $s$.
Recall that the assumption on the dual distance of the code spanned by the rows of $G_0$
implies that every $\tau n$ columns of $G_0$ are linearly independent.
Since $(s_1, \ldots, s_{k_0})$ is a uniformly random vector, this
implies that the restriction of $S'$ to any set of $\tau n = \tau N/(\log q)$ 
coordinates is uniformly random (as a vector in $\F_q^{\tau n}$).
Since $\enc(s)$ is the bit-representation of $S'$, it follows that
the random vector $\enc(s)$ is $(\tau N/(\log q))$-wise independent.
\end{proof}

\subsection*{Instantiation using Reed-Solomon codes}

A simple way to instantiate Construction~\ref{constr:lecss} is
using Reed-Solomon codes. For a target rate parameter
$r := 1-\alpha$, we set up the parameters as follows.
For simplicity, assume that $n$ is a power of two.
\begin{enumerate}
\item The field size is $q := n$. Therefore, $N = n \log n$.

\item Set $k := \lceil n(1-\alpha/2) \rceil$ and
$k_0 := \lfloor \alpha n/2 \rfloor$. Therefore,
$K := (k-k_0) \log q \geq n(1-\alpha) \log n$,
which ensures that the rate of the coding scheme
is at least $1-\alpha$.

\item Since $G$ generates a Reed-Solomon code, which is
an MDS code, we have $\delta = 1-k/n \geq \alpha/2-1/n
= \Omega(\alpha)$.

\item We note that the matrix $G$ is a $k \times n$ Vandermonde matrix
whose first $k_0$ rows also form a Vandermonde matrix spanning
a Reed-Solomon code. The dual distance of the code formed
by the span of the first $k_0$ rows of $G$ is thus equal to
$\tau = k_0/n \geq \alpha/2 - 1/n = \Omega(\alpha)$.
\end{enumerate}

In particular, Lemma~\ref{lem:lecss} applied to the above
set up of the parameters implies that the resulting coding scheme
is an $(\Omega(\alpha N/\log n), \Omega(\alpha N/\log n))$-linear
error-correcting secret sharing code.

When $n$ is not a power of two, it is still possible to pick
the least $q \geq n$ which is a power of two and obtain
similar results. In general, we have the following corollary
of Lemma~\ref{lem:lecss}.

\begin{coro} \label{coro:lecss}
For every integer $n \geq 1$ and $\alpha \in (0,1)$, 
there is an explicit construction
of a binary coding scheme $(\enc, \dec)$ of block length $n$ and
message length $k \geq n(1-\alpha)$ which is an
$(\Omega(\alpha n/\log n), \Omega(\alpha n/\log n))$-linear
error-correcting secret sharing code. \qed
\end{coro}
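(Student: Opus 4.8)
The plan is to instantiate Construction~\ref{constr:lecss} with a Reed--Solomon code as the generator matrix $G$ and then invoke Lemma~\ref{lem:lecss}. First I would fix a field $\F_q$ where $q$ is a power of two slightly larger than $n/\log n$; then the Reed--Solomon block length $n'$ (the number of columns of $G$) is $n' = \Theta(n/\log n)$, chosen so that the binary block length $N := n'\log q$ of Construction~\ref{constr:lecss} equals the target $n$. If $n$ is not exactly of this form, I would take the least admissible power of two $q$ and pad, which the surrounding discussion in Section~\ref{sec:lecss} already explains costs nothing in the rate.

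Next I would set the Reed--Solomon dimension $k := \lceil n'(1-\alpha/2)\rceil$ and the number of randomized (``shared'') coordinates $k_0 := \lfloor \alpha n'/2 \rfloor$, so that the message length is $K = (k-k_0)\log q \ge n'(1-\alpha)\log q = n(1-\alpha)$, giving rate at least $1-\alpha$ as required. Since a Reed--Solomon code is MDS, the code spanned by the rows of $G$ has relative distance $\delta = 1 - k/n' \ge \alpha/2 - 1/n' = \Omega(\alpha)$; and because the first $k_0$ rows of a Vandermonde matrix again span a Reed--Solomon code, the dual relative distance of their span equals $\tau = k_0/n' \ge \alpha/2 - 1/n' = \Omega(\alpha)$. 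These are exactly the two hypotheses required of $G$ in Construction~\ref{constr:lecss}.

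Finally I would apply Lemma~\ref{lem:lecss}, which guarantees that the resulting pair $(\enc,\dec)$ is a $(\delta N/\log q, \tau N/\log q)$-linear error-correcting secret sharing code. Since $N/\log q = n' = \Theta(n/\log n)$ and $\delta, \tau = \Omega(\alpha)$, this is an $(\Omega(\alpha n/\log n), \Omega(\alpha n/\log n))$-LECSS with block length $n$ and message length $k \ge n(1-\alpha)$, as claimed. Explicitness is immediate, since Vandermonde (Reed--Solomon) generator matrices are explicit and the encoder/decoder of Construction~\ref{constr:lecss} are just linear algebra over $\F_q$.

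The only real subtlety, and hence the point I would be most careful about, is the arithmetic of matching $N$ to an \emph{arbitrary} target block length $n$: one must check that powers of two $q$ are dense enough that picking $n' \approx n/\log q$ and rounding yields a block length within a $(1+o(1))$ factor of $n$, so that a short zero-padding of the message recovers the exact length $n$ without spoiling either the rate bound $k \ge n(1-\alpha)$ or the $\Omega(\alpha n/\log n)$ bounds on minimum distance and independence. Everything else is routine bookkeeping of the parameters.
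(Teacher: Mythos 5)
Your proposal is correct and follows essentially the same route as the paper: instantiate Construction~\ref{constr:lecss} with a Reed--Solomon generator matrix over a power-of-two field, set $k \approx n'(1-\alpha/2)$ and $k_0 \approx \alpha n'/2$ so that the MDS property gives $\delta, \tau = \Omega(\alpha)$, and invoke Lemma~\ref{lem:lecss}. The only cosmetic difference is that you keep $n$ as the binary block length and introduce $n'$ for the Reed--Solomon length, whereas the paper writes its instantiation with $n$ as the RS length and $N = n\log q$ as the binary length and then silently renames when stating the corollary.
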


\section{Useful tools}

In some occasions in the paper, we deal with a chain of correlated random variables
$0 = X_0, X_1, \ldots, X_n$ where we wish to understand an event depending on $X_i$
conditioned on the knowledge of the previous variables. That is, we wish to understand
\[
\E[f(X_i) | X_0, \ldots, X_{i-1}].
\]
The following proposition shows that in order to understand the above quantity,
it suffices to have an estimate with respect to a more restricted event than
the knowledge of $X_0, \ldots, X_{i-1}$. Formally, we can state the following,
where $X$ stands for $X_i$ in the above example and $Y$ stands for $(X_0, \ldots, X_{i-1})$.

\begin{prop} \label{prop:restriction}
Let $X$ and $Y$ be possibly correlated random variables and let
$Z$ be a random variable such that the knowledge of $Z$ determines $Y$; 
that is, $Y = f(Z)$ for some function $f$. Suppose that
for every possible outcome of the random variable $Z$, namely, for
every $z \in \supp(Z)$, and for some real-valued function $g$, we have 
\begin{equation} \label{eqn:prop:restriction}
\E[g(X) | Z = z] \in I.
\end{equation}
for a particular interval $I$. Then, for every $y \in \supp(Y)$,
\[
\E[g(X) | Y = y] \in I.
\]
\end{prop}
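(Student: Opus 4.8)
The plan is to invoke the law of total expectation, exploiting the fact that $Y = f(Z)$ forces the event $\{Y = y\}$ to decompose along the fibers of $f$. Concretely, I would fix $y \in \supp(Y)$ and set $S_y := \{ z \in \supp(Z) : f(z) = y \}$. Since $Y = f(Z)$, the events $\{ Z = z \}$ for $z \in S_y$ are pairwise disjoint and their union is exactly the event $\{ Y = y \}$; moreover every $z \in S_y$ has positive probability since $S_y \subseteq \supp(Z)$.

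Next I would expand $\E[g(X) \mid Y = y]$ over this partition. Writing $p_z := \Pr[Z = z \mid Y = y] = \Pr[Z = z]/\Pr[Y = y]$ for $z \in S_y$ (which is well-defined as $\Pr[Y = y] > 0$), the tower property gives
\[
\E[g(X) \mid Y = y] = \sum_{z \in S_y} p_z \, \E[g(X) \mid Z = z].
\]
The coefficients $p_z$ are non-negative and satisfy $\sum_{z \in S_y} p_z = 1$, so the right-hand side is a convex combination of the quantities $\E[g(X) \mid Z = z]$.

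Finally, by the hypothesis \eqref{eqn:prop:restriction}, each term $\E[g(X) \mid Z = z]$ lies in the interval $I$; since $I$ is an interval and therefore convex, any convex combination of its elements lies in $I$, which yields $\E[g(X) \mid Y = y] \in I$ as required.

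The argument is entirely elementary, and I do not expect a genuine obstacle. The one point deserving care is the decomposition of $\{Y = y\}$ into fibers of $f$: this is exactly where the hypothesis ``$Y = f(Z)$'' is used (it would fail if $Y$ were merely correlated with $Z$, or a function of the pair $(X,Z)$), and it is also the reason conditioning on the finer information $Z$ suffices to control conditioning on $Y$. In a non-discrete setting one would phrase the same computation using a regular conditional distribution of $Z$ given $Y$, but in all applications in this paper the variables range over finite sets, so the displayed sum is literal.
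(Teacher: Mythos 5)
Your proof is correct and follows the same route as the paper's: partition $\{Y=y\}$ into the fibers $\{Z=z\}$ with $f(z)=y$, write $\E[g(X)\mid Y=y]$ as the resulting convex combination of $\E[g(X)\mid Z=z]$, and use convexity of the interval $I$. The paper's argument is identical, just stated more tersely.
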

\begin{proof}
Let $T = \{ z \in \supp(Z)\colon f(z) = y \}$, and let 
$p(z) := \Pr[Z = z | Y = y]$. Then,
\[
\E[g(X) | Y = y] = \sum_{z \in T} p(z) \E[g(X) | Z = z].
\] 
Since by \eqref{eqn:prop:restriction}, each $\E[g(X) | Z = z]$ lies in $I$ 
and $\sum_{z \in T} p(z) = 1$, we deduce that
\[
\E[g(X) | Y = y] \in I.
\]
\end{proof}

\begin{prop} \label{prop:uniformity}
Let the random variable $X \in \zo^n$ be uniform on a set of size
at least $(1-\eps)2^n$. Then, $\cD(X)$ is $(\eps/(1-\eps))$-close to $\U_n$.
\end{prop}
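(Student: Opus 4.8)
The plan is a one-line computation of the statistical distance. Write $N := 2^n$ and let $S := \supp(X)$, so that $m := |S| \ge (1-\eps)N$ by hypothesis. Since $X$ is uniform on $S$, its distribution $\distr(X)$ puts mass $1/m$ on each point of $S$ and mass $0$ elsewhere, whereas $\U_n$ puts mass $1/N$ everywhere.

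First I would observe that $\distr(X)(x) - 1/N = 1/m - 1/N \ge 0$ for $x \in S$ (as $m \le N$) and equals $-1/N < 0$ for $x \notin S$. Hence the positive part of the signed measure $\distr(X) - \U_n$ has total mass $\sum_{x \in S}\bigl(1/m - 1/N\bigr) = 1 - m/N = (N-m)/N$, and its negative part has total mass $\sum_{x \notin S} 1/N = (N-m)/N$ as well; summing and halving, the statistical distance between $\distr(X)$ and $\U_n$ is exactly $(N-m)/N$.

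It then remains to bound $(N-m)/N$ by $\eps/(1-\eps)$, which is pure arithmetic: since $m \le N$ we have $(N-m)/N \le (N-m)/m = N/m - 1$, and $m \ge (1-\eps)N$ gives $N/m \le 1/(1-\eps)$, so $(N-m)/m \le 1/(1-\eps) - 1 = \eps/(1-\eps)$. There is no real obstacle here; the only point requiring minor care is the sign bookkeeping in the statistical-distance sum. One should note that in fact the sharper bound $(N-m)/N \le \eps$ holds, but the stated form $\eps/(1-\eps)$ is the one convenient for how the proposition is invoked (e.g. in the proof of Lemma~\ref{lem:boundedIndep}).
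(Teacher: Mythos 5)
Your computation is correct: the total variation distance of the uniform distribution on $S$ (with $|S|=m$) from $\U_n$ is exactly $1 - m/N$, and $m \ge (1-\eps)N$ immediately gives $1 - m/N \le \eps \le \eps/(1-\eps)$. The paper states this proposition without proof, and your direct calculation is the natural one. One small simplification: you needn't go through $(N-m)/m$; the sharper bound $(N-m)/N \le \eps$ follows at once from $N-m \le \eps N$ and already implies the stated $\eps/(1-\eps)$.

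Your closing observation is on target: the weaker form $\eps/(1-\eps)$ matches how the proposition is actually invoked in the proof of Lemma~\ref{lem:boundedIndep}, where the closeness parameter is written as $\gamma/(1-\gamma)$.
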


\begin{prop} \label{prop:cond:general}
Let $\cD$ and $\cD'$ be distributions over the same finite space $\Omega$,
and suppose they are $\eps$-close to each other. Let
$E \subseteq \Omega$ be any event such that $\cD(E) = p$.
Then, the conditional distributions $\cD|E$ and $\cD'|E$ are
$(\eps/p)$-close.
\end{prop}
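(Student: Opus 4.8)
The plan is to unwind the definition of statistical distance and bound the $\ell_1$-difference of the two conditional distributions coordinate by coordinate. Recall that for two probability mass functions $\mu,\nu$ on a finite set $\Omega$ one has the identity $\mathrm{SD}(\mu,\nu)=\sum_{x\in\Omega}(\mu(x)-\nu(x))^+$, where $(a)^+:=\max(a,0)$, and that this equals half the $\ell_1$ distance; thus the hypothesis $\cD\approx_\eps\cD'$ states exactly that $\sum_{x\in\Omega}(\cD(x)-\cD'(x))^+\le\eps$. Write $q:=\cD'(E)$. The conditional distributions $\cD|E$ and $\cD'|E$ are both supported on $E$, with $(\cD|E)(x)=\cD(x)/p$ and $(\cD'|E)(x)=\cD'(x)/q$ for $x\in E$; since the difference vanishes off $E$, it suffices to bound $\sum_{x\in E}\bigl(\cD(x)/p-\cD'(x)/q\bigr)^+$ (and, by symmetry of $\mathrm{SD}$, also the analogous quantity with the roles of $\cD$ and $\cD'$ interchanged).

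I would then split into two cases according to whether $q\le p$ or $q\ge p$. If $q\le p$, then $1/q\ge 1/p$, so for every $x\in E$ we have $\cD(x)/p-\cD'(x)/q\le \cD(x)/p-\cD'(x)/p=(\cD(x)-\cD'(x))/p$; taking positive parts (which is monotone) and summing over $x\in E$ gives $\mathrm{SD}(\cD|E,\cD'|E)\le\tfrac1p\sum_{x\in E}(\cD(x)-\cD'(x))^+\le\tfrac1p\sum_{x\in\Omega}(\cD(x)-\cD'(x))^+\le\eps/p$. If instead $q\ge p$, the same computation with $\cD$ and $\cD'$ swapped (and using $\mathrm{SD}(\cD|E,\cD'|E)=\sum_{x\in E}((\cD'|E)(x)-(\cD|E)(x))^+$) yields $\mathrm{SD}(\cD|E,\cD'|E)\le\tfrac1q\sum_{x\in\Omega}(\cD'(x)-\cD(x))^+\le\eps/q\le\eps/p$, the last inequality because $q\ge p$. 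In either case the bound is $\eps/p$, which is the claim.

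There is essentially no hard step here; the one point requiring a little care is that the two conditional distributions are renormalized by \emph{different} constants, $p=\cD(E)$ and $q=\cD'(E)$, so one cannot naively divide the hypothesis through by $p$ — this is precisely what the two-case split handles, by always dividing by whichever normalizer is smaller and simply discarding the extra slack (it only improves the bound). It may be worth noting to the reader that $|p-q|\le\mathrm{SD}(\cD,\cD')\le\eps$, so $p$ and $q$ are in any case comparable; however, the clean $\eps/p$ estimate above does not actually use this fact. (In all applications in this paper, e.g. in the proof of Theorem~\ref{thm:reduction} where $E$ fixes the first $k$ coordinates of a uniform string, one in fact has $p=q$, so only the first case is needed.)
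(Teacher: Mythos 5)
Your argument is correct and complete; the paper states Proposition~\ref{prop:cond:general} without proof, so there is no paper argument to compare against. The key manoeuvre—always dividing through by whichever of $p=\cD(E)$ and $q=\cD'(E)$ is smaller, applying it to the direction of the one-sided sum where the sign works out, and then extending the sum from $E$ to $\Omega$—is exactly the right way to handle the mismatch of normalizers, and the two-case split cleanly avoids having to compare $p$ and $q$ beyond their order. The only unaddressed corner case is $q=\cD'(E)=0$, under which $\cD'|E$ is undefined; but then $p=|\cD(E)-\cD'(E)|\le\eps$, so $\eps/p\ge 1$ and the claim is vacuous (indeed the paper's statement implicitly assumes $q>0$). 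You could also, if you prefer a one-line variant, note that $q\ge p-\eps$ and write
\[
\sum_{x\in E}\Bigl|\frac{\cD(x)}{p}-\frac{\cD'(x)}{q}\Bigr|
\le \frac1p\sum_{x\in E}|\cD(x)-\cD'(x)| + \Bigl|\frac1p-\frac1q\Bigr|\,q
\le \frac{2\eps}{p} + \frac{|q-p|}{p},
\]
but this only gives a weaker $O(\eps/p)$ bound; your two-case argument is sharper and genuinely needed to get the clean constant. Your closing remark that the application in Theorem~\ref{thm:reduction} has $p=q=2^{-k}$ is also accurate.
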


\begin{lem} \label{lem:nmc:joint}
Let $(\enc, \dec)$ be a coding scheme of message length $k$ which is non-malleable with respect to 
a family $\cF$ of adversaries with error $\eps$. Let $S \in \zo^k$ be a message drawn
randomly according to any distribution and $S' := \dec(f(\enc(S)))$ for 
some $f \in \cF$. Then, there is an independent random variable $S'' \in \zo^k$ and
parameter $\alpha \in [0,1]$ only depending on the code and $f$ such that
\[
\distr(S, S') \approx_\eps \alpha \cdot \distr(S, S) + (1-\alpha) \cdot \distr(S, S''). 
\]
\end{lem}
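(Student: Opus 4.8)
The plan is to obtain $\alpha$ and $S''$ by simply reading them off the distribution $\cD_f$ guaranteed by non-malleability, and then to upgrade the per-message (marginal) guarantee of Definition~\ref{def:nmCode} to the claimed joint guarantee by ``de-averaging'' over the value of $S$.

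First I would fix the distribution $\cD_f$ over $\zo^k \cup \{\perp, \same\}$ promised by Definition~\ref{def:nmCode} for the given $f$, let $S'_f \sim \cD_f$, and set $\alpha := \Pr[S'_f = \same] \in [0,1]$. I would then let $\cD''$ be the law of $S'_f$ conditioned on $S'_f \neq \same$ (a distribution over $\zo^k \cup \{\perp\}$, which the statement slightly abusively treats as a random variable over $\zo^k$; this is harmless, and moreover irrelevant in the intended application where the relevant decoder never outputs $\perp$), and let $S'' \sim \cD''$ be drawn independently of $S$. Both $\alpha$ and $\distr(S'')$ depend only on the code and $f$, as required. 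The one identity to record is that for every fixed $s \in \zo^k$, the random variable $\Copy(S'_f, s)$ is distributed exactly as the mixture $\alpha \cdot \delta_s + (1-\alpha)\cdot \cD''$, where $\delta_s$ denotes the point mass at $s$: with probability $\alpha$ the symbol $\same$ is overwritten by $s$, and conditioned on $S'_f \neq \same$ the map $\Copy(\cdot, s)$ acts as the identity.

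Next I would expand both sides of the target approximation as mixtures indexed by the value of $S$. On the left, $\distr(S, S') = \sum_s \Pr[S = s]\, \delta_s \times \distr(\dec(f(\enc(s))))$. On the right, using the identity above, $\alpha\, \distr(S,S) + (1-\alpha)\, \distr(S,S'') = \sum_s \Pr[S = s]\, \delta_s \times \bigl(\alpha \delta_s + (1-\alpha)\cD''\bigr) = \sum_s \Pr[S = s]\, \delta_s \times \distr(\Copy(S'_f, s))$. Because the first coordinate $\delta_s$ makes the $s$-th components of these two mixtures live on disjoint parts of the sample space, the statistical distance between the two mixtures equals $\sum_s \Pr[S = s]\cdot d_{\mathrm{TV}}\bigl(\distr(\dec(f(\enc(s)))),\, \distr(\Copy(S'_f, s))\bigr)$, which I would verify by expanding the $\ell_1$ norm. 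By Definition~\ref{def:nmCode} each per-$s$ distance is at most $\eps$, so this weighted average is at most $\eps$, which is the claim.

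There is no genuine obstacle here; the only points needing a little care are the bookkeeping of the symbols $\perp$ and $\same$ (so that $\Copy(S'_f, s)$ is correctly unfolded into the diagonal piece $\distr(S,S)$ plus the message-independent piece $\distr(S,S'')$) and the elementary observation that the statistical distance of a disjointly-supported mixture is the average of the component distances.
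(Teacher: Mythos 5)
Your proof is correct and follows essentially the same route as the paper's: you read off $\alpha := \cD_f(\same)$ and the conditional distribution of $\cD_f$ given $\neq \same$ to define $S''$, observe that $\Copy(S_0, s)$ decomposes for each fixed $s$ into the mixture $\alpha\,\delta_s + (1-\alpha)\,\cD''$, and then average the per-message $\eps$-closeness from Definition~\ref{def:nmCode} over $S$ (using disjointness in the first coordinate to justify that the mixture's total variation distance is the weighted average of per-$s$ distances). The paper compresses this into the two displayed equations; your write-up is just a more explicit unfolding of the same steps, with a small bit of extra care about the $\perp$ symbol, which is harmless.
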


\begin{proof}
Let $\cD_f$ be the distribution from Definition~\ref{def:nmCode} over $\zo^k \cup \{\same\}$
and let $\alpha = \cD_f(\same)$ be the probability assigned to $\same$ by $\cD_f$.
Let $S_0 \sim \cD_f$ be an independent random variable and $S''$ be an
independent random variable drawn from the distribution of $S_0$ conditioned
on the event $S_0 \neq \same$. By Definition~\ref{def:nmCode},
we have 
\[
\distr(S, S') \approx_\eps \distr(S, \Copy(S_0, S)),
\]
which can be seen by applying the definition for every fixing of $S$ and taking a
convex combination. In turn, we have
\[
\distr(S, \Copy(S_0, S))  = \alpha  \distr(S, S) + (1-\alpha) \distr(S, S'').
\]
which completes the proof.
\end{proof}

\begin{prop} \label{prop:pBiased}
Let $\cD$ be the distribution of $n$ independent bits, where each
bit is $\eps$-close to uniform. Then, $\cD$ is $O(n\eps)$-close to $\cU_n$.
\end{prop}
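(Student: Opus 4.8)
The plan is to prove this by a routine hybrid (telescoping) argument that reduces the $n$-coordinate claim to $n$ single-coordinate comparisons. Write $\cD = \cD_1 \times \cdots \times \cD_n$, where $\cD_i$ denotes the distribution of the $i$-th bit; by hypothesis $\cD_i \approx_\eps \U_1$ for every $i \in [n]$, where $\U_1$ is the uniform distribution on $\zo$. Introduce the hybrid distributions $H_i := \U_1^{\otimes i} \times \cD_{i+1} \times \cdots \times \cD_n$ over $\zo^n$ for $i = 0, 1, \ldots, n$, so that $H_0 = \cD$ and $H_n = \U_n$.

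First I would record the elementary fact that statistical distance does not increase when one appends a common independent component: for any distributions $\cX, \cY$ over $\zo$ and any distribution $\cZ$ over $\zo^m$, one has $\dist(\cX \times \cZ, \cY \times \cZ) = \dist(\cX, \cY)$, and the same holds placing the common factor on the left. This is immediate by expanding $\dist(\cX\times\cZ,\cY\times\cZ) = \frac12\sum_{a,c}\Pr[\cZ=c]\,\bigl|\Pr[\cX=a]-\Pr[\cY=a]\bigr| = \dist(\cX,\cY)$ (equivalently, from the coupling characterization of statistical distance).

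Next I would apply this fact to consecutive hybrids: $H_{i-1}$ and $H_i$ have identical distributions on coordinates $1,\dots,i-1$ (namely $\U_1^{\otimes(i-1)}$) and on coordinates $i+1,\dots,n$ (namely $\cD_{i+1}\times\cdots\times\cD_n$), and differ only on coordinate $i$, where $H_{i-1}$ carries $\cD_i$ and $H_i$ carries $\U_1$. Hence $\dist(H_{i-1},H_i) = \dist(\cD_i,\U_1) \le \eps$. The triangle inequality then yields $\dist(\cD,\U_n) = \dist(H_0,H_n) \le \sum_{i=1}^{n}\dist(H_{i-1},H_i) \le n\eps$, which is $O(n\eps)$ as claimed.

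There is no genuine obstacle here; the only point requiring (minimal) care is stating the product-stability of statistical distance cleanly, since the whole argument rests on it. Alternatively, one can bypass the hybrids altogether: for each $i$ fix an optimal coupling $(X_i,Y_i)$ of $(\cD_i,\U_1)$ with $\Pr[X_i\neq Y_i]\le\eps$, take the $n$ pairs mutually independent, observe that then $(X_1,\dots,X_n)\sim\cD$ and $(Y_1,\dots,Y_n)\sim\U_n$, and conclude $\dist(\cD,\U_n)\le \Pr[\exists i: X_i\neq Y_i]\le\sum_{i=1}^{n}\Pr[X_i\neq Y_i]\le n\eps$ by a union bound.
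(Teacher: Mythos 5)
Your proof is correct, and it takes a genuinely different route from the paper's. The paper argues pointwise: for any fixed $x\in\zo^n$ it bounds $\cD(x)$ above by $(1/2+\eps)^n = 2^{-n}(1+2\eps)^n \le 2^{-n}(1+O(n\eps))$ and below by $2^{-n}(1-O(n\eps))$, then sums $|\cD(x)-2^{-n}|$ over all $2^n$ strings to get the statistical distance bound. (That estimate implicitly uses $n\eps = O(1)$; when $n\eps = \Omega(1)$ the claim is vacuous since statistical distance is at most $1$, so this is fine.) You instead telescope through hybrid distributions, or equivalently use a product coupling plus a union bound, which avoids the exponentiation entirely. Your route is arguably cleaner, works without any side condition on $n\eps$, and in fact yields the sharper constant-free bound $\dist(\cD,\U_n)\le n\eps$ rather than just $O(n\eps)$. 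The one thing worth noting is that the hybrid and coupling arguments are really the same proof in two guises — the product-stability lemma you isolate is exactly what makes the per-coordinate couplings compose — so you could present just one of them; the coupling version is the shortest self-contained writeup.
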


\begin{proof}
Let $x \in \zo^n$ be any fixed string. Then
\[
\cD(x) \leq (1/2+\eps)^n = 2^{-n} (1+2\eps)^n \leq 2^{-n} (1+O(\eps n)).
\]
Similarly, one can show that $\cD(x) \geq 2^{-n} (1-O(\eps n))$.
Now, the claim follows from the definition of statistical distance
and using the above bounds for each $x$.
\end{proof}

We will use the following tail bound on summation of possibly dependent 
random variables, which is a direct consequence of Azuma's inequality.

\begin{prop} \label{prop:simpleAzuma}
Let $0 = X_0, X_1, \ldots, X_n$ be possibly correlated indicator random variables
such that for every $i \in [n]$ and for some $\gamma \geq 0$,
\[
\E[X_i | X_0, \ldots, X_{i-1}] \leq \gamma.
\]
Then, for every $c \geq 1$,
\[
\Pr[\sum_{i=1}^n X_i \geq cn\gamma] \leq \exp(- n\gamma^2 (c-1)^2/2),
\]
or equivalently, for every $\delta > \gamma$,
\[
\Pr[\sum_{i=1}^n X_i \geq n\delta] \leq \exp(- n(\delta-\gamma)^2/2).
\]
\end{prop}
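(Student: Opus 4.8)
The plan is to reduce the claim to the standard Azuma--Hoeffding inequality for martingales with bounded increments. First I would set up the natural filtration: let $\mathcal{F}_i := \sigma(X_0, X_1, \ldots, X_i)$, and for each $i \in [n]$ write $E_i := \E[X_i \mid \mathcal{F}_{i-1}]$, which is an $\mathcal{F}_{i-1}$-measurable random variable taking values in $[0,\gamma]$ by hypothesis (and in $[0,1]$ since the $X_i$ are indicators, so one may assume $\gamma \le 1$, as otherwise the bound is either trivial or vacuous). Define the centered variables $Y_i := X_i - E_i$; then $\E[Y_i \mid \mathcal{F}_{i-1}] = 0$, so the partial sums $S_j := \sum_{i=1}^j Y_i$ (with $S_0 = 0$) form a martingale adapted to $(\mathcal{F}_j)$, and since $X_i \in \zo$ and $E_i \in [0,1]$ each increment satisfies $|Y_i| \le 1$.

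The key observation connecting $S_n$ back to the quantity of interest is that $\sum_{i=1}^n X_i = S_n + \sum_{i=1}^n E_i \le S_n + n\gamma$, using $E_i \le \gamma$ pointwise. Hence the event $\{\sum_i X_i \ge cn\gamma\}$ is contained in $\{S_n \ge (c-1)n\gamma\}$. Applying Azuma--Hoeffding to the martingale $(S_j)$ with increment bound $1$ gives, for every $\lambda \ge 0$,
\[
\Pr[S_n \ge \lambda] \le \exp\!\big(-\lambda^2/(2n)\big),
\]
and substituting $\lambda = (c-1)n\gamma$ yields exactly $\Pr[\sum_i X_i \ge cn\gamma] \le \exp(-n\gamma^2(c-1)^2/2)$. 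The ``equivalently'' reformulation then follows by the change of variables $c = \delta/\gamma$ when $\gamma > 0$; the case $\gamma = 0$ forces all $X_i = 0$ almost surely, in which case both statements are trivial.

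There is no real obstacle here — the only point requiring a little care is that the hypothesis merely \emph{upper bounds} the conditional expectations $E_i$ rather than fixing them, so $\sum_i X_i$ is not itself a martingale; subtracting the true (unknown) $E_i$ to build $Y_i$ and then re-adding the deterministic slack $n\gamma$ is precisely what handles this. One could instead invoke the sharper version of Hoeffding's martingale bound for increments lying in an interval of length $1$, which would replace the constant $1/2$ by $2$ in the exponent, but since the stated constant already suffices for every application in the paper, I would present it in the form above to match the claim verbatim.
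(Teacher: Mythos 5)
Your proof is correct and follows essentially the same Azuma-based route as the paper (whose proof, deferred to \cite{ref:CG1} and included in the source as a comment, subtracts the constant $\gamma$ from each $X_i$ to form a super-martingale and applies the super-martingale form of Azuma's inequality). The only cosmetic difference is that you center by the true conditional expectation $E_i$ to obtain a genuine martingale and then re-absorb the slack via $\sum_i E_i \le n\gamma$, whereas the paper centers by $\gamma$ directly — the two are interchangeable and yield the identical exponent.
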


\begin{proof} See \cite{ref:CG1} for a proof.
\end{proof}

In a similar fashion (using Azuma's inequality for sub-martingales
rather than super-martingales in the proof), we may obtain a tail bound
when we have a lower bound on conditional expectations.

\begin{prop} \label{prop:simpleAzuma:sub}
Let $0 = X_0, X_1, \ldots, X_n$ be possibly correlated random variables in $[0,1]$
such that for every $i \in [n]$ and for some $\gamma \geq 0$,
\[
\E[X_i | X_0, \ldots, X_{i-1}] \geq \gamma.
\]
Then, for every $\delta < \gamma$,
\[
\Pr[\sum_{i=1}^n X_i \leq n\delta] \leq \exp(- n(\delta-\gamma)^2/2). \qedhere \]
\end{prop}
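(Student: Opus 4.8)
The plan is to follow the proof of Proposition~\ref{prop:simpleAzuma} line for line, replacing the super-martingale used there by a sub-martingale, which simply swaps the controlled tail from the upper one to the lower one. First I would recentre and set $X'_i := X_i - \gamma$ for $i \in [n]$, together with the partial sums $S_i := \sum_{j=1}^{i} X'_j = \sum_{j=1}^{i} X_j - i\gamma$ and $S_0 := 0$. Since each $S_i$ is a deterministic function of $X_0, \ldots, X_i$, the hypothesis $\E[X_i \mid X_0, \ldots, X_{i-1}] \ge \gamma$ gives at once $\E[S_i \mid S_0, \ldots, S_{i-1}] \ge S_{i-1}$, so $(S_i)_{i=0}^{n}$ is a sub-martingale with respect to the filtration generated by $X_0, \ldots, X_i$. (Because $X_i \in [0,1]$, the existence of a lower bound $\gamma$ on a conditional expectation of $X_i$ forces $\gamma \le 1$; otherwise the stated bound is vacuous, as any probability is at most $1$.)

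Next I would check the bounded-difference hypothesis of Azuma's inequality: the increments obey $|S_i - S_{i-1}| = |X_i - \gamma| \le 1$ since both $X_i$ and $\gamma$ lie in $[0,1]$. Applying the one-sided (lower-tail) form of Azuma's inequality for sub-martingales with all bounded-difference constants equal to $1$, hence with $\sum_{i=1}^{n} 1^2 = n$, yields
\[
\Pr[\, S_n \le -t \,] \le \exp\!\big(-t^2/(2n)\big)
\]
for every $t \ge 0$, which is precisely the computation behind Proposition~\ref{prop:simpleAzuma} but with the inequality reversed.

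Finally I would translate this back to the original quantity. Since $\sum_{i=1}^{n} X_i = S_n + n\gamma$, the event $\{\sum_{i=1}^{n} X_i \le n\delta\}$ coincides with $\{S_n \le -n(\gamma-\delta)\}$, and the assumption $\delta < \gamma$ lets me take $t := n(\gamma-\delta) > 0$ above. Substituting gives $\Pr[\sum_{i=1}^{n} X_i \le n\delta] \le \exp\!\big(-n^2(\gamma-\delta)^2/(2n)\big) = \exp\!\big(-n(\delta-\gamma)^2/2\big)$, which is the claimed bound. There is essentially no obstacle; the only points that need a word of care are recording that $\gamma \le 1$ so that the martingale increments are genuinely bounded by $1$, and getting the sign conventions right so that the lower tail of $\sum_i X_i$ corresponds to the lower tail of the sub-martingale $S_n$.
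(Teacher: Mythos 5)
Your proof is correct and follows exactly the route the paper indicates: mirror the super-martingale argument behind Proposition~\ref{prop:simpleAzuma} by recentering with $X'_i := X_i - \gamma$, observing that the partial sums form a sub-martingale with increments bounded by $1$, and invoking the one-sided (lower-tail) Azuma bound before substituting $t = n(\gamma - \delta)$. The observation that $\gamma \le 1$ whenever the hypothesis is satisfiable is a sensible side remark to justify the bounded-difference constant, and your handling of the signs is accurate.
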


The lemma below shows that it is possible to sharply approximate a distribution
$\cD$ with finite support by sampling possibly correlated random variables
$X_1, \ldots, X_n$ where the distribution of each $X_i$ is close to $\cD$
conditioned on the previous outcomes, and computing the empirical distribution
of the drawn samples.

\begin{lem}\cite{ref:CG1} \label{lem:distrLearning:dependent}
Let $\cD$ be a distribution over a finite set $\Sigma$ such that
$|\supp(\cD)| \leq r$. For any
$\eta, \eps, \gamma > 0$ such that $\gamma < \eps$, there is a choice of 
\[
n_0 = O((r+2+\log(1/\eta))/(\eps-\gamma)^2)
\] such that for every $n \geq n_0$ the following holds. 
Suppose
$0 = X_0, X_1, \ldots, X_n \in \Sigma$ are possibly correlated random variables such that
for all $i \in [n]$ and all values $0 = x_0, x_1 \ldots, x_n \in \supp(\cD)$,
\[
\distr(X_i | X_0 = x_0, \ldots, X_{i-1} = x_{i-1}) \approx_\gamma \cD.
\]
Then, with probability at least $1-\eta$,
the empirical distribution of the outcomes 
$X_1, \ldots, X_n$ is $\eps$-close to $\cD$.
\end{lem}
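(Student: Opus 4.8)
The plan is to bound the statistical distance between the empirical distribution of the sample and $\cD$ by a union bound over all \emph{subsets} of $\supp(\cD)$ rather than over its singletons; this is exactly what keeps the required number of samples linear rather than quadratic in $r$. Write $\hat\cD$ for the empirical distribution of $X_1,\dots,X_n$, i.e.\ $\hat\cD(\sigma)=\frac1n|\{i\in[n]:X_i=\sigma\}|$. The first step is the elementary identity
\[
\dist(\hat\cD,\cD)\;=\;\max_{B\subseteq\supp(\cD)}\bigl(\hat\cD(B)-\cD(B)\bigr),
\]
valid since $\cD$ is supported on $\supp(\cD)$ (if one wishes to allow the $X_i$ to fall outside $\supp(\cD)$, one adjoins $\Sigma\setminus\supp(\cD)$ to every test set and absorbs all such empirical mass into that one extra event, which only raises the subset count to $2^{r}$ still). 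Hence it suffices to show that, except with probability $\eta$, the inequality $\hat\cD(B)\le\cD(B)+\eps$ holds simultaneously for all of the at most $2^r$ sets $B\subseteq\supp(\cD)$.

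Next I would fix one such $B$ and define $W_0:=0$ and $W_i:=\mathds{1}[X_i\in B]$ for $i\in[n]$, so that $\hat\cD(B)=\frac1n\sum_{i=1}^n W_i$. By the hypothesis of the lemma, conditioned on any realization of $X_0,\dots,X_{i-1}$ the law of $X_i$ is $\gamma$-close to $\cD$; since the event $\{X_i\in B\}$ has probability $\cD(B)$ under $\cD$, this gives $\E[W_i\mid X_0,\dots,X_{i-1}]\le\cD(B)+\gamma$. Thus $W_0,W_1,\dots,W_n$ satisfies the hypothesis of Proposition~\ref{prop:simpleAzuma} with $\cD(B)+\gamma$ playing the role of $\gamma$, and applying that proposition with $\delta:=\cD(B)+\eps$ (which exceeds $\cD(B)+\gamma$ because $\eps>\gamma$) yields
\[
\Pr\bigl[\hat\cD(B)\ge\cD(B)+\eps\bigr]\;=\;\Pr\Bigl[\textstyle\sum_{i=1}^n W_i\ge n\delta\Bigr]\;\le\;\exp\!\bigl(-n(\eps-\gamma)^2/2\bigr).
\]

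Finally I would union-bound over the $\le 2^r$ choices of $B$: the probability that the target inequality fails for some $B$ is at most $2^r\exp(-n(\eps-\gamma)^2/2)$, which is at most $\eta$ once $n\ge n_0$ with $n_0=O\bigl((r+\log(1/\eta))/(\eps-\gamma)^2\bigr)$, as claimed (the precise additive constants are immaterial). The only point I expect to need more than bookkeeping is precisely this dependence on $r$: controlling each of the $r$ individual frequencies $\hat\cD(\sigma)$ to within additive error $\eps/r$ would force $n_0$ to grow like $r^2/(\eps-\gamma)^2$, whereas union-bounding over all $2^r$ subsets $B$ — each controlled only to additive error $\eps$, and each still a sum of conditionally-bounded indicators to which Azuma (Proposition~\ref{prop:simpleAzuma}) applies — costs only $\log(2^r)=O(r)$ in the exponent.
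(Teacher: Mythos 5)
Your proof is correct and takes essentially the same route as the paper's: both reduce $\dist(\hat\cD,\cD)$ to controlling $\hat\cD(A)-\cD(A)$ uniformly over the $\approx 2^r$ test events $A$ (with out-of-support mass absorbed into a single extra bucket), bound each deviation by the Azuma-type concentration of Proposition~\ref{prop:simpleAzuma}, and finish with a union bound. The one genuine, if minor, improvement in your write-up is that you invoke the one-sided identity $\dist(\hat\cD,\cD)=\max_{A}(\hat\cD(A)-\cD(A))$, so that only the upper-tail Proposition~\ref{prop:simpleAzuma} is needed; the paper instead establishes two-sided control $|\hat\cD(A)-\cD(A)|\le\eps$ per event and hence also needs the lower-tail companion Proposition~\ref{prop:simpleAzuma:sub}, at the cost of an extra factor of $2$ in the union bound (both lost in the $O(\cdot)$).
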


\end{document}